\documentclass[11pt,a4paper]{article}
\usepackage[margin=1in]{geometry}
\usepackage{tikz}
\usetikzlibrary{positioning,arrows.meta}
\usepackage{subcaption}

\usepackage[english]{babel}
\usepackage{algpseudocode}
\usepackage{makecell}
\usepackage[english]{babel}
\usepackage{array} 
\usepackage{booktabs} 
\usepackage{authblk} 

\usepackage[ruled,linesnumbered,vlined]{algorithm2e}
\usepackage{algpseudocode}
\usepackage{amssymb}
\usepackage{mathtools}
\DeclarePairedDelimiter\floor{\lfloor}{\rfloor}
 
\usepackage{amsmath}
\usepackage{amsthm}
\usepackage{graphicx}
\usepackage[colorlinks=true, allcolors=blue]{hyperref}
\usepackage{natbib}
\usepackage{framed}
\definecolor{shadecolor}{gray}{0.9}
\usepackage{pgfplots}
\pgfplotsset{compat=1.17}
\usepackage{tikz}
 \usetikzlibrary{patterns}
\usepackage{pgfplotstable}
    \pgfplotsset{
    name nodes near coords/.style={
        every node near coord/.append style={
            name=#1-\coordindex,
            alias=#1-last,
        },
    },
    name nodes near coords/.default=coordnode
    }

\usepackage{multirow}

\usepackage[table]{xcolor} 

\newtheorem{theorem}{Theorem}[section]
\newtheorem{definition}[theorem]{Definition}

\newtheorem{corollary}[theorem]{Corollary}
\newtheorem{lemma}[theorem]{Lemma}

\newtheorem{example}[theorem]{Example}
\newtheorem{observation}[theorem]{Observation}

\newcommand{\bX}{\mathbf{X}}

\newcommand{\MMS}{\mathsf{MMS}}

\definecolor{ruleblue}{RGB}{70,120,200}
\definecolor{shadegray}{gray}{0.9}
  
\newcommand{\bluerule}{\arrayrulecolor{ruleblue}\specialrule{0.8pt}{0pt}{0pt}\arrayrulecolor{black}}

\makeatletter
\renewcommand{\and}{\end{tabular}\hspace{3em}\begin{tabular}[t]{c}}
\makeatother

\title{When Maximum Nash Welfare Becomes Strongly Fair: \\
Bi-valued Goods\thanks{The authors are ordered alphabetically.}}

\author[1]{Zehan Lin}
\author[1]{Xiaowei Wu}
\author[2]{Shengwei Zhou}
\affil[1]{University of Macau, \texttt{\{yc47490,xiaoweiwu\}@um.edu.mo}}
\affil[2]{Nanyang Technological University, \texttt{s.arthur.zhou@gmail.com}}

\date{December 2025}

\begin{document}

\maketitle
\begin{abstract}
    For the fair allocation of indivisible goods, the milestone work of Caragiannis et al. (2019) revolutionized the understanding of Maximum Nash Welfare (MNW) allocations by revealing their ``unreasonable'' ability to balance fairness (EF1) and efficiency (PO) for general additive functions. 
    Subsequent work established that MNW allocations simultaneously satisfy EFX and GMMS (and consequently MMS and PMMS) for binary instances (Amanatidis et al. 2021 and Barman et al. 2018), and satisfy EFX for bi-valued instances (Amanatidis et al. 2021).
    In this paper, we revisit the (personalized) bi-valued instances and provide a tight characterization of the approximation guarantees of MNW allocations with respect to EFX, PMMS, and GMMS, showing that these guarantees are significantly stronger than what is known for general valuations. 
    Furthermore, we demonstrate that even for the slightly more general setting of tri-valued instances, the fairness guarantees of MNW deteriorate significantly, which establishes a sharp boundary on the instances for which MNW allocations are strongly fair.
\end{abstract}

\section{Introduction}
The problem of fair division involves allocating a set of indivisible goods $M$ among a group of agents $N$. 
A primary objective in the problem is ensuring fairness among agents.
One of the most well-studied fairness notions is envy-freeness (EF)~\cite{foley1966resource}, where no agent prefers another agent's bundle over her own.
Unfortunately, EF allocations are not guaranteed to exist when items are indivisible.
Therefore, the literature has explored alternative fairness paradigms.

A well-studied notion, \emph{envy-freeness up to one item} (EF1)~\cite{conf/sigecom/LiptonMMS04}, requires that any envy an agent has toward another can be eliminated by removing some item from the envied agent's bundle. 
A substantially stronger requirement is \emph{envy-freeness up to any item} (EFX)~\cite{journals/teco/CaragiannisKMPS19}, which demands that envy disappears after removing any item. 
Beyond these comparison-based notions, several share-based notions set explicit utility thresholds for agents.
One of the most prominent is the \emph{Maximin Share} (MMS)~\cite{conf/bqgt/Budish10}, which represents the value an agent can guarantee by partitioning the items into $n$ bundles and receiving the least-preferred one.
The notion of \emph{Pairwise Maximin Share} (PMMS)~\cite{journals/teco/CaragiannisKMPS19} applies this idea to pairs of agents, and requires that for any pair of agents, an agent's value is at least what she could guarantee by re-partitioning the union of their bundles. 
Due to this pairwise nature, PMMS is conceptually closer to envy-freeness, notably serving as a sufficient condition for EFX$_+$\footnote{EFX$_+$ is a slight relaxation of EFX where envy can be eliminated by removing any good with positive value from the envied agent's bundle. For general additive valuations, the two notions are equivalent under the assumption of non-degeneracy~\cite{journals/tcs/AmanatidisBFHV21}.}.
As a strict generalization of both MMS and PMMS, the \emph{Groupwise Maximin Share} (GMMS)~\cite{conf/aaai/BarmanBMN18} extends this principle and requires an MMS guarantee for every subgroup of agents.

However, fairness alone is insufficient.
A desirable allocation rule also needs to ensure economic efficiency. 
One standard benchmark for efficiency is \emph{Pareto optimality} (PO), which requires that no other allocation can strictly increase the utility of one agent without decreasing that of another.
Unfortunately, fairness and efficiency are often at odds, as enforcing strict fairness guarantees can fundamentally conflict with the requirements of Pareto efficiency.
A central challenge in fair division is thus to balance these conflicting goals, identifying allocation rules that satisfy strong fairness guarantees while simultaneously achieving PO.
\medskip

\paragraph{Maximum Nash Welfare Allocations.}
The Maximum Nash Welfare (MNW) rule (maximizing the product of valuations $\prod_i v_i(X_i)$) serves as a cornerstone in the theory of fair division, celebrated for its unique ability to balance efficiency and fairness.
In the divisible setting, the MNW allocation yields a Competitive Equilibrium from Equal Incomes (CEEI) solution, which satisfies both EF and PO~\cite{varian1973equity}. 
Its impact is perhaps most profound in the indivisible setting: Caragiannis et al.~\cite{journals/teco/CaragiannisKMPS19} termed the rule ``unreasonably fair'' for its ability to simultaneously guarantee EF1 and PO for general additive valuations\footnote{A valuation function is additive if the value of a bundle is the sum of the value of items in the bundle. Unless otherwise specified, we focus on the setting of additive valuation functions in this paper.}.

\medskip

However, the reach of MNW's ``unreasonable fairness'' is limited when evaluated against other benchmarks of fairness in the general additive setting. 
Here, its guarantees are suboptimal: it fails to yield any constant approximation for EFX~\cite{journals/tcs/AmanatidisBFHV21}, provides an approximation of MMS that vanishes as the number of agents increases, and achieves only a $0.618$-PMMS~\cite{journals/teco/CaragiannisKMPS19}.
Conversely, the rule exhibits surprising power in the restricted domain of \emph{binary} valuations (values in $\{0,1\}$), where it implies exact EFX, GMMS (and consequently, MMS and PMMS) simultaneously\footnote{Note that MNW requires careful tie-breaking to satisfy EFX, while EFX implies GMMS for binary instances.}~\cite{journals/tcs/AmanatidisBFHV21, conf/aaai/BarmanBMN18}.
This creates a sharp contrast: MNW serves as a unified solution for binary instances but performs poorly for general cases.
The vast landscape between these two extremes remains largely unexplored, with one notable exception being the EFX guarantee for bi-valued instances~\cite{journals/tcs/AmanatidisBFHV21}.
Therefore, in this paper, we investigate how far the strong fairness guarantees of MNW extend beyond the binary setting before degenerating into the poor approximations observed in the general additive domain.

\subsection{Our Contribution}

In this work, we focus on (different settings of) bi-valued instances, and show that the MNW allocations provide much stronger approximation guarantees with respect to the fairness notions of EFX, PMMS, and GMMS.
We summarize the approximation ratios in Table~\ref{tab:mnw-fairness}.

\renewcommand{\arraystretch}{1.3} 
\begin{table}[htbp]
\centering
\setlength{\tabcolsep}{10pt}
\caption{Fairness guarantees of MNW allocations under different settings. The results with asterisk ($*$) were established by Amanatidis et al.~\cite{journals/tcs/AmanatidisBFHV21}, while results with octothorpe ($\#$) were by Caragiannis et al.~\cite{journals/teco/CaragiannisKMPS19}.
Bi-valued structures enable substantially stronger guarantees.}
\label{tab:mnw-fairness}

\begin{tabular}{l c c c}
\arrayrulecolor{ruleblue}\specialrule{1.2pt}{0pt}{0pt}\arrayrulecolor{black}

\rowcolor{shadegray}
 & \textbf{EFX} & \textbf{PMMS} & \textbf{GMMS} \\
\bluerule

\textbf{General/Tri-valued} {\scriptsize (Thm.~\ref{theorem: trivalue})} & $0~^*$  & $0.618~^\#$  & ${2}/{(1+\sqrt{4n-3})}~^\#$  \\

\rowcolor{shadegray}
\textbf{Personalized Bi-valued\quad} & 
$2/3$ {\scriptsize (Thm.~\ref{theorem:pb_mnwefx})} & 
$2/3$ {\scriptsize (Thm.~\ref{theorem:pb_mnwpmms})} & 
$2/3$ {\scriptsize (Thm.~\ref{theorem:pb_mnwmms})} \\

\textbf{Bi-valued} & \multirow{2}{*}{$1~^*$} & $6/7$ {\scriptsize (Thm.~\ref{theorem:bi-mnwpmms})} & \multirow{2}{*}{$2/3$ {\scriptsize (Thm.~\ref{theorem:pb_efx_imply_2/3_MMS})}} \\

\textbf{Factored Bi-valued} & & $1$ {\scriptsize (Thm.~\ref{theorem:fb-mnwpmms})} & \\

\arrayrulecolor{ruleblue}\specialrule{1.2pt}{0pt}{0pt}\arrayrulecolor{black}
\end{tabular}
\end{table}

Our contributions are fourfold:

\paragraph{Tight Guarantees in the Bi-valued Hierarchy.}
We analyze MNW across a hierarchy of three settings.
For the most general \emph{personalized bi-valued} setting, we have $v_i(g) \in \{1,k_i\}$ for each agent $i$ and item $g$, where $k_i > 1$ could be different across different agents.
The \emph{bi-valued} instances assume that $k_i = k$ for all $i$, while the \emph{factored bi-valued} further requires that $k$ is an integer.
In sharp contrast to the limits of general instances, we show that within the bi-valued landscape, MNW achieves substantially stronger, constant-factor guarantees.
As summarized in Table~\ref{tab:mnw-fairness}, the approximation ratios of MNW to the fairness notions of EFX, PMMS, and GMMS are at least $2/3$.
Moreover, all bounds we establish are tight, i.e., for each setting and fairness notion, there exists an instance that admits an MNW allocation whose approximation ratio is not better than what we have established.

\paragraph{Characterizing the Boundary: Tri-valued Instances.}
We further probe the limits of MNW in \emph{tri-valued instances}, where $v_i(g) \in \{a, b, c\}$ for any agent $i$ and item $g$.
By constructing explicit counterexamples, we show that the fairness guarantees of MNW for EFX, PMMS, and GMMS deteriorate significantly, matching the negative results known for general additive valuations.
We prove that in this setting, MNW fails to yield any constant-factor approximation for EFX and offers a GMMS guarantee that vanishes as the number of agents increases (see Section~\ref{section:tri_upper} for details).
These findings fill the gap for the fairness guarantee of MNW between binary and general instances, and effectively mark the boundary of possibility: tri-valued instances represent the threshold where the fairness guarantees of MNW collapse.

\paragraph{Implications to Approximate Fair Allocations.}
As an implication of the fairness guarantees, our results improve upon the state-of-the-art bounds for approximating PMMS and GMMS for bi-valued instances, e.g., from $0.781$~\cite{kurokawa2016fair} to $6/7$ and from $0.618$~\cite{amanatidis2026improved} to $2/3$, respectively\footnote{We remark that the results of~\cite{kurokawa2016fair} and~\cite{amanatidis2026improved} also hold for general additive instances.}.
Another contribution of our work lies in the realm of simultaneous fairness guarantees, an area of growing interest~\cite{journals/tcs/AmanatidisMN20,conf/aaai/FeldmanMP24,conf/aaai/AkramiR25a,conf/sigecom/AshuriG25}.
Our results explicitly demonstrate the existence of allocations that simultaneously satisfy the approximate fairness as listed in Table~\ref{tab:mnw-fairness}, alongside PO.
For a comparison, for general additive functions, Amanatidis et al.~\cite{journals/tcs/AmanatidisMN20} proposed algorithms for computing allocations that are $0.618$-EFX, $2/3$-PMMS, and $0.553$-GMMS, and Ashuri et al.~\cite{conf/sigecom/AshuriG25} showed the existence of allocations that are $1/2$-EFX, $2/3$-PMMS, and $4/7$-GMMS.
However, these allocations are not guaranteed to be PO.
Our work demonstrates that the bi-valued structure enables a significant improvement across these dimensions, alongside PO.

\paragraph{On the Unique Position of MNW.}
Finally, our results further substantiate the unique position of MNW as the ``unreasonably fair'' rule, particularly when compared to other natural egalitarian objectives.
We first compare MNW with the leximin rule, another prominent egalitarian benchmark\footnote{The leximin rule selects an allocation that maximizes the utility of the worst-off agent, then explicitly maximizes the utility of the second worst-off agent, and so on, in a lexicographical manner.}.
While leximin and MNW are equivalent for binary instances~\cite{conf/ijcai/0001R20}, their performance diverges sharply in the bi-valued domain.
Leximin yields an approximation ratio of $1/k$ for EFX and GMMS, and $2/(k+1)$ for PMMS (see Appendix~\ref{section:leximin}), both of which vanish as $k$ grows.
Furthermore, we compare MNW with the broader family of maximum $p$-mean welfare allocations\footnote{For an allocation $\mathbf{X}$, its $p$-mean welfare is $\left(\frac{1}{n}\sum_{i\in N}v_i(X_i)^p\right)^{1/p}$ for $p\neq 0$, and is $\left(\prod_{i\in N}v_i(X_i)\right)^{1/n}$ for $p=0$.}. 
Within this parameterized family, MNW mathematically coincides with the limit case of $p = 0$. 
This specific point is structurally exceptional as for any fixed $p \neq 0$, maximum $p$-mean welfare allocations fail to guarantee any constant approximation for EFX, PMMS, or GMMS for bi-valued instances (see Appendix~\ref{section:pmean}).
These comparisons underscore the unique standing of MNW among these natural egalitarian objectives.

\subsection{Related Work}

Due to the vast literature on fair division, we focus only on the results most closely related to our work.
For a more comprehensive review, please refer to the survey by Amanatidis et al.~\cite{journals/ai/AmanatidisABFLMVW23}.


\paragraph{EFX Allocations.} 
The existence of EFX allocations is a central problem within the fair division community. 
Recent work has established negative results for broad valuation classes, demonstrating that EFX allocations are not guaranteed to exist under submodular valuations~\cite{journals/corr/abs-2604-18216, journals/corr/abs-2605-06451}. 
However, whether EFX allocations always exist for additive valuations for goods remains a major open problem.\footnote{Recent work~\cite{journals/corr/abs-2606-08872} showed that EFX allocations for chores may not exist even for four additive agents.} 
On the positive side, a series of works have made progress in establishing existence guarantees under various restricted settings.
These include instances with a small number of agents~\cite{journals/siamdm/PlautR20, journals/jacm/ChaudhuryGM24, journals/ior/AkramiACGMM25} and specific valuation classes, such as identical~\cite{journals/siamdm/PlautR20}, lexicographic~\cite{conf/aaai/HosseiniSVX21}, bi-valued~\cite{journals/tcs/AmanatidisBFHV21}, personalized bi-valued~\cite{journals/corr/abs-2507-18251, journals/corr/abs-2507-14957} or those with limited types~\cite{journals/dam/Mahara23, journals/corr/abs-2410-13580}. 
Progress has also been made in structured domains, including graph-based allocations~\cite{conf/sigecom/0001FKS23} and instances with only two item types~\cite{conf/soda/GorantlaMV23}. 
Regarding approximation, the initial $1/2$-EFX guarantee~\cite{conf/ijcai/Chan00W19} was improved to $0.618$ by Amanatidis et al.~\cite{journals/tcs/AmanatidisMN20}. 
This remains the state-of-the-art for general additive valuations, though improvements to $2/3$ are possible in special cases~\cite{conf/sigecom/AmanatidisFS24}.

\paragraph{MMS, PMMS, and GMMS Allocations.}
Since exact MMS allocations are not guaranteed to exist~\cite{journals/jacm/KurokawaPW18}, research has focused on approximations. The initial non-trivial $2/3$-MMS guarantee~\cite{journals/jacm/KurokawaPW18,journals/teco/BarmanK20,journals/talg/AmanatidisMNS17} was improved to $3/4+o(1)$ by a series of works~\cite{conf/sigecom/GhodsiHSSY18,journals/ai/GargT21,conf/ijcai/AkramiGST23,conf/soda/AkramiG24}, with recent breakthroughs pushing the ratio to $10/13$~\cite{journals/corr/abs-2510-10423} and $7/9$~\cite{huang2025fptas}.
Complementing these general approximation results, researchers have also investigated exact existence within restricted domains, e.g., for bi-valued valuations~\cite{conf/atal/EbadianP022,Feige2022MaximinFA}.
Regarding PMMS, recent work has established that PMMS allocations need not exist for general additive valuations~\cite{journals/corr/abs-2609-06282,journals/corr/abs-2609-08954}, although existence is guaranteed for identical valuations~\cite{journals/tcs/DaiGMGXZ24}.
PMMS is known to imply EFX$_+$.
Conversely, Amanatidis et al.~\cite{conf/ijcai/AmanatidisBM18} proved that EFX implies $2/3$-PMMS, while EF1 implies $1/2$-PMMS.
The current best approximation guarantee for PMMS is $0.781$~\cite{kurokawa2016fair}.
By definition, GMMS is strictly stronger than both MMS and PMMS.
Consequently, the non-existence of exact MMS allocations implies that exact GMMS allocations are not guaranteed to exist either.
Barman et al.~\cite{conf/aaai/BarmanBMN18} established the existence of $1/2$-GMMS allocations. 
This ratio was later improved to $4/7$ independently by Amanatidis et al.~\cite{journals/tcs/AmanatidisMN20} and Chaudhury et al.~\cite{journals/siamcomp/ChaudhuryKMS21}, who gave algorithms for computing $4/7$-GMMS allocations.
Most recently, Amanatidis et al.~\cite{amanatidis2026improved} further advanced the ratio to $(\phi-1)$-GMMS (where $\phi\approx 1.618$ is the golden ratio), and $(\sqrt{10}-1)/3$ for three agents.
Qin~\cite{journals/corr/abs-2608-30267} established tight $10/17$-MMS guarantees for both EFX and PMMS allocations under general additive valuations.
Applying these guarantees to every subset directly implies that both notions also satisfy $10/17$-GMMS.

\paragraph{Computation and Approximation of MNW.}
A significant body of literature focuses on the computation and approximation of MNW allocations.
As shown by Lee~\cite{journals/ipl/Lee17}, maximizing Nash welfare is APX-hard for additive valuations. 
This hardness has motivated a line of work on approximation algorithms, improving from a $2$-approximation~\cite{journals/jair/McGlaughlinG20} to the state-of-the-art factor of $1.45$~\cite{conf/sigecom/BarmanKV18, conf/icalp/Feng024, journals/corr/abs-2607-13340}.
The state-of-the-art hardness result is by Viswanathan~\cite{corr/abs-2605-27098}, who showed a lower bound of $1.076$.
Constant approximations of Nash welfare have also been established for the weighted settings~\cite{conf/icalp/Feng024,conf/stoc/FengH0025}, and valuations beyond additive~\cite{conf/stoc/GargHLVV23,conf/stoc/DobzinskiLRV24, journals/corr/abs-2504-09669}.
Finally, the concept of MNW has been applied to mixed settings by Nishimura and Sumita~\cite{journals/mss/NishimuraS25} and weighted agents by Suksompong and Teh~\cite{journals/mss/SuksompongT22}, establishing fairness guarantees in these generalized contexts.

\section{Preliminaries}

We study the problem of fairly allocating a set of $m$ indivisible goods, denoted by $M$, to a group of $n$ agents $N$.
We refer to a subset of items $B \subseteq M$ as a bundle.
Each agent $i \in N$ is associated with an additive valuation function $v_i: 2^M \to \mathbb{R}^+ \cup \{0\}$, which assigns a value to every bundle of items.
We use $\textbf{v} = (v_1, \dots, v_n)$ to denote the valuation functions of all agents.
For any bundle $B \subseteq M$ and item $g \in M$, we use $B + g$ to denote $B \cup \{g\}$, and $B - g$ to denote $B \setminus \{g\}$.
An allocation $\bX = (X_1, \dots, X_n)$ is an ordered partition of the item set $M$ into $n$ disjoint bundles, where agent $i$ receives bundle $X_i$. 
For any integer $t \geq 1$, we use $[t]$ to denote the set $\{1, \dots, t\}$. 
Given an instance $\mathcal{I} = (N, M, \textbf{v})$, the objective is to compute an allocation $\bX$ that is fair and efficient to all agents.

To evaluate the fairness of allocations, we introduce several fairness notions below.


\begin{definition}[EF1]
An allocation \textbf{X} is envy-free up to one item (EF1) if for any agents $i, j \in N$, either $X_j = \emptyset$, or there exists an item $g \in X_j$ such that
\begin{equation*}
    v_i(X_i) \geq v_i(X_j - g).
\end{equation*}
\end{definition}

\begin{definition}[$\alpha$-EFX]
For any $0 \leq \alpha \leq 1$, an allocation $\bX$ is $\alpha$-approximate envy-free up to any item ($\alpha$-EFX) if for any agents $i, j \in N$, either $X_j = \emptyset$, or for any $g \in X_j$, we have
\begin{equation*}
    v_i(X_i) \geq \alpha \cdot v_i(X_j - g).
\end{equation*}
When $\alpha = 1$, the allocation is EFX.   
\end{definition}


\begin{definition}[Maximin Share]
    Given a set of items $M$ and $n$ agents, the maximin share (MMS) value of agent $i$ is defined as
    $$
    \MMS_i (M,n) :=\max_{\bX \in \Pi_n(M)} \min_{j \in N}  \ \left\{v_i(X_j)\right\},
    $$
    where $\Pi_n(M)$ denotes the set of all $n$-partitions of $M$.
\end{definition}

We use $\MMS_i$ to denote $\MMS_i(M,n)$ when $M$ and $n$ are clear from the context.

\begin{definition}[$\alpha$-MMS]
    For any $0 \leq \alpha \leq 1$, an allocation $\bX$ is $\alpha$-approximate maximin share ($\alpha$-MMS) if for every agent $i \in N$, it holds that
    $$
    v_i(X_i) \geq \alpha \cdot \MMS_i.
    $$
    Specifically, when $\alpha = 1$, the allocation is an \emph{MMS} allocation.
\end{definition}

\begin{definition}[$\alpha$-PMMS]
    For any $0 \leq \alpha \leq 1$, an allocation $\bX$ is $\alpha$-approximate pairwise maximin share ($\alpha$-PMMS) if for every pair of agents $i$ and $j$, it holds that 
    $$
    v_i(X_i) \geq \alpha \cdot \MMS_i(X_i \cup X_j,2).
    $$
    Specifically, when $\alpha=1$, the allocation is a \emph{PMMS} allocation.
\end{definition}

It has been shown that any PMMS allocation also satisfies EFX$_+$\footnote{For bi-valued instances, EFX is equivalent to EFX$_+$.}~\cite{journals/teco/CaragiannisKMPS19}.
However, the implication breaks down in the approximation regime~\cite{conf/ijcai/AmanatidisBM18}, which motivates us to explore a complete characterization for each fairness notion independently.

\begin{definition}[$\alpha$-GMMS] An allocation $\bX$ is $\alpha$-approximate groupwise maximin share ($\alpha$-GMMS) if for every subset of agents $N' \subseteq N$ and any agent $i \in N'$, 
$$
v_i(X_i) \geq \alpha \cdot \MMS_i(\bigcup_{j \in N'} X_j, |N'|).
$$
Specifically, when $\alpha=1$, the allocation is a \emph{GMMS} allocation.
\end{definition}
By definition, GMMS is stronger than both MMS and PMMS. 
Furthermore, $\alpha$-GMMS can directly imply $\alpha$-MMS and $\alpha$-PMMS.
Note that all the fairness notions mentioned above are scale-free, i.e., if an allocation satisfies one of these notions, then it remains to satisfy the same notion if we rescale any valuation function.

Besides fairness, a desirable allocation should also guarantee efficiency, for which we consider the standard benchmark of Pareto optimality. 
\begin{definition}[Pareto Optimal]
    An allocation $\bX'$ \emph{Pareto dominates} another allocation $\bX$ if $v_i(X_i') \geq v_i(X_i)$ for all $i \in N$ and the inequality is strict for at least one agent. 
    An allocation $\bX$ is \emph{Pareto optimal} (PO) if $\bX$ is not Pareto dominated by any other allocation.
\end{definition}

Throughout this paper, we focus on exploring the fairness guarantees of the MNW rule.
Therefore, we formally define it as follows.
\begin{definition}[Maximum Nash Welfare Allocations]
The Nash welfare of an allocation $\bX$ is defined as $\emph{NSW}(\bX) = \prod_{i \in N} v_i(X_i)$.\footnote{Formally speaking, the Nash welfare is given by $(\prod_{i \in N} v_i(X_i))^{1/n}$. However, since we do not work with approximations, the two definitions are equivalent.} The MNW allocation maximizes the Nash welfare among all feasible allocations.
\end{definition}
Note that every MNW allocation is inherently PO, as any Pareto improvement would strictly increase the Nash welfare. 
Thus, in the remainder of this paper, all fairness guarantees for MNW allocations are satisfied simultaneously with PO.

\section{Fairness Guarantee for EFX and GMMS}

In this section, we study the fairness guarantee of MNW allocations for EFX and GMMS, and provide tight characterizations for the approximation ratios across different bi-valued settings.

\begin{definition}[Bi-valued Instances] 
An instance is called bi-valued if there exist two constants $a > b \geq 0$ such that for any agent $i \in N$ and item $g \in M$, we have $v_i(g) \in \{a, b\}$.
\end{definition}
We remark that when $b = 0$, such instances reduce to binary instances, for which any MNW allocation is EFX and GMMS~\cite{journals/tcs/AmanatidisBFHV21, conf/aaai/BarmanBMN18}.
Hence, in the following, we assume w.l.o.g. that  $b > 0$.
Additionally, we rescale all the valuation functions such that $v_i(g) \in \{1, k\}$, where $k > 1$.
When $k$ is an integer, we refer to such special bi-valued instances as factored bi-valued instances. 

\begin{definition}[Personalized Bi-valued Instances] 
An instance is called personalized bi-valued if for any $i\in N$, there exist constants $a_i > b_i \geq 0$ such that for all items $g \in M$, we have $v_i(g) \in \{a_i, b_i\}$.
\end{definition}

Similarly, we normalize the valuation functions such that $v_i(g) \in \{1, k_i\}$ for every agent $i \in N$ and item $g \in M$, with $k_i > 1$\footnote{This normalization assumes $b_i > 0$ for all $i \in N$. The boundary case where $b_i = 0$ for some agents is discussed in Appendix~\ref{section: discussion}.}.

Throughout this paper, we use $\bX$ to denote an arbitrarily fixed MNW allocation and study its fairness guarantees.
We also assume $m\geq n$ throughout this paper, as otherwise all fairness notions are trivial to guarantee.

Given valuation function~$v$, we call an item $g$ \emph{large} to agent $i$ if $v_i(g)=k_i$, and \emph{small} to agent $i$ if $v_i(g)=1$.  
For each agent $i$, let $L_i = \{g \in M:v_i(g) = k_i\}$ and $S_i = \{g \in M: v_i(g) = 1\}$ denote the set of large and small items for $i$, respectively.

Before we proceed, we first show some simple yet powerful properties of the MNW allocation $\bX$, which will be useful to show the approximation guarantee for EFX, GMMS, and PMMS.

\begin{lemma} \label{lemma:MNW_pb_dif}
    For all agents $i,j\in N$ and $g\in X_j$, we have
    \begin{equation*}
        \frac{v_i(g)}{v_i(X_i) + v_i(g)} \leq \frac{v_j(g)}{v_j(X_j)}.
    \end{equation*}    
\end{lemma}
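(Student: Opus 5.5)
We will prove the inequality by the standard local-exchange argument for MNW allocations: move the single item $g$ from agent $j$ to agent $i$ and compare Nash welfare. We may assume $i \neq j$; if $i=j$, the left-hand side $v_i(g)/(v_i(X_i)+v_i(g))$ is at most $v_i(g)/v_i(X_i)$, which is the right-hand side.

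First we settle positivity. Since $m \ge n$ and all item values are positive (every item is worth at least $1$ to every agent), there is an allocation in which every agent has positive value. Hence the MNW allocation $\bX$ has strictly positive Nash welfare, so $v_\ell(X_\ell) > 0$ for every $\ell$. In particular $v_j(X_j)>0$, so the right-hand side is well defined.

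Next, consider the allocation $\bX'$ obtained from $\bX$ by setting $X_i' = X_i + g$ and $X_j' = X_j - g$, with all other bundles unchanged. By maximality of $\bX$, $\mathrm{NSW}(\bX') \le \mathrm{NSW}(\bX)$. Cancelling the common positive factors of the untouched agents gives
\[
\bigl(v_i(X_i)+v_i(g)\bigr)\bigl(v_j(X_j)-v_j(g)\bigr) \le v_i(X_i)\, v_j(X_j).
\]
If $X_j = \{g\}$, the left-hand side is $0$ and the inequality holds trivially, but the conclusion must still be derived, so we rearrange rather than divide by $v_j(X_j)-v_j(g)$. Expanding and cancelling $v_i(X_i)v_j(X_j)$ yields
\[
v_i(g)\, v_j(X_j) \le v_j(g)\bigl(v_i(X_i)+v_i(g)\bigr).
\]
Dividing by the positive quantity $v_j(X_j)\bigl(v_i(X_i)+v_i(g)\bigr)$ gives exactly the claimed inequality.

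The only delicate point is ensuring positivity of $v_\ell(X_\ell)$ so that the factors of the other agents can be cancelled and the final division is legal. This is exactly where $m\ge n$ and $b_i>0$ are used. No bi-valued structure is needed; the lemma holds for any additive valuations with an MNW allocation of positive welfare.
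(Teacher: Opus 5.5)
Your proof is correct and uses the same single-item exchange argument as the paper: moving $g$ from $X_j$ to $X_i$ cannot increase the Nash welfare, and the resulting inequality $\bigl(v_i(X_i)+v_i(g)\bigr)\bigl(v_j(X_j)-v_j(g)\bigr)\le v_i(X_i)\,v_j(X_j)$ rearranges to the claim. Your extra remarks only make explicit details the paper leaves implicit: positivity of every $v_\ell(X_\ell)$ (from $m\ge n$ and all item values being at least $1$), the trivial case $i=j$, and rearranging rather than dividing by $v_j(X_j)-v_j(g)$.
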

\begin{proof}
    By the definition of MNW allocation, reallocating item $g$ from $X_j$ to $X_i$ does not increase the Nash welfare.
    Therefore we have 
    \begin{equation*}
        (v_i(X_i) + v_i(g)) \cdot (v_j(X_j) - v_j(g)) \leq v_i(X_i) \cdot v_j(X_j).
    \end{equation*}
    The lemma follows by rearranging the inequality.
\end{proof}

By a summation over all items $g\in X_j$, we obtain the following corollary for personalized bi-valued instances.

\begin{corollary}\label{corollary:MNW_sum_of_pb_dif}
    For all agents $i,j\in N$, we have 
    \begin{equation*}
        \frac{k_i\cdot |X_j\cap L_i|}{v_i(X_i) + k_i}
        + \frac{|X_j\cap S_i|}{v_i(X_i) + 1} \leq 1.
    \end{equation*}
\end{corollary}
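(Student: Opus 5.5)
The plan is to sum the inequality of Lemma~\ref{lemma:MNW_pb_dif} over all items of $X_j$. The left-hand side then splits according to whether each item is large or small for agent $i$. The right-hand side telescopes to exactly $1$ by additivity of $v_j$.

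First I dispose of degenerate cases. If $X_j=\emptyset$, the left-hand side of the corollary is $0\le 1$. Otherwise, since $b_\ell>0$ for every agent $\ell$ (all items have value at least $1$ after normalization) and $m\ge n$, there is an allocation in which every agent receives at least one item. That allocation has strictly positive Nash welfare, so the MNW allocation $\bX$ has $v_\ell(X_\ell)>0$ for all $\ell$. In particular $v_j(X_j)>0$ and $v_i(X_i)\ge 0$, so all denominators below are positive. This is exactly what is needed for the rearrangement in Lemma~\ref{lemma:MNW_pb_dif} to be valid. The case $i=j$ is also harmless: the lemma's left side $v_j(g)/(v_j(X_j)+v_j(g))$ is then even smaller than the right side.

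Next, for each $g\in X_j$, Lemma~\ref{lemma:MNW_pb_dif} gives $\frac{v_i(g)}{v_i(X_i)+v_i(g)}\le \frac{v_j(g)}{v_j(X_j)}$. Summing over $g\in X_j$ and using additivity gives
\begin{equation*}
\sum_{g\in X_j}\frac{v_i(g)}{v_i(X_i)+v_i(g)} \le \frac{\sum_{g\in X_j} v_j(g)}{v_j(X_j)} = 1 .
\end{equation*}
Finally, I partition $X_j$ into $X_j\cap L_i$ and $X_j\cap S_i$. Each item in the first part contributes $\frac{k_i}{v_i(X_i)+k_i}$, and each item in the second part contributes $\frac{1}{v_i(X_i)+1}$. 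This yields precisely the stated inequality.

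There is no real obstacle here; the only point needing care is the positivity of $v_j(X_j)$, which justifies dividing in Lemma~\ref{lemma:MNW_pb_dif}. That positivity follows from $m\ge n$ together with strictly positive item values.
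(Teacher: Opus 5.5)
Your proposal is correct and takes the same approach as the paper, which obtains the corollary by summing Lemma~\ref{lemma:MNW_pb_dif} over all $g\in X_j$ and splitting the items by their value to agent $i$, exactly as you do. Your extra checks (positivity of $v_j(X_j)$ via $m\ge n$ and strictly positive item values, and the trivial cases $X_j=\emptyset$ and $i=j$) only make explicit what the paper leaves implicit.
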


Recall that for bi-valued instances, Amanatidis et al.~\cite{journals/tcs/AmanatidisBFHV21} showed that every MNW allocation is EFX.
Unfortunately, the property ceases to hold for personalized bi-valued instances, as independently observed by Jin and Tao~\cite{journals/corr/abs-2507-18251} and Byrka et al.~\cite{journals/corr/abs-2507-14957} in their recent works.
However, the approximation guarantee of MNW to EFX has not been studied yet.

In Section~\ref{ssec:pb_EFX}, we show that for personalized bi-valued instances, every MNW allocation guarantees $2/3$-EFX, and the approximation ratio is tight.

We further investigate the fairness guarantee of MNW allocations for GMMS.
We show that every MNW allocation guarantees $2/3$-GMMS for personalized bi-valued instances, and the approximation ratio is tight even when the instances are factored bi-valued.
Interestingly, we show that for personalized bi-valued instances, EFX implies $2/3$-GMMS (in Section~\ref{ssec:EFX_imply_2/3_MMS}), which might be of independent interest.
This readily implies the $2/3$-GMMS guarantee for (factored) bi-valued instances, as the MNW allocations are EFX.
While the MNW allocations do not guarantee EFX for personalized bi-valued instances, we show that they continue to preserve the $2/3$ approximation ratio guarantee for GMMS (in Section~\ref{ssec:pb_MMS}).

\subsection{EFX Guarantee for Personalized Bi-valued Instances}\label{ssec:pb_EFX}

In this section, we characterize the approximation guarantee of MNW allocation to EFX for personalized bi-valued instances, and prove the following main result.

\begin{theorem} \label{theorem:pb_mnwefx}
    For personalized bi-valued instances, every MNW allocation is $2/3$-EFX, and the approximation guarantee is tight.
\end{theorem}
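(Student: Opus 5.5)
The plan is to fix an MNW allocation $\bX$, agents $i\neq j$, and an item $g\in X_j$, and to show $v_i(X_j-g)\le \frac32 v_i(X_i)$. Since $m\ge n$ and every item has positive value to every agent, each $v_i(X_i)>0$. Write $x=v_i(X_i)$, $k=k_i$, $\ell=|X_j\cap L_i|$ and $s=|X_j\cap S_i|$. Removing a small item is the worst case for agent $i$ when $s\ge 1$, so it suffices to bound $k\ell+s-1$ (and $k\ell-k$ when $s=0$) against $x$. The only structural input will be Corollary~\ref{corollary:MNW_sum_of_pb_dif}, namely $\frac{k\ell}{x+k}+\frac{s}{x+1}\le 1$, together with integrality of $\ell$ and $s$ and the fact that $x$ is of the form $ak+b$ with $a,b$ nonnegative integers.

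\textbf{Pure cases.} If $s=0$, the corollary gives $k\ell\le x+k$, so $v_i(X_j-g)\le x$. If $\ell=0$, it gives $s\le x+1$, so again $v_i(X_j-g)\le x$. Thus envy can only survive removal when $X_j$ mixes large and small items of agent $i$.

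\textbf{Mixed case ($\ell,s\ge1$).}
\begin{enumerate}
\item Solving the corollary for $s$ gives
\[
k\ell+s-1\;\le\; x+\frac{k\ell(k-1)}{x+k}.
\]
So it suffices to show $\frac{k\ell(k-1)}{x+k}\le \frac x2$.
\item Since $s\ge1$, the corollary yields $k\ell\le \frac{x(x+k)}{x+1}$. This bounds the excess by $\frac{(k-1)x}{x+1}$, which settles every case with $x\ge 2k-3$.
\item For small $x$ relative to $k$, I would use $\ell\ge1$. Together with $s\ge1$ this forces $x^2\ge k$. Combined with the discreteness of $x$, this yields a finite list of regimes, which I would check one by one. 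The cases to examine are:
\begin{itemize}
\item $x<k$, so $X_i$ consists of small items only and $x$ is an integer;
\item $x\ge k$, so $X_i$ contains a large item.
\end{itemize}
\item In each regime I would maximize $k\ell+s-1$ over integer pairs $(\ell,s)$ satisfying the constraint, expecting the maximum ratio $\frac{k\ell+s-1}{x}$ to be $3/2$. This is the step I expect to be the main obstacle. The continuous relaxation is too lossy for large $k$ and small $x$, so integrality must be exploited carefully, most likely by treating $\ell=1$ separately from $\ell\ge2$.
\item If the single-item swap inequality proves insufficient in some regime, my fallback is to use MNW optimality against swaps between $X_i$ and $X_j$. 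Specifically, I would consider moving one large item of $i$ to $i$ and one item of $i$'s bundle to $j$. This uses $j$'s valuation through the right-hand side of Lemma~\ref{lemma:MNW_pb_dif}, i.e.\ $v_j(g)/v_j(X_j)$, rather than only the summed corollary.
\end{enumerate}

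\textbf{Tightness.} I would construct a small two- or three-agent instance suggested by the extremal integer point found above. Agent $i$ would hold a few small items, $x=2$ say, and $X_j$ would contain one item large to $i$ plus small items, giving $v_i(X_j-g)=3$. Agent $j$'s $k_j$ would be chosen so that every single-item transfer and swap is weakly Nash-welfare-decreasing, which makes the allocation MNW with ratio exactly $2/3$. Verifying that no global reallocation beats it is a finite check.
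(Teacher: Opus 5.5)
\textbf{There is a genuine gap: the inequalities you plan to rely on cannot prove the theorem, and the tightness plan is also infeasible.}

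\textbf{The corollary alone fails.} Your mixed case rests on the expectation that Corollary~\ref{corollary:MNW_sum_of_pb_dif} plus integrality caps $(k\ell+s-1)/x$ at $3/2$. This is false, so step~4 cannot succeed. Since $k_i$ is unbounded, there is also no finite list of regimes. With $x=10$, $k_i=100$, $\ell=s=1$ the constraint reads $\frac{100}{110}+\frac{1}{11}=1$, yet $v_i(X_j-g)=100=10x$. In general, whenever $x<k_i\le x^2$ the constraint permits ratio $x/k_i$.

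\textbf{Your one-for-one swap fallback does not repair this.} Take two agents. Let $X_i$ be ten items worth $1$ to both, and $X_j=\{g_L,g_s\}$ with $v_i(g_L)=20$, $v_j(g_L)=9$, $v_i(g_s)=v_j(g_s)=1$ (so $k_i=20$, $k_j=9$). The Nash welfare is $100$. The single transfers give $30,99,99$ and the one-for-one swaps give $58,100$. So every inequality you propose to use holds, including the corollary in both directions, yet $v_i(X_i)=\frac12\, v_i(X_j-g_s)$. The same happens with $x\ge k_i$ and $\ell=2$. Let $X_i$ be one item worth $(100,1)$ to $(i,j)$ plus thirty items worth $(1,1)$, and $X_j$ two items worth $(100,64)$ plus one worth $(1,1)$. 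No single transfer or one-for-one swap increases the welfare $130\cdot 129$, while the ratio is $130/200<2/3$.

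\textbf{The missing ingredient is the paper's use of $j$'s valuation together with larger exchanges.}
\begin{itemize}
\item Applying Lemma~\ref{lemma:MNW_pb_dif} to an item of $X_j$ that is small to $j$ gives $v_j(X_j)\le v_i(X_i)+1$. If every item of $X_j$ is large to $j$, then $|X_j|\le 2$ and moving the large item to $i$ helps.
\item Exchanging the two entire bundles then forces $v_j(X_i)<v_i(X_i)$. In the first example, $21\cdot 10>100$. This is impossible when $X_i\subseteq S_i$.
\item Hence $X_i$ contains exactly one item large to $i$, and $x=k_i+s_i\in[k_i,2k_i)$. Only with $x\ge k_i$ does your step~1 close the case $\ell=1$, giving ratio at least $2k_i/(3k_i-1)$.
\item For $\ell=2$, one shows both items are large to $j$ and the rest of $X_j$ is small to $j$. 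One gets $k_j<k_i$ from $2k_j+s_j\le x+1<2k_i+1$, and then trades one of those items for all $s_i$ small items of $X_i$. In the second example, $200\cdot 95>130\cdot 129$.
\end{itemize}

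\textbf{Your tightness sketch cannot work either.} An agent $i$ holding only items small to $i$ is exactly what the bundle exchange excludes. Moreover, the argument above, run with $\le$ instead of $<$, still ends in a contradiction, so ratio exactly $2/3$ is never attained. Tightness therefore needs a family of instances, e.g.\ the paper's two agents with $k_1=m$, $k_2=2m$ and ratio $2m/(3m-3)\to 2/3$.
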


We first provide the following hard instance showing that the MNW allocation does not guarantee an approximation ratio better than $2/3$ for EFX.

\begin{example} \label{example:MNW_not_EFX}
Consider an instance with $n = 2$ agents and a set of items $M = \{g_1, \dots, g_{m}\}$ with valuations defined as follows.
Note that we have $k_1 = m$ and $k_2 = 2m$.
It can be verified that the MNW allocation allocates item $g_2$ to agent $2$ and all other items to agent $1$ (as circled in Table~\ref{table:mnwnotefx}).
\begin{table}[h]
    \centering
    \caption{Hard instance for the approximation of EFX by MNW} \label{table:mnwnotefx}
    \renewcommand{\arraystretch}{1.15}
    \newcommand{\circval}[1]{\tikz[baseline=-0.6ex]{\node[draw,circle,inner sep=2.5pt] {$#1$};}}
    \setlength{\tabcolsep}{10pt}
    \begin{tabular}{lcccccc}
    \toprule
                & $g_1$ & $g_2$ & $g_3$ & $g_4$ & $\cdots$ & $g_m$ \\
    \midrule
    Agent 1       & \circval{m} & $1$ &\circval{1} & \circval{1} & $\cdots$ & \circval{1} \\
    Agent 2       & $2m$ & \circval{2m} & $1$ & $1$ & $\cdots$ & $1$ \\
    \bottomrule
    \end{tabular}
\end{table}

For agent 2, the EFX approximation ratio is 
{
    \fontsize{9}{10}\selectfont
$$
\frac{v_2(X_2)}{v_2(X_1-g_3)} = \frac{2m}{3m - 3} \xrightarrow[m\to\infty]{} \frac{2}{3}.
$$
}
\end{example}

Hence, we have established an upper bound of $2/3$ for the approximation ratio.
Next, we move to prove the matching lower bound.
Assume for the sake of contradiction that there exist agents $i$ and $j$, and $g\in X_j$ such that
\begin{equation}
    v_i(X_i) < \frac{2}{3}\cdot v_i(X_j - g).
    \label{equation:i_not_2/3_EFX_to_j}
\end{equation}

We first establish a few immediate observations.

Let $l_i = |X_i\cap L_i|$ and $s_i = |X_i\cap S_i|$ be the number of large and small items allocated to agent $i$, respectively.
Similarly, let $l_j = |X_j\cap L_i|$ and $s_j = |X_j\cap S_i|$.
Note that all these variables are defined with respect to $v_i$.
Since the MNW allocation $\bX$ is EF1, $X_j$ must contain both large and small items, i.e., $l_j \geq 1$ and $s_j \geq 1$ (as otherwise EF1 implies EFX).
Furthermore, the property of EF1 implies $v_i(X_j) \leq v_i(X_i) + k_i$, which further implies $v_i(X_i) < 2k_i$ because otherwise
\begin{equation*}
    v_i(X_i) \geq v_i(X_j) - k_i \geq v_i(X_j) - \frac{1}{2}\cdot v_i(X_i),
\end{equation*}
which contradicts with $i$ not being $2/3$-EFX towards $j$.
Consequently, we also have $v_i(X_j) < 3k_i$ and $l_j \in \{1,2\}$.
Next, we provide upper bounds on $v_j(X_j)$ and $v_j(X_i)$ in terms of $v_i(X_i)$, which will be the building blocks for further analysis.

\begin{lemma} \label{lemma:v_j(X_j)<=v_i(X_i)+1}
    We have $v_j(X_j) \leq v_i(X_i) + 1$.
\end{lemma}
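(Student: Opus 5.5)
The plan is to apply Lemma~\ref{lemma:MNW_pb_dif} to a well-chosen item of $X_j$, the natural choice being an item that is small for agent $j$. If $h\in X_j$ has $v_j(h)=1$, Lemma~\ref{lemma:MNW_pb_dif} gives
\[
\frac{v_i(h)}{v_i(X_i)+v_i(h)}\le \frac{1}{v_j(X_j)},\qquad\text{i.e.}\qquad v_j(X_j)\le \frac{v_i(X_i)}{v_i(h)}+1 .
\]
Since $v_i(h)\in\{1,k_i\}$ and $k_i>1$, both possibilities give $v_j(X_j)\le v_i(X_i)+1$. Indeed, if $h$ is small for $i$ this is exactly the bound. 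If $h$ is large for $i$ we get the stronger bound $v_j(X_j)\le v_i(X_i)/k_i+1$. So the lemma holds whenever $X_j$ contains an item small for $j$.

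The remaining case, which I expect to be the only real obstacle, is when every item of $X_j$ is large for $j$. Then $v_j(X_j)=k_j|X_j|$, so $v_j(g')/v_j(X_j)=1/|X_j|$ for every $g'\in X_j$. Recall that $l_j\ge 1$, so we may take $g'\in X_j\cap L_i$. Lemma~\ref{lemma:MNW_pb_dif} applied to $g'$ gives
\[
\frac{k_i}{v_i(X_i)+k_i}\le\frac{1}{|X_j|},\qquad\text{i.e.}\qquad |X_j|\le 1+\frac{v_i(X_i)}{k_i}.
\]
We also know $|X_j|\ge l_j+s_j\ge 2$. This forces $v_i(X_i)\ge k_i$. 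Combined with the earlier observation $v_i(X_i)<2k_i$, it gives $|X_j|<3$, hence $|X_j|=2$.

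So in this case $X_j$ consists of exactly one item large for $i$ and one item small for $i$. Then for every $g\in X_j$ we have $v_i(X_j-g)\le k_i\le v_i(X_i)$. Thus $i$ is in fact EFX towards $j$, contradicting assumption~\eqref{equation:i_not_2/3_EFX_to_j}. Hence this case cannot occur, and the first case already establishes $v_j(X_j)\le v_i(X_i)+1$.

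The only subtle point is realising that the "all large for $j$" case must be ruled out by the standing contradiction assumption, together with $l_j,s_j\ge 1$ and $v_i(X_i)<2k_i$, rather than by a direct inequality.
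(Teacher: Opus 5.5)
Your proof is correct and follows the paper's argument. It uses the same split on whether $X_j$ contains an item small for $j$, and in the all-large case the same application of Lemma~\ref{lemma:MNW_pb_dif} to an item of $X_j\cap L_i$ to force $|X_j|=2$. The only difference is cosmetic: you read $v_i(X_i)\ge k_i$ directly off that inequality and contradict the non-$2/3$-EFX assumption, whereas the paper first gets $v_i(X_i)<k_i$ from the assumption and then exhibits the welfare-improving transfer of the large item, which amounts to the same inequality.
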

\begin{proof}
    By Lemma~\ref{lemma:MNW_pb_dif}, if there exists $g\in X_j$ such that $v_j(g) = 1$, then we have
    \begin{equation*}
        v_j(X_j) \leq v_j(g)\cdot \frac{v_i(X_i) + v_i(g)}{v_i(g)}
        = \frac{v_i(X_i)}{v_i(g)} + 1 \leq v_i(X_i) + 1,
    \end{equation*}
    where the equality uses $v_j(g) = 1$ and the second inequality uses $v_i(g) \geq 1$.

    Hence, it suffices to argue that there exists an item in $X_j$ that is small to agent $j$.
    Assume otherwise, i.e., all items in $X_j$ are large to agent $j$.
    Then we have $v_j(X_j) = |X_j|\cdot k_j$.
    Since there exists an item $g\in X_j$ with $v_i(g) = k_i$, we obtain
    \begin{equation*}
        v_j(X_j) \leq v_j(g)\cdot \frac{v_i(X_i) + v_i(g)}{v_i(g)}
        = v_j(g)\cdot \frac{v_i(X_i) + k_i}{k_i}
        < 3\cdot v_j(g),
    \end{equation*}
    where the second inequality holds since $v_i(X_i) < 2k_i$.
    Therefore we have $|X_j| \leq 2$, which implies $l_j=s_j=1$.
    Since $i$ is not $2/3$-EFX towards $j$, we have $v_i(X_i) < k_i$.
    Now consider a new allocation obtained by reallocating the item $g$ that is large to agent $i$ from $X_j$ to $X_i$.
    The Nash welfare of the new allocation is strictly larger than that of $\bX$ because
    \begin{equation*}
        (v_i(X_i) + k_i)\cdot k_j > v_i(X_i)\cdot (2k_j),
    \end{equation*}
    which is a contradiction.
    Consequently, there must exist an item in $X_j$ that is small for agent $j$, thereby completing the proof.
\end{proof}

\begin{lemma}
    We have $v_j(X_i) < v_i(X_i)$.
\end{lemma}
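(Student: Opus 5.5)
The plan is to argue by contradiction via a \emph{bundle swap}, using Lemma~\ref{lemma:v_j(X_j)<=v_i(X_i)+1} together with the standing assumption~\eqref{equation:i_not_2/3_EFX_to_j}. Write $y = v_i(X_i)$. Since $m \ge n$ and every item has positive value to every agent, the MNW allocation gives every agent positive utility, so $y > 0$. Suppose, towards a contradiction, that $v_j(X_i) \ge y$.

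First, I will extract a lower bound on $v_i(X_j)$ from~\eqref{equation:i_not_2/3_EFX_to_j}. Every item has value at least $1$ to agent $i$, so $v_i(X_j - g) \le v_i(X_j) - 1$. Hence
\begin{equation*}
    y < \frac{2}{3}\,\bigl(v_i(X_j) - 1\bigr),
    \qquad\text{that is,}\qquad
    v_i(X_j) > \frac{3}{2}\,y + 1 .
\end{equation*}

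Next, consider the allocation $\bX'$ obtained from $\bX$ by exchanging the bundles of $i$ and $j$, leaving all other agents unchanged. The factors of the Nash welfare for the other agents are identical in $\bX$ and $\bX'$. So it suffices to compare $v_i(X_j)\cdot v_j(X_i)$ with $v_i(X_i)\cdot v_j(X_j)$. By the bound above and the contradiction hypothesis,
\begin{equation*}
    v_i(X_j)\cdot v_j(X_i) > \Bigl(\frac{3}{2}\,y + 1\Bigr)\cdot y \;\ge\; (y+1)\cdot y .
\end{equation*}
By Lemma~\ref{lemma:v_j(X_j)<=v_i(X_i)+1}, $v_j(X_j) \le y + 1$, so $(y+1)\cdot y \ge v_j(X_j)\cdot v_i(X_i)$. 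Thus $\bX'$ has strictly larger Nash welfare than $\bX$, contradicting the maximality of $\bX$. Therefore $v_j(X_i) < v_i(X_i)$.

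The only delicate point is ensuring that the strict inequality survives the chain. Strictness comes from~\eqref{equation:i_not_2/3_EFX_to_j} together with $y > 0$, so this step should go through. If a sharper constant were needed later, I would instead split $X_i$ by item types and apply Lemma~\ref{lemma:MNW_pb_dif} item by item. For the present statement, the swap argument suffices.
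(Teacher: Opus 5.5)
Your proof is correct and follows the paper's argument. You assume $v_j(X_i)\ge v_i(X_i)$, swap the bundles of $i$ and $j$, and use $v_j(X_j)\le v_i(X_i)+1$ to obtain $v_i(X_j)\cdot v_j(X_i) > (v_i(X_i)+1)\cdot v_i(X_i) \ge v_j(X_j)\cdot v_i(X_i)$. You also spell out two steps the paper leaves implicit: that $v_i(X_j) > v_i(X_i)+1$ follows from the failure of $2/3$-EFX, and that strictness survives because $v_i(X_i)>0$.
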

\begin{proof}
    Suppose $v_j(X_i) \geq v_i(X_i)$, then we can strictly increase the Nash welfare by exchanging bundles $X_i$ and $X_j$ because
    \begin{equation*}
        v_i(X_j)\cdot v_j(X_i) > (v_i(X_i) + 1)\cdot v_i(X_i) \geq v_j(X_j)\cdot v_i(X_i),
    \end{equation*}
    where the second inequality holds by Lemma~\ref{lemma:v_j(X_j)<=v_i(X_i)+1}.
\end{proof}

As an immediate corollary, we conclude that $X_i$ contains at least one large item, i.e., $l_i\geq 1$, because otherwise we have $v_i(X_i) = |X_i|$, which leads to a contradiction that $v_j(X_i) < |X_i|$.
Note that since $v_i(X_i) < 2k_i$, we have $l_i = 1$ and $v_i(X_i) = k_i + s_i$.

Next, we derive a contradiction for each of the two cases for $l_j\in \{1,2\}$.

\paragraph{When $l_j=1$.}
Intuitively, since $X_j$ contains only one item that is large to agent $i$, we can show that the bundle value $v_i(X_j)$ is upper bounded, which leads to a contradiction with $i$ not being $2/3$-EFX towards $j$.
To establish the upper bound, we show that there are not too many small items in $X_j$, using the property of MNW allocation.
By Corollary~\ref{corollary:MNW_sum_of_pb_dif}, we have 
\begin{equation*}
    s_j \leq (v_i(X_i) + 1) \cdot \left( 1- \frac{k_i}{v_i(X_i) + k_i} \right) = (v_i(X_i) + 1) \cdot \frac{v_i(X_i)}{v_i(X_i) + k_i},
\end{equation*}
which implies
\begin{align*}
    \frac{v_i(X_i)}{v_i(X_j) - 1} = \frac{v_i(X_i)}{k_i + s_j - 1} & \geq \frac{v_i(X_i)}{k_i - 1 + (v_i(X_i) + 1) \cdot \frac{v_i(X_i)}{v_i(X_i) + k_i}} \\
    & = \frac{1}{\frac{k_i-1}{v_i(X_i)} + \frac{v_i(X_i) + 1}{v_i(X_i) + k_i}} \\
    & = \frac{1}{1 + \frac{k_i(k_i-1)}{v_i(X_i)(v_i(X_i)+k_i)}} \\
    & \geq \frac{1}{1 + \frac{k_i(k_i-1)}{k_i\cdot 2 k_i}} 
    = \frac{2k_i}{3k_i-1} \geq \frac{2}{3},
\end{align*}
where the second inequality holds since $v_i(X_i) \geq k_i$.
Hence, we have a contradiction.

\paragraph{When $l_j=2$.}

In this case, bundle $X_j$ contains two items that are large to agent $i$, i.e., having value $k_i$.
On the other hand, Lemma~\ref{lemma:v_j(X_j)<=v_i(X_i)+1} implies that these two large items have limited values to agent $j$.
Therefore, intuitively speaking, we should be able to swap some of these large items in $X_j$ with some small items in $X_i$ to strictly increase the Nash welfare, leading to a contradiction.
We first show that these two items are also large to agent $j$.
Let $g\in X_j$ be an item with $v_i(g) = k_i$.
By Lemma~\ref{lemma:MNW_pb_dif}, we have
\begin{equation*}
    v_j(g) \geq \frac{k_i}{v_i(X_i) + k_i}\cdot v_j(X_j) > \frac{1}{3}\cdot v_j(X_j) \geq \frac{1}{3}\cdot |X_j| \geq 1,
\end{equation*}
where the last inequality holds by $s_j \geq 1$ and $l_j = 2$.

Since the above inequality is strict, we have $v_j(g) = k_j$, i.e., $g$ is also large to agent $j$.
Note that the above inequality also implies $v_j(X_j) < 3\cdot v_j(g) = 3k_j$.
Therefore, except for the two items in $X_j$ that are large to both agents $i$ and $j$, all other items in $X_j$ are small to $j$.
Now, consider a new allocation obtained by swapping a large item in $X_j$ with all items in $X_i$ that are small to agent $i$.
Recall that $l_i = 1$ and $v_i(X_i) = k_i +s_i$.
The increase in Nash welfare after swapping is at least
\begin{align*}
    2k_i\cdot (k_j+s_j+s_i) - (k_i+s_i)\cdot (2k_j + s_j)
    & = 2k_i\cdot (s_j+s_i) - k_i\cdot s_j - 2s_i\cdot k_j - s_i\cdot s_j \\
    & = s_j\cdot (k_i - s_i) + 2s_i\cdot (k_i-k_j) > 0,
\end{align*}
where the inequality holds because $s_j\geq 1$, $s_i < k_i$ (implied by $v_i(X_i) = k_i + s_i < 2k_i$) and $k_i > k_j$ (implied by $v_j(X_j) = 2k_j + s_j \leq v_i(X_i) + 1 < 2k_i + 1$).
Hence, we have a contradiction.

Consequently, for any $i, j \in N$, $g \in X_j$, we must have $v_i(X_i) \geq \frac{2}{3} \cdot v_i(X_j - g)$. 
This completes the proof of the lower bound.

\subsection{Implication of EFX to Approximate GMMS}\label{ssec:EFX_imply_2/3_MMS}

In this section, we show that for personalized bi-valued instances, every EFX allocation is also $2/3$-GMMS, which improves upon the approximation guarantee for the general additive valuations (where EFX implies $10/17$-GMMS~\cite{journals/corr/abs-2608-30267}).
Furthermore, by combining our result with existing polynomial-time algorithms for EFX (Jin and Tao~\cite{journals/corr/abs-2507-18251} and Byrka et al.~\cite{journals/corr/abs-2507-14957}), we demonstrate that EFX and $2/3$-GMMS allocations can be computed efficiently for personalized bi-valued instances.

Notably, our analysis will be reused in the next section to show that the MNW allocation also guarantees a $2/3$-approximation of GMMS for personalized bi-valued instances.

\begin{theorem} \label{theorem:pb_efx_imply_2/3_MMS}
    For personalized bi-valued instances, every EFX allocation is $2/3$-GMMS, and the approximation guarantee is tight.
\end{theorem}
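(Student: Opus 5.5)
By definition, we need to show that for every $N'\subseteq N$ and $i\in N'$, $v_i(X_i)\ge \frac23\MMS_i(\bigcup_{j\in N'}X_j,|N'|)$. The first step is a reduction to MMS. The restriction of an EFX allocation $\bX$ to the agents $N'$ and items $\bigcup_{j\in N'}X_j$ is an EFX allocation of that sub-instance, and it is still personalized bi-valued. So it suffices to prove that every EFX allocation is $2/3$-MMS. Fix an agent $i$, normalize so that $v_i(g)\in\{1,k\}$ with $k=k_i$, and write $v=v_i(X_i)$ and $\mu=\MMS_i(M,n)$. Suppose for contradiction that $\mu>\frac32 v$.

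\textbf{Step 1 (removing singletons).} Any bundle $X_j=\{g\}$ with $v_i(g)>v$ can be discarded together with agent $j$, using the standard monotonicity $\MMS_i(M-g,n-1)\ge \MMS_i(M,n)$. After repeating this, every remaining bundle $X_j$ with $v_i(X_j)>v$ has at least two items. EFX then gives $v_i(X_j)\le v+\min_{g\in X_j}v_i(g)$.

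\textbf{Step 2 (case $v<k$).} If $X_j$ contained a large item and at least one other item, removing the other item would leave value $\ge k>v$, violating EFX. So after Step 1, every remaining $X_j$ with $j\ne i$ consists only of items that are small to $i$, and has value at most $v+1$. Likewise $X_i$ has no large item (since $v<k$), so every remaining item is small to $i$. Hence the total value is at most $v+(n-1)(v+1)<n(v+1)$. Since all values are integers, $\mu\le v$, a contradiction.

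\textbf{Step 3 (case $v\ge k$), the main obstacle.} Now a bundle $X_j$ containing a small item has $v_i(X_j)\le v+1$. An all-large bundle of $t\ge2$ items has $(t-1)k\le v$, so $tk\le \frac{t}{t-1}v\le 2v$. Because the all-large bundles can be worth up to $2v$, plain averaging of the total value against $n\mu$ does not beat $\frac32 v$. The extra ingredient is the indivisibility of large items.

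I would argue through the MMS partition $P_1,\dots,P_n$, where each $P_\ell$ satisfies $v_i(P_\ell)\ge\mu>\frac32 v\ge\frac32 k$. Each $P_\ell$ either contains at least two large items, or consists of one large item plus more than $\frac32v-k$ small items, or only small items. Let $a$ be the number of all-large bundles of $\bX$. Double counting large and small items separately should give the contradiction: roughly, each pair of large items taken from a doubled bundle must be matched, in the MMS partition, against enough small value that the total supply $\le n(v+1)$ plus the surplus of about $v$ per doubled bundle cannot cover $n\cdot\frac32 v$ once large items are forced into whole bundles. I expect this bookkeeping, done as a weight or charging argument that assigns each $P_\ell$ to at most one deficient bundle of $\bX$, to be the delicate part. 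The case $t=2$, $v=k$ is the extremal one, and I would handle it first.

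\textbf{Tightness.} Take $n=2$ and $k_1=2$, with agent $1$ valuing the items as $1,1,2,2$. Let $X_1$ be the two small items and $X_2$ the two large items. Agent $1$ is EFX towards agent $2$, since $v_1(X_1)=2=v_1(X_2-g)$ for either $g\in X_2$. However, $\MMS_1=3$ via the partition $\{1,2\},\{1,2\}$. Agent $2$'s valuation can be chosen so that the allocation is EFX for agent $2$ as well; for example, agent $2$ values everything as $1$. This gives ratio exactly $2/3$.
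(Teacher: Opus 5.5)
Your reduction from GMMS to MMS, the singleton-removal step, the case $v<k$, and the tightness example are all correct. The two-agent example suffices here; the paper's three-agent example is built so that the allocation is also MNW.

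The gap is Step~3, which is where the theorem is actually proved. You only describe a charging argument you expect to work, and the natural real-valued version of it does not close. Take $v=k$, so agent $i$ holds a single large item. An MMS bundle with one large item then needs only more than $k/2$ small items. EFX lets each all-small bundle of $\bX$ hold $\lfloor k+1\rfloor$ small items, i.e.\ about $(k+1)/2$ per unit of the shortage $2n-l$ (the number of large items missing for every bundle to have two). Real-valued counting therefore gives only $(2n-l)\frac{k}{2}<s<(2n-l)\frac{k+1}{2}$, which is not contradictory. Integrality has to do real work, and comparing directly against $\frac32 v$ is the wrong target.

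Here is what the paper does, and what your Step~3 is missing.

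First, EF1 plus proportionality gives $\mu\le\frac1n\sum_j v_i(X_j)<v+k$. This settles $v\ge 2k$, a case your Step~3 does not separate, and where plain averaging does in fact suffice.

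So assume $k\le v<2k$. Then $l<2n$, every bundle of $\bX$ has at most two large items, and by EFX two-large bundles contain no small items. Hence $2n-l=2|N_0|+|N_1|$, where $N_0,N_1$ are the bundles of $\bX$ with zero or one large item.

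The key lemma is $\mu\le\max\{k+t,2t+1\}$ with $t=\lfloor s/(2n-l)\rfloor$. It is proved by your double count, but done in integers. Suppose $\mu$ exceeded this bound. Then every MMS bundle with no large item needs at least $2t+2$ small items, and every one with exactly one large item needs at least $t+1$. If $y_0,y_1$ count these bundles, then $2y_0+y_1\ge 2n-l$, which forces $s\ge(2n-l)(t+1)$, a contradiction.

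On the allocation side, EFX caps the small items at $\alpha=\lfloor v+1\rfloor$ per bundle in $N_0$ and $\beta=\lfloor v-k+1\rfloor$ per bundle in $N_1$, strictly for $i$'s own bundle. So $s<(2n-l)\max\{\alpha/2,\beta\}$. Integrality then yields either $t\le\beta-1\le v-k$ or $2t+1\le\alpha\le v+1$. Each of these, together with $k\le v<2k$, gives $\frac32 v\ge\max\{k+t,2t+1\}$. The only exception is the trivial case $v=k\in(1,2)$, where $t=0$.

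The two missing ideas are the upper bound on $\mu$ in terms of $t$, and the floor-based comparison of $t$ with $\alpha$ and $\beta$.
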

To establish Theorem~\ref{theorem:pb_efx_imply_2/3_MMS}, it suffices to prove that every EFX allocation is $2/3$-MMS.
By definition, GMMS requires the MMS guarantee for every subset of agents. 
Since EFX holds for any subset of agents, the same $2/3$-MMS argument applies to the instance induced by any subset of agents, which implies $2/3$-GMMS.

On the other hand, since $\alpha$-GMMS implies $\alpha$-MMS, any upper bound for approximating MMS under EFX immediately applies to GMMS as well.
Therefore, in the remainder of the proof, we focus on showing that every EFX allocation is $2/3$-MMS, and that the guarantee is tight.
We first show an instance for which there exists an MNW allocation (which is also EFX) that only guarantees a $2/3$ approximation for MMS.

\begin{example} \label{example:MNW/EFX_allocation_not_MMS}
Consider the instance with $n=3$ agents and $m=6$ goods shown in Table~\ref{table:hardness_23}, where item values are either $1$ or $2$.
Note that the instance is factored bi-valued.
It can be verified that the allocation (indicated by the circles) maximizes Nash welfare and is EFX.
However, the value for agent $1$ is $2$, while her maximin share is $3$.
\begin{table}[htbp] 
    \caption{Hard instance for the approximation of GMMS by EFX} 
    \label{table:hardness_23}
    \renewcommand{\arraystretch}{1.15}
    \newcommand{\circval}[1]{\tikz[baseline=-0.6ex]{\node[draw,circle,inner sep=2.5pt] {$#1$};}}
    \setlength{\tabcolsep}{10pt}
    \begin{center}
    \begin{tabular}{lcccccc}
    \toprule
            & $g_1$ & $g_2$ & $g_3$ & $g_4$ & $g_5$ & $g_6$ \\
    \midrule
    Agent 1 & \circval{2} & $2$ & $2$ & $1$ & $1$ & $1$ \\
    Agent 2 & $2$ & \circval{2} & \circval{2} & $1$ & $1$ & $1$ \\
    Agent 3 & $2$ & $2$ & $2$ & \circval{2} & \circval{2} & \circval{2} \\
    \bottomrule
    \end{tabular}
    \end{center}
\end{table}
\end{example}

In the following, we show that any EFX allocation $\bX$ guarantees at least $2/3$-MMS for personalized bi-valued instances. 
Throughout the analysis, we highlight the places where the property of EFX is used to establish some statements.
Then in Section~\ref{ssec:pb_MMS}, we show that the same statements can be proved using the properties of MNW allocations. 
Therefore, we can reuse the same analysis to derive the $2/3$-MMS guarantee for MNW allocations in the next section.

We fix an arbitrary agent $i$ and show that $v_i(X_i) \geq \frac{2}{3}\cdot \MMS_i$.
For convenience of notation, in the remainder of Section~\ref{ssec:EFX_imply_2/3_MMS}, we omit the subscript in $v_i$, $k_i$, and $\MMS_i$.
Furthermore, we call an item large (resp. small) if it is large (resp. small) to agent $i$.
Therefore, we also omit the subscript in $L_i$ and $S_i$, and use $l = |L_i|$ and $s=|S_i|$ to denote the number of large and small items, respectively.
We keep the subscript in $X_i$ as sometimes we will refer to bundles allocated to other agents, e.g., $X_j$ for some agent $j\neq i$.
Note that we have $X_j \neq \emptyset$ for all $j\in N$, as otherwise (by property of EF1) we have $m < n$.

We first show two simple lower bounds for $v(X_i)$ using the property of EF1.

\begin{lemma}\label{lemma:value_bound}
We have $v(X_i) > \MMS - k$.
\end{lemma}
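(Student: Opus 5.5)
The plan is an averaging argument over the bundles of the other agents, using only EF1 (which EFX implies). Fix agent $i$ and write $v=v_i$, $k=k_i$ as in the surrounding text. I will prove $v(X_i) > \MMS - k$ by contradiction: suppose $v(X_i) \le \MMS - k$.

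First, since $\bX$ is EF1 (as EFX implies EF1), for every $j \neq i$ with $X_j \neq \emptyset$ there is an item $g_j \in X_j$ with $v(X_j - g_j) \le v(X_i)$. Every single item has value at most $k$ to agent $i$, so
\[
v(X_j) \le v(X_i) + v(g_j) \le v(X_i) + k \quad \text{for all } j \ne i.
\]
The surrounding text notes that no bundle is empty, but the bound holds trivially for empty bundles anyway.

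Next, sum over all agents to bound the total value:
\[
v(M) = \sum_{j} v(X_j) \le v(X_i) + (n-1)\bigl(v(X_i)+k\bigr) = n\cdot v(X_i) + (n-1)k.
\]
On the other side, the definition of the maximin share gives the averaging bound $\MMS \le v(M)/n$: some partition has every bundle worth at least $\MMS$, so the total is at least $n\cdot\MMS$. Combining the two,
\[
n\cdot \MMS \le n\cdot v(X_i) + (n-1)k .
\]
Under the contradiction hypothesis $v(X_i) \le \MMS - k$, the right-hand side is at most $n\,\MMS - nk + (n-1)k = n\,\MMS - k$, which is strictly less than $n\,\MMS$ because $k>0$. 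This contradiction proves $v(X_i) > \MMS - k$.

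Equivalently, and without contradiction, the display yields $v(X_i) \ge \MMS - \frac{n-1}{n}k > \MMS - k$ directly. There is no real obstacle in this argument. The only points to check are that the strictness comes from the factor $(n-1)/n < 1$, and that agent $i$'s own bundle is counted without the additive $k$.
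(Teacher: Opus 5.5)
Your proof is correct and follows essentially the same route as the paper: EF1 bounds every other bundle by $v(X_i)+k$, and the bound $\MMS \le v(M)/n$ finishes the argument. Your explicit $(n-1)k$ term makes precise where the paper's strict inequality comes from, namely that agent $i$'s own bundle is counted without the additive $k$.
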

\begin{proof}
Since $\bX$ is EF1, we have $v(X_j) \leq v(X_i) + k$ for any other agent $j \in N$.
Since the maximin share is at most the proportionality, we have 
$$
\MMS
\leq \frac{1}{n}\cdot \sum_{j\in[n]} v(X_j)
< \frac{1}{n}\cdot n \cdot (v(X_i)+k)
= v(X_i)+k.
$$

Rearranging the inequality gives the lemma.
\end{proof}

Note that when $\MMS\geq 3k$, the above lemma readily gives a lower bound of $2/3$ for the approximation ratio of MMS.
Hence, in the following analysis, we assume that $\MMS < 3k$.
Next, we show that $v(X_i)$ can also be lower bounded based on the number of large items.

\begin{lemma} \label{lemma:valuelowerbound}
    We have $v(X_i) \geq k\cdot \left\lfloor \frac{l}{n} \right\rfloor$.
\end{lemma}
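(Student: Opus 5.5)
We need to prove $v(X_i) \ge k\lfloor l/n\rfloor$ for an EFX allocation $\bX$. The plan is a counting argument by contradiction, using EFX where EF1 was used before. Let $q = \lfloor l/n \rfloor$. If $q = 0$ there is nothing to show, so assume $q \ge 1$ and, for contradiction, $v(X_i) < kq$.

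Since $v(X_i) < kq$, agent $i$ holds at most $q-1$ large items, and in fact $v(X_i) \le kq - 1$ when the total is an integer combination. To stay safe, I only use $v(X_i) < kq$. Because $l \ge nq$, the other $n-1$ bundles together contain at least $l - (q-1) \ge (n-1)q + 1$ large items. By pigeonhole, some agent $j \ne i$ has $|X_j \cap L| \ge q+1$.

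Now apply EFX from $i$ towards $j$. Remove from $X_j$ any single item $g$. If $X_j$ contains a small item, remove that one, so that $v(X_j - g) \ge k(q+1) > kq > v(X_i)$, which contradicts EFX. Otherwise all items of $X_j$ are large, and removing any item leaves at least $q$ large items. Then $v(X_j - g) \ge kq > v(X_i)$, again a contradiction. In either case EFX forces $v(X_i) \ge kq$, which proves the lemma.

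The step needing the most care is the pigeonhole count. One has to check that agent $i$'s own large items number at most $q-1$, which follows from $v(X_i) < kq$ since each large item contributes $k$. Then the remaining at least $(n-1)q+1$ large items force some other bundle to exceed $q$. Everything else is immediate from the EFX definition. This is also the one place where the argument uses EFX rather than EF1, consistent with the paper's remark that these places will later be re-derived via MNW properties.
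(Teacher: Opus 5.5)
Your proof is correct and is essentially the paper's argument: the same pigeonhole count produces some $X_j$, $j \neq i$, with at least $\lfloor l/n\rfloor + 1$ large items, which contradicts envy-freeness up to one item. One correction: EFX is not needed, because removing \emph{any} item from such an $X_j$ leaves at least $\lfloor l/n\rfloor$ large items, so EF1 is already violated and your case split is unnecessary. This is why the paper invokes only EF1 here and can reuse the lemma unchanged for MNW allocations in the personalized bi-valued setting (which are EF1 but not EFX), so your closing remark that this is one of the EFX-dependent steps is inaccurate.
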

\begin{proof}
    Assume for the sake of contradiction that $v(X_i) < k\cdot \floor{\frac{l}{n}}$, which implies that $X_i$ contains at most $\floor{\frac{l}{n}}-1$ large items.
    Then by the pigeonhole principle, there exists another bundle $X_j$ containing at least $\floor{\frac{l}{n}}+1$ large items (under $v$).
    Hence, even after removing a large item from $X_j$, its value under $v$ is at least $k\cdot \floor{\frac{l}{n}}$, strictly larger than $v(X_i)$, leading to a contradiction with the allocation being EF1.
\end{proof}

By Lemma~\ref{lemma:valuelowerbound}, when $\left\lfloor \frac{l}{n} \right\rfloor \geq 2$, we have $v(X_i) \geq 2k > \frac{2}{3}\cdot \MMS_i$. 
Therefore, it suffices to consider the case that $v(X_i) < 2k$ and $\left\lfloor \frac{l}{n} \right\rfloor \leq 1$, i.e., $l<2n$.
Next, we rule out the simple case of $v(X_i) < k$.
Note that the proof of the following lemma relies on the property of EFX.

\begin{lemma} \label{lemma:vi(Xi)<k}
    If $v(X_i) < k$, then we have $v(X_i) \geq \MMS$.
\end{lemma}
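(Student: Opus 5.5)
If $v(X_i) < k$, then $X_i$ contains no large item, so $X_i \subseteq S$ and $v(X_i) = |X_i|$. We show the maximin share of agent $i$ is at most $|X_i|$ by bounding $v(X_j)$ for every other agent $j$ via EFX, and then exhibiting a bundle of every $n$-partition that has small value.

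\textbf{Step 1: the other bundles.} Fix $j \neq i$; recall $X_j \neq \emptyset$. We split on whether $X_j$ contains a large item.
\begin{itemize}
\item If $X_j$ contains a large item $g$ and $|X_j| \geq 2$, pick any other item $g' \in X_j$. EFX applied to $g'$ gives $v(X_i) \geq v(X_j - g') \geq v(g) = k$, contradicting $v(X_i) < k$. Hence every bundle containing a large item is a \emph{singleton} consisting of exactly one large item.
\item If $X_j \subseteq S$, EFX gives $|X_j| - 1 \leq |X_i|$, i.e.\ $v(X_j) \leq v(X_i) + 1$.
\end{itemize}
So the agents other than $i$ split into $l$ singleton-large bundles (each large item forms its own bundle) and the remaining bundles, which consist of small items only and have size at most $|X_i| + 1$. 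In particular $l \leq n-1$.

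\textbf{Step 2: bounding $\MMS$.} Let $t = |X_i|$ and let $s$ be the total number of small items. Since the $n - l$ small-item bundles (including $X_i$) contain all small items, $s \leq t + (n-l-1)(t+1)$.

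Now take any $n$-partition of $M$. At most $l$ of its bundles contain a large item, so at least $n-l$ bundles consist only of small items. These bundles together hold at most $s$ small items, so the smallest has value at most
\[
\left\lfloor \frac{s}{n-l} \right\rfloor \leq \left\lfloor \frac{(n-l)(t+1) - 1}{n-l} \right\rfloor = t,
\]
using the bound on $s$ from above. Therefore $\MMS \leq t = v(X_i)$.

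\textbf{Main obstacle.} The crux is the singleton observation in Step 1: it is the only place the full strength of EFX (rather than EF1) is needed. Once it holds, the rest is pigeonhole counting, and the $-1$ in the bound $s \leq (n-l)(t+1) - 1$ is exactly what makes the floor come out to $t$ rather than $t+1$.
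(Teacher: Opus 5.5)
Your proof is correct and follows the paper's argument essentially step for step. EFX forces every bundle containing a large item to be a singleton. The $n-l$ small-only bundles, including $X_i$, then bound $s$, and every $n$-partition has at least $n-l$ small-only bundles whose minimum value is at most $\lfloor s/(n-l)\rfloor \le v(X_i)$. You also make explicit two points the paper leaves implicit: $X_i$ contributes exactly $t$ small items, which gives $s \le (n-l)(t+1)-1$, and $l \le n-1$, so the division is well defined.
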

\begin{proof}
Since $v(X_i)<k$, bundle $X_i$ contains only small items. 
Therefore, we have $v(X_i) = |X_i|$.
Furthermore, by the property of EFX, every other bundle $X_j$ either contains a single large item, or contains no large item and at most $v(X_i)+1$ small items.

Note that there are $n-l$ bundles (including $X_i$) containing only small items (recall that $l$ is the total number of large items).  
Since each of these bundles contains at most $v(X_i)+1$ small items, we have $v(X_i) \geq \left\lfloor \frac{s}{n-l} \right\rfloor$.
On the other hand, notice that in any $n$-partition of the items, there are at least $n-l$ bundles containing only small items.
The minimum value among these bundles is therefore at most $\left\lfloor \frac{s}{n-l} \right\rfloor$.
Therefore we have $v(X_i) \geq \MMS$, as claimed.
\end{proof}

In summary, in the following, it remains to consider the situation with
\begin{itemize}
    \item limited number of large items: $l < 2n$,
    \item upper bounded maximin share: $\MMS < 3k$, and
    \item upper and lower bounded $v(X_i)$: $k\leq v(X_i) < 2k$.
\end{itemize}

Since $v(X_i) < 2k$, the property of EF1 implies that every bundle contains at most $2$ large items.
Therefore, we can partition the bundles into groups depending on the number of large items they contain as follows:
\begin{itemize}
    \item Small-item-only bundles: $N_0 = \{j \in [n]: X_j \subseteq S \}$.
    \item One-large-item bundles: $N_1 = \{j\in [n]: |X_j\cap L|=1 \}$.
    \item Two-large-item bundles: $N_2 = \{j \in [n]: |X_j\cap L|=2 \}$. The EFX property implies that these bundles contain no small items.
\end{itemize}

Note that we have $l = |N_1| + 2|N_2|$ and $n = |N_0| + |N_1| + |N_2|$, which gives
\begin{equation*}
    2n-l = 2\cdot |N_0| + |N_1|.
\end{equation*}

Next, we introduce a crucial lemma for upper-bounding the value of $\MMS$.
Intuitively, to help understand the lemma, one can interpret $2n-l$ as the shortage of large items towards securing two large items for each bundle, and these shortages should be filled by small items.
Since there are only $s$ small items, there must be some shortage that receives at most $\left\lfloor \frac{s}{2n - l}\right\rfloor$ small items, resulting in an upper bound for $\MMS$.

\begin{lemma}\label{lemma:case4MMS}
Let $t =\left\lfloor \frac{s}{2n - l}\right\rfloor$. We have
$\MMS \le \max\{k+t, 2t+1\}$.
\end{lemma}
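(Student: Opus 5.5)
Fix an arbitrary $n$-partition $\mathbf{Y}=(Y_1,\dots,Y_n)$ of $M$. We will show that some bundle has value at most $\max\{k+t,\,2t+1\}$ under $v$; since $\MMS$ is the maximum over partitions of the minimum bundle value, the lemma follows. The argument is pure counting on $\mathbf{Y}$ and uses nothing about $\bX$ except the identity $2n-l = 2|N_0|+|N_1| \ge 0$ (we may assume $2n-l\geq 1$, since $l<2n$). Classify the bundles of $\mathbf{Y}$ by the number of large items they contain: let $a_0$, $a_1$, $a_{\ge 2}$ be the numbers of bundles with $0$, $1$, and at least $2$ large items. A bundle with two or more large items is never needed as the witness, so we assume for contradiction that every bundle has value strictly greater than $\max\{k+t,2t+1\}$ and count small items.

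Under this assumption, a bundle with no large item has value equal to its number of small items, so it contains at least $2t+2$ small items. A bundle with exactly one large item has value $k$ plus its number of small items, so it contains at least $t+1$ small items. Summing over bundles gives
\begin{equation*}
    s \;\ge\; (2t+2)\,a_0 + (t+1)\,a_1 \;=\; (t+1)\,(2a_0 + a_1).
\end{equation*}
On the large-item side, $l \ge a_1 + 2a_{\ge 2}$, and substituting $a_{\ge 2} = n - a_0 - a_1$ yields
\begin{equation*}
    2a_0 + a_1 \;\ge\; 2n - l.
\end{equation*}
Combining the two inequalities gives $s \ge (t+1)(2n-l)$, that is, $s/(2n-l) \ge t+1$. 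This contradicts $t = \lfloor s/(2n-l)\rfloor$. Hence in every $n$-partition some bundle with at most one large item has value at most $\max\{k+t,2t+1\}$, which gives the bound on $\MMS$.

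The only delicate point is the weighting. Bundles with no large item must count double (each such bundle has shortage $2$ in the interpretation given above the lemma), and this is exactly what makes the threshold $2t+1$ pair with the coefficient $2$ while $k+t$ pairs with the coefficient $1$. Once the shortage $2a_0+a_1$ is bounded below by $2n-l$, the rest is immediate. The degenerate case $s=0$, where $t=0$, is covered by the same computation.
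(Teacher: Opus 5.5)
Your proof is correct and follows the paper's argument. You assume every bundle exceeds $\max\{k+t,2t+1\}$, count at least $2t+2$ small items in each bundle without large items and at least $t+1$ in each bundle with one large item, and combine this with $2a_0+a_1\ge 2n-l$ to contradict $t=\lfloor s/(2n-l)\rfloor$; the paper justifies that last inequality only informally, whereas you derive it explicitly from $l\ge a_1+2a_{\ge 2}$.
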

\begin{proof}
Assume for the sake of contradiction that $\MMS > \max\{k+t, 2t+1\}$.
Let $(P_1,\dots,P_n)$ be an MMS partition for agent $i$, which satisfies $v(P_j) > \max\{k+t, 2t+1\}$ for all $j\in [n]$.
We further let $x_0 := |\{j\in[n] : P_j \cap L = \emptyset\}|$ denote the number of bundles with no large items, and $x_1 := |\{j\in[n] : |P_j \cap L| = 1 \}|$ denote the number of bundles with one large item.
Note that we have $2x_0+x_1 \ge 2n-l$ as we need at least $2n-l$ extra large items to ensure that each bundle contains at least two large items.

Since $v(P_j) > \max\{k+t, 2t+1\}$ for all $j\in [n]$, we conclude that
\begin{itemize}
    \item each bundle $P_j$ with no large items contains at least $2t + 2$ small items.
    \item each bundle $P_j$ with one large item contains at least $t + 1$ small items.
\end{itemize}

Then, counting the number of small items in these bundles, we get
$$
s \ge x_0\cdot (2t+2) + x_1\cdot (t+1)
= (2x_0 + x_1)\cdot (t+1) \ge (2n-l)(t+1),
$$
which leads to $t+1 \leq \frac{s}{2n-l}$ and
contradicts with $t=\left\lfloor \frac{s}{2n - l}\right\rfloor$.
\end{proof}

In the following, we show that $\frac{3}{2}\cdot v(X_i) \geq \max\{k+t, 2t+1\}$, which implies $v(X_i) \geq \frac{2}{3}\cdot \MMS$ by Lemma~\ref{lemma:case4MMS}.
Intuitively speaking, we give lower bounds in terms of $v(X_i)$ for the number of small items in each bundle other than $X_i$, which allows us to lower bound $s$ (the total number of small items) in terms of $v(X_i)$.
This then translates into upper bounds for $t =\left\lfloor \frac{s}{2n - l}\right\rfloor$, and enables us to establish $\frac{3}{2}\cdot v(X_i) \geq \max\{k+t, 2t+1\}$.

Recall that $N_0$ is the set of bundles containing only small items and $N_1$ is the set of bundles containing one large item.
By the property of EFX, we have the following upper bound on the number of small items in these bundles:
\begin{itemize}
    \item for all $j\in N_0$, we have $|X_j\cap S| \leq \alpha:=\lfloor v(X_i)+1\rfloor$.
    \item for all $j\in N_1$, we have $|X_j\cap S| \leq \beta:=\lfloor v(X_i)-k+1\rfloor$.
\end{itemize}

Note that to establish the first upper bound, we only need EF1. 
Furthermore, since $i\in N_0\cup N_1$, the above upper bounds also hold for agent $i$.
Moreover, for agent $i$, the corresponding inequality is strict.
Therefore, we establish the following upper bound on the total number of small items.

\begin{lemma} \label{lemma:upper_bound_s_by_alpha_beta}
    We have $s < (2n-l)\cdot \max\{\alpha /2, \beta\}$.
\end{lemma}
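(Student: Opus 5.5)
The plan is to count small items bundle by bundle, using the partition of $N$ into $N_0$, $N_1$, $N_2$, and then compare the total with $(2n-l)\cdot\max\{\alpha/2,\beta\}$ through the identity $2n-l = 2|N_0| + |N_1|$.

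First, every small item lies in some bundle $X_j$. By the EFX property, bundles in $N_2$ contain no small items. Therefore
\begin{equation*}
s = \sum_{j\in N_0} |X_j\cap S| + \sum_{j\in N_1} |X_j\cap S|.
\end{equation*}
I then apply the per-bundle bounds established just before the lemma: each $j\in N_0$ contributes at most $\alpha$ small items, and each $j\in N_1$ contributes at most $\beta$. This gives
\begin{equation*}
s \le |N_0|\cdot\alpha + |N_1|\cdot\beta = 2|N_0|\cdot\frac{\alpha}{2} + |N_1|\cdot\beta \le (2|N_0|+|N_1|)\cdot\max\{\alpha/2,\beta\} = (2n-l)\cdot\max\{\alpha/2,\beta\}.
\end{equation*}

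To make the inequality strict, I use agent $i$ itself. Since $k\le v(X_i)<2k$, the bundle $X_i$ contains exactly one large item, so $i\in N_1$. The term for $i$ is then $|X_i\cap S| = v(X_i)-k$. The key observation is that $v(X_i)-k < \lfloor v(X_i)-k+1\rfloor = \beta$, because $\lfloor x+1\rfloor > x$ for every real $x$. So the contribution of $i$ is strictly below $\beta$, and this makes the second inequality in the chain strict. The first inequality above can only add slack, so the strict bound survives.

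The one point needing care is that the final step, replacing $\beta$ by $\max\{\alpha/2,\beta\}$, is weak. However, strictness has already been obtained at the step where $i$'s term is bounded by $\beta$, so $s < (2n-l)\max\{\alpha/2,\beta\}$ still holds. I would also state the $N_0$ case for completeness: if $i$ were in $N_0$, its count $|X_i| = v(X_i)$ would be strictly below $\lfloor v(X_i)+1\rfloor = \alpha$, giving the same strictness. In the regime under consideration, though, $i\in N_1$.
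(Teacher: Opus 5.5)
Your argument is the same as the paper's: sum the per-bundle bounds $\alpha$ (over $N_0$) and $\beta$ (over $N_1$), get strictness from agent $i$'s own term, and conclude with $2n-l=2|N_0|+|N_1|$. One correction is needed: $k\le v(X_i)<2k$ does not force $i\in N_1$, since $X_i$ may contain only small items (e.g., $k=1.5$ and $X_i$ consisting of two small items, so $v(X_i)=2$); the $N_0$ case you add ``for completeness'' is therefore genuinely needed, which is why the paper writes $i\in N_0\cup N_1$, and with that case included your proof is complete.
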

\begin{proof}
    Summing the upper bounds on the number of small items in each bundle, we obtain
    \begin{align*}
        s  & <  |N_0|\cdot \alpha + |N_1|\cdot \beta \\
        & \leq (2|N_0| + |N_1|) \cdot \max\{\alpha / 2, \beta\} \\
        & = (2n-l)\cdot \max\{\alpha / 2, \beta\},
    \end{align*}
    where the first inequality is strict because $i\in N_0\cup N_1$ and the corresponding upper bound ($\alpha$ or $\beta$) is strict.
\end{proof}

Finally, we show the lower bound for $v(X_i)$.

\begin{lemma}\label{lemma:lower_bound_v(X_i)_by_max{k+t,2t+1}}
    We have $\frac{3}{2}\cdot v(X_i) \geq \max\{ k+t,2t+1 \}$.
\end{lemma}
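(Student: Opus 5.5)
We need to show $\frac32 v(X_i)\ge k+t$ and $\frac32 v(X_i)\ge 2t+1$, where $t=\lfloor s/(2n-l)\rfloor$. Write $x:=v(X_i)$, so $k\le x<2k$. By Lemma~\ref{lemma:upper_bound_s_by_alpha_beta}, $s<(2n-l)\max\{\alpha/2,\beta\}$, hence $t<\max\{\alpha/2,\beta\}$. Since $t$ is an integer, this gives $t\le \lceil \max\{\alpha/2,\beta\}\rceil-1$. We split into two cases according to which term attains the maximum.

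\emph{Case $\beta\ge\alpha/2$.} Here $t\le\beta-1=\lfloor x-k+1\rfloor-1\le x-k$, using that $\beta$ is an integer. For the first inequality,
\[
k+t\le x\le \tfrac32 x .
\]
For the second,
\[
2t+1\le 2(x-k)+1 = x+(x-2k)+1 .
\]
Because $x<2k$ and both $x$ and $k$ enter only through integer counts, we should get $x-2k+1\le \tfrac12 x$ whenever $x\ge k\ge 1$. Concretely, $2(x-k)+1\le \tfrac32x$ is equivalent to $x\le 4k-2$, which holds since $x<2k$ and $k>1$. It is cleaner to use the hypothesis of this case directly: since $t\le \beta-1$ and $\beta\le x-k+1$, we have $2t+1\le 2x-2k+1$, and $2x-2k+1\le\tfrac32 x$ iff $x\le 4k-2$. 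This is true because $x<2k\le 4k-2$ when $k\ge1$.

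\emph{Case $\alpha/2>\beta$.} Here $t<\alpha/2$, so $2t<\alpha=\lfloor x+1\rfloor$, giving $2t\le \lfloor x\rfloor$ and hence $2t+1\le x+1$. This is at most $\tfrac32 x$ iff $x\ge 2$. The degenerate case $x<2$ would need separate handling, but since $x\ge k>1$, only $x\in(1,2)$ is problematic, and there $\lfloor x\rfloor=1$ forces $t=0$, so $2t+1=1\le\tfrac32 x$. For the other bound, $k+t\le k+\lfloor x\rfloor/2\le k+x/2$, and $k\le x$ gives $k+t\le\tfrac32 x$.

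The main obstacle is the integrality bookkeeping: since $x$ and $k$ need not be integers, passing from the strict inequality $t<\max\{\alpha/2,\beta\}$ to usable bounds requires care with floors. I would handle this by noting that $\beta$ and $\alpha$ are integers and $t$ is an integer, so $t<\beta$ gives $t\le\beta-1$, and $2t<\alpha$ gives $2t\le\alpha-1=\lfloor x\rfloor$. With these two facts, both cases reduce to the elementary inequalities above. Combining with Lemma~\ref{lemma:case4MMS} then yields the claimed $\frac32 v(X_i)\ge\max\{k+t,2t+1\}$.
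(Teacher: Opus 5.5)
Your proof is correct and follows the paper's argument essentially step for step: the same case split on $\max\{\alpha/2,\beta\}$, the bound $t\le\beta-1\le x-k$ together with $x<2k\le 4k-2$ in the first case, and $2t+1\le\alpha\le x+1$ in the second, with the only cosmetic difference being that you dispose of $x<2$ via $\lfloor x\rfloor=1\Rightarrow t=0$ instead of the paper's observation that $X_i$ is then a single large item. One small slip in wording: Lemma~\ref{lemma:case4MMS} plays no role in proving this inequality---it is what this lemma is combined with afterwards to conclude $v(X_i)\ge\frac{2}{3}\MMS$.
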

\begin{proof}
    We discuss two cases depending on the value of $\max\{\alpha / 2, \beta\}$.

    If $\beta \geq \alpha/2$, then by Lemma~\ref{lemma:upper_bound_s_by_alpha_beta}, we have $s < (2n-l)\cdot \beta$.
    Since both $t$ and $\beta$ are integers, we have
    \begin{equation*}
        t = \left\lfloor \frac{s}{2n-l} \right\rfloor \leq \beta-1
        = \lfloor v(X_i)-k+1\rfloor - 1
        \leq v(X_i) -k.
    \end{equation*}
    By $t \leq v(X_i) - k$, we have 
    $$
    \frac{3}{2}\cdot v(X_i) > v(X_i) = k+(v(X_i) - k) \geq k+t.
    $$
    Furthermore, using the fact that $v(X_i) < 2k$, we derive the following bounds:
    $$\frac{3}{2} \cdot v(X_i) > 2(v(X_i)-k) + 1 \geq 2t+1.
    $$
    
    Therefore, we have $\frac{3}{2}\cdot v(X_i) \geq \max\{ k+t,2t+1 \}$.

     If $\beta < \alpha/2$, then Lemma~\ref{lemma:upper_bound_s_by_alpha_beta} gives $s < (2n-l)\cdot \alpha/2$, which implies $\alpha > \frac{2s}{2n-l}$.
    Again, since both $t$ and $\alpha$ are integers, we obtain
    \begin{equation*}
        2t+1 = 2\cdot\left\lfloor \frac{s}{2n-l} \right\rfloor +1 \leq \alpha 
        = \lfloor v(X_i)+1\rfloor
        \leq v(X_i)+1.
    \end{equation*}

    By the above inequality, we have 
    $\frac{3}{2}\cdot v(X_i) \geq k+t$ because $v(X_i) \geq k$ and $2t\leq v(X_i)$.

    On the other hand,
    $\frac{3}{2}\cdot v(X_i) \geq 2t+1$ when $v(X_i) \geq 2$. 
    If $v(X_i) < 2$, as $k \leq v(X_i) <2k$, it must be the case that $X_i$ contains a single item and $v(X_i) = k\in (1,2)$.
    In this case we have $\alpha = 2$ and $t=0$, and $\frac{3}{2}\cdot v(X_i) \geq 2t+1$ trivially holds.
    Therefore we also have $\frac{3}{2}\cdot v(X_i) \geq \max\{ k+t,2t+1 \}$, which completes the proof.
\end{proof}

Therefore, given any EFX allocation $\bX$, we have $v(X_i) \geq \frac{2}{3} \cdot \MMS$.

\subsection{GMMS Guarantee for Personalized Bi-valued Instances}\label{ssec:pb_MMS}

In this section, we show that for personalized bi-valued instances, the MNW allocation is $2/3$-GMMS.
As we have shown in Section~\ref{ssec:pb_EFX}, the MNW allocation does not guarantee EFX.
Thus, we cannot simply use the result from the previous section.
However, we show that the same analysis can be applied with some minor changes to derive the same approximation guarantee.

\begin{theorem} \label{theorem:pb_mnwmms}
    For personalized bi-valued instances, every MNW allocation is $2/3$-GMMS, and the approximation guarantee is tight.
\end{theorem}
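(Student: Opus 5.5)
The tightness part needs no new work. The allocation in Example~\ref{example:MNW/EFX_allocation_not_MMS} was already checked to be an MNW allocation, and it gives agent $1$ only $2/3$ of her maximin share.

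For the lower bound, I would first reduce GMMS to MMS. Take a subset $N'\subseteq N$. The restriction of $\bX$ to $N'$ is an MNW allocation of the sub-instance whose items are $\bigcup_{j\in N'}X_j$. Otherwise we could reallocate those items among $N'$ and strictly raise $\prod_{j\in N'}v_j(X_j)$ while leaving the other factors unchanged, which would strictly raise the total Nash welfare. So it suffices to show that every MNW allocation is $2/3$-MMS.

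For the MMS bound I would replay Section~\ref{ssec:EFX_imply_2/3_MMS} line by line. Lemmas~\ref{lemma:value_bound}, \ref{lemma:valuelowerbound} and \ref{lemma:case4MMS} use only EF1 or no fairness at all, and MNW allocations are EF1. The bound $|X_j\cap S|\le\alpha$ for $j\in N_0$ also needs only EF1. What remains is to re-derive each place where EFX was used, using Lemma~\ref{lemma:MNW_pb_dif} and Corollary~\ref{corollary:MNW_sum_of_pb_dif} with respect to the fixed agent $i$. Write $v=v(X_i)$, and let $l_j,s_j$ be the numbers of large and small items of $X_j$.

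First, Lemma~\ref{lemma:vi(Xi)<k}. If $v<k$, the corollary gives $l_j\cdot 2k/(2k)$-type bounds, namely $k l_j/(v+k)\le 1$ with $k/(v+k)>1/2$. Hence every bundle holds at most one large item. So at least $n-l$ bundles are small-only, and each holds at most $v+1$ small items by EF1. The rest of the argument is then unchanged.

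Second, the structure of $N_2$. With $l_j=2$ the corollary gives $s_j\le (v+1)(v-k)/(v+k)$. These small items need not be zero, so I would either absorb them into the count of $s$ or rule them out by a swap argument. For the swap, I would exchange a large item of $X_j$ for the small items of $X_i$, exactly as in the case $l_j=2$ of Section~\ref{ssec:pb_EFX}, using $v_j(X_j)\le v+1$-type bounds.

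Third, and this is the step I expect to be the main obstacle: the bound $|X_j\cap S|\le\beta=\lfloor v-k+1\rfloor$ for $j\in N_1$. The corollary alone only gives $s_j\le v(v+1)/(v+k)$, which is weaker than $\beta$. Plugging it into Lemma~\ref{lemma:lower_bound_v(X_i)_by_max{k+t,2t+1}} does not close the case $v\approx k$.

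My first attempt would be to prove the weaker but sufficient statement $t\le v/2$ together with $k+t\le 3v/2$, rather than $t\le v-k$. If that fails, I would use agent $j$'s valuation and argue as follows. Suppose some $j\in N_1$ has many small items with respect to $i$. Then Lemma~\ref{lemma:MNW_pb_dif}, applied to the large item and to a small item of $X_j$, pins down $v_j(X_j)$ relative to $v$. Moving the surplus small items, or swapping bundles as in the lemma showing $v_j(X_i)<v_i(X_i)$, then strictly increases the Nash welfare.

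Once the per-bundle small-item bounds are established, possibly with modified constants $\alpha',\beta'$, I would redo Lemma~\ref{lemma:upper_bound_s_by_alpha_beta}. The inequality stays strict because $i\in N_0\cup N_1$. Finally I would verify $\frac32 v\ge\max\{k+t,2t+1\}$ in the two cases $\beta'\ge\alpha'/2$ and $\beta'<\alpha'/2$, using $k\le v<2k$. Combined with Lemma~\ref{lemma:case4MMS}, this yields $v\ge\frac23\MMS$.
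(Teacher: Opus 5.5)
Your skeleton matches the paper's: restrict to $N'$ to reduce GMMS to MMS, reuse the EF1-based parts of Section~\ref{ssec:EFX_imply_2/3_MMS}, and patch the three places where EFX is used. The tightness example is also the right one. Your $N_2$ patch works too, and is slightly simpler than the paper's, which moves $g_1$ against $\lfloor v-k\rfloor$ small items. It does need $|X_i\cap L|=1$ first. As in Section~\ref{ssec:pb_EFX}, the two large items are large to $j$ and the rest are small to $j$, so $v_j(X_j)\le v+1<2k+1\le v_i(X_j)$. Exchanging whole bundles then forces $v_j(X_i)<v$, so $X_i$ cannot consist only of small items.

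\textbf{Gap in the first patch.} For $v<k$, ``at most one large item per bundle'' already follows from EF1. What the count $v\ge\lfloor s/(n-l)\rfloor$ in Lemma~\ref{lemma:vi(Xi)<k} actually needs is that a bundle containing a large item contains \emph{nothing else}, so that all $s$ small items lie in the $n-l$ small-only bundles. Corollary~\ref{corollary:MNW_sum_of_pb_dif} only gives $s_j\le v(v+1)/(v+k)$, which permits $s_j\ge 1$ once $v^2\ge k$. EF1 plus this bound cannot finish the argument. Take $k=10$, $n=4$, let $X_i$ be $5$ small items, and let three bundles each hold one large and two small items. This satisfies EF1 for $i$ and the corollary (with equality), yet $\MMS_i=10$ while $v=5$. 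The missing idea is the paper's swap. For a large $g\in X_j$ with $|X_j|\ge 2$, Lemma~\ref{lemma:MNW_pb_dif} gives $v_j(g)>v_j(X_j)/2$, so $g$ is large and everything else is small for $j$. Applying the lemma to a small item then gives $v_j(X_j)\le v+1$. Exchanging $g$ with all of $X_i$ yields welfare at least $k(v+1)>v\cdot v_j(X_j)$, a contradiction.

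\textbf{Third patch left open.} Your diagnosis there is wrong: the corollary bound you dismiss is exactly what the paper uses. Redefine $\beta:=\lfloor x(x+1)/(x+k)\rfloor$ with $x=v\in[k,2k)$. Lemma~\ref{lemma:upper_bound_s_by_alpha_beta} survives, and strictness for $i\in N_1$ holds because $x-k+1\le x(x+1)/(x+k)$. In the case $\beta\ge\alpha/2$, use $t\le\beta-1$:
\begin{itemize}
\item $k+t\le\frac32x$ reduces to $\frac12x^2+\frac k2x\ge k(k-1)$, which holds since $x\ge k$;
\item $2t+1\le\frac32x$ reduces to $x^2-(3k-2)x\le 2k$, which holds on $[k,2k]$ by convexity (check the endpoints).
\end{itemize}
Near $x=k$ you get $t\le(k-1)/2$, hence $k+t<\frac32k$, so nothing breaks there. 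Only the old intermediate claim $t\le v-k$ fails, and it was never needed. Your proposed substitute $t\le v/2$ does \emph{not} follow for $v$ near $2k$: for example, $k=10$, $v=19$ gives $\beta=13$. Aim for $2t+1\le\frac32v$ directly; the speculative swap fallback is unnecessary.
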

Note that for any subset of agents $N' \subseteq N$, the induced allocation on the induced instance is still an MNW allocation. 
Therefore, it suffices to prove that every MNW allocation is $2/3$-MMS.
In the following, we fix an arbitrary agent $i$ and show that $v_i(X_i) \geq \frac{2}{3}\cdot \MMS_i$.

We observe that the analysis from the previous section mainly uses the property of EF1, which is guaranteed by MNW allocations, as shown by Caragiannis et al.~\cite{journals/teco/CaragiannisKMPS19}.
Hence, there are only three places where the property of EFX is used: (1) Lemma~\ref{lemma:vi(Xi)<k}, (2) bundles in $N_2$ contain no small items when $k \leq v_i(X_i) < 2k$ and (3) when establishing the upper bound of $\beta$ on the number of small items in bundles in $N_1$.
In the following, we fix any MNW allocation $\bX$, and show that the same statements can be established by the property of the MNW allocation.

Unless otherwise specified, we refer to an item as large (resp. small) if it is large (resp. small) to agent $i$.
We use $L$ and $S$ to denote the set of large and small items, $l=|L|$, $s=|S|$ and $k=k_i$.

We first introduce the following lemma that is analogous to Lemma~\ref{lemma:vi(Xi)<k}.

\begin{lemma}\label{lemma:vi(Xi)<k_alternative}
If $v_i(X_i) < k$, then $v_i(X_i) \geq \MMS_i$.
\end{lemma}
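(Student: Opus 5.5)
Since $v_i(X_i)<k$, every item of $X_i$ is small for $i$, so $v_i(X_i)=|X_i|=:v$. I would copy the counting argument of Lemma~\ref{lemma:vi(Xi)<k} and replace each use of EFX by either EF1 (which MNW allocations satisfy) or an exchange argument based on Lemma~\ref{lemma:MNW_pb_dif}.

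\textbf{What EF1 already gives.} For every $j$ there is an item whose removal brings $v_i(X_j)$ down to at most $v<k$. Hence every bundle has at most two large items, and a bundle with two large items has no small items. A bundle with no large item has at most $v+1$ small items. Let $y_0,y_1,y_2$ be the numbers of bundles with $0,1,2$ large items, and let $h_j$ be the number of small items in a bundle $X_j$ with $y$-type $1$. Then $l=y_1+2y_2$ and $y_0=n-l+y_2$, so
\[
s\;\le\;(y_0-1)(v+1)+v+\sum_{j:\,|X_j\cap L|=1} h_j .
\]
For the MMS side, I would argue as in Lemma~\ref{lemma:vi(Xi)<k}. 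Suppose $\MMS_i\ge v+1$. Any $n$-partition has at least $n-l$ bundles with no large item, and each of them needs at least $v+1$ small items. This forces $s\ge (n-l)(v+1)$.

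\textbf{The gap and how I would close it.} Comparing the two bounds, a contradiction follows once
\[
\sum_j h_j\;\le\; y_2\,(v+1).
\]
This certainly holds when no one-large bundle contains a small item. So the main step, which replaces the EFX structural claim ``a bundle with a large item consists of that item alone'', is to show via MNW that small items cannot sit next to a large item beyond this budget. I would attack it by exchange arguments, using the bi-valuedness of agent $j$ ($v_j\in\{1,k_j\}$).

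\emph{Step 1: a transfer bound.} Take $X_j=\{g\}\cup H_j$ with $g\in L$ and $H_j\subseteq S$ nonempty. Apply Lemma~\ref{lemma:MNW_pb_dif} to $g$:
\[
v_j(g)\ \ge\ \frac{k}{v+k}\,v_j(X_j)\ >\ \tfrac12\, v_j(X_j).
\]
Apply it to each $h\in H_j$:
\[
v_j(X_j)\ \le\ (v+1)\,v_j(h).
\]

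\emph{Step 2: pin down $v_j$ on $X_j$.} From Step 1, $g$ must be large for $j$, and the items of $H_j$ are all small for $j$ unless $|H_j|$ is tiny. In the generic case this gives $v_j(X_j)=k_j+|H_j|$ with $|H_j|<k_j$, and $v_j(X_j)\le v+1$.

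\emph{Step 3: the swap.} Give $X_i$ to $j$ and give $H_j$ to $i$, letting $j$ keep $g$. Alternatively, hand $H_j$ to $i$ outright, in the spirit of the swap used in the case $l_j=2$ of Theorem~\ref{theorem:pb_mnwefx}. The Nash welfare should strictly increase whenever $|H_j|$ is large relative to $v$. This would force $h_j$ to be so small that the leftover slack $(v+1)-v=1$ in $X_i$, together with the two-large bundles, absorbs $\sum_j h_j$.

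The main obstacle is exactly this exchange step. It requires a case split on whether the items of $H_j$ are large or small for $j$, and on how $k_j$ compares with $v$. The first thing I would try is to compute the welfare change for moving all of $H_j$ to $i$, namely
\[
(v+|H_j|)\,v_j(g)\quad\text{versus}\quad v\,\bigl(v_j(g)+v_j(H_j)\bigr),
\]
and to use the bound $v_j(g)>v_j(H_j)$ from Step 1. If this single move falls short in some boundary case, I would combine it with bundle swaps, as in the proof that $v_j(X_i)<v_i(X_i)$.

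Once $\sum_j h_j\le y_2(v+1)$ is established, the counting above gives $s<(n-l)(v+1)$, contradicting $s\ge (n-l)(v+1)$. Hence $v_i(X_i)\ge\MMS_i$.
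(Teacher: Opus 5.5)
\textbf{Gap: the exchange step.} The decisive exchange step is left open, and both moves you name provably fail. Your Step~2 gives $k_j+|H_j|=v_j(X_j)\le v+1$, so $k_j\le v$.
\begin{itemize}
\item Moving $H_j$ to $i$ changes the pair's product by $(v+|H_j|)k_j-v(k_j+|H_j|)=|H_j|(k_j-v)\le 0$. The bound $v_j(g)>v_j(H_j)$ cannot help, because the obstruction is $k_j\le v$.
\item Giving $X_i$ to $j$ and $H_j$ to $i$ leaves $i$ with $|H_j|\le v+1-k_j<v$. For instance, when every item of $X_i$ is small for $j$, the change is $|H_j|(k_j+v)-v(k_j+|H_j|)=k_j(|H_j|-v)<0$.
\end{itemize}
The exchange that works goes the other way: give the large item $g$ to $i$ and all of $X_i$ to $j$, so that $j$ holds $X_i\cup H_j$. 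Then $v_i(\{g\})=k>v$. Every item is worth at least $1$ to $j$, so $v_j(X_i\cup H_j)\ge |X_i|+|H_j|\ge v+1$. The pair's product therefore becomes at least $k(v+1)>v(v+1)\ge v\cdot v_j(X_j)$, contradicting maximality. This uses exactly your Step~2 bound $v_j(X_j)\le v+1$ together with $k>v$. It shows that every bundle containing an item that is large for $i$ is a singleton, which is what the paper proves before reusing the counting of Lemma~\ref{lemma:vi(Xi)<k}.

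\textbf{The counting target needs correcting.} Since $v<k$, EF1 already forbids two large items in any bundle, because removing one still leaves value at least $k>v$. Hence $y_2=0$. Moreover, the $y_2$ term in your bookkeeping has the wrong sign. With $y_0=n-l+y_2$, your bound reads $s\le (n-l)(v+1)+y_2(v+1)-1+\sum_j h_j$. Two-large bundles would therefore enlarge this bound, not absorb small items. The contradiction with $s\ge (n-l)(v+1)$ requires $y_2(v+1)+\sum_j h_j<1$, i.e., every $h_j=0$. The slack $1$ from $X_i$ is exactly what makes the final inequality strict and cannot absorb anything. So showing that the $h_j$ are merely ``small'' would not suffice.

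\textbf{Minor points.}
\begin{itemize}
\item The negation of the claim is $\MMS_i>v$, not $\MMS_i\ge v+1$. This is harmless: bundles without large items have integer value, so they still need at least $v+1$ small items.
\item The hedge ``unless $|H_j|$ is tiny'' in Step~2 is unnecessary. The bound $v_j(X_j)<2k_j$ already forces every item of $H_j$ to be small for $j$.
\end{itemize}
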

\begin{proof}
Since $v_i(X_i) < k$, all items in $X_i$ are small. 
By the EF1 property, every other bundle contains at most one large item. 
In the following, we show that every other bundle $X_j$ either contains a single large item, or contains no large item and at most $v_i(X_i)+1$ small items.

Consider any bundle $X_j$ that contains an item $g$ with $v_i(g) = k$.
Suppose $X_j$ also contains other items (which are small to agent $i$).
By Lemma~\ref{lemma:MNW_pb_dif},
\begin{equation*}
    v_j(g) \geq \frac{k}{v_i(X_i) + k}\cdot v_j(X_j) > \frac{1}{2}\cdot v_j(X_j) \geq \frac{1}{2}\cdot |X_j| \geq 1.
\end{equation*}
Therefore, we have $v_j(g) = k_j$.
Note that the above inequality also implies $v_j(X_j) < 2\cdot v_j(g) = 2k_j$.
Thus, except for item $g\in X_j$, the other items in $X_j$ are small to $j$.
Fix any $g' \in X_j\setminus \{g\}$.
By Lemma~\ref{lemma:MNW_pb_dif}, we have
\begin{equation*}
    v_j(X_j) \leq v_j(g')\cdot \frac{v_i(X_i) + v_i(g')}{v_i(g')}
    = \frac{v_i(X_i)}{v_i(g')} + 1 \leq v_i(X_i) + 1.
\end{equation*}

Now, consider a new allocation obtained by swapping the large item in $X_j$ with all items in $X_i$.
The Nash welfare of the new allocation is strictly larger because
\begin{equation*}
    k\cdot (|X_i| + |X_j\cap S|) 
    \geq k\cdot (v_i(X_i) + 1)
    > v_i(X_i)\cdot v_j(X_j).
\end{equation*}

Hence, we have a contradiction, which leads to the conclusion that every bundle that contains a large item does not have any other items.
%
Consequently, by the same argument as in Lemma~\ref{lemma:vi(Xi)<k}, we can show that $v_i(X_i) \geq \left\lfloor \frac{s}{n-l} \right\rfloor \geq \MMS_i$, where $s$ and $l$ are the total number of small and large items, respectively.
\end{proof}

For MNW allocation in personalized bi-valued instances, we can also show that bundles in $N_2$ cannot contain any small item when $k \leq v_i(X_i) < 2k$.
\begin{lemma}
    Bundles in $N_2$ contain no small items if $k \leq v_i(X_i) < 2k$.
\end{lemma}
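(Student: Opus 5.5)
The plan is to argue by contradiction, re-running the MNW exchange argument used in the case $l_j=2$ of the proof of Theorem~\ref{theorem:pb_mnwefx}. Suppose some $j\in N_2$ has $X_j$ containing two items $g_1,g_2$ that are large to $i$, together with $s_j\ge 1$ items that are small to $i$. I will first pin down how agent $j$ values $X_j$, and then exhibit a swap between $X_i$ and $X_j$ that strictly increases the Nash welfare.

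\emph{Step 1 (structure of $X_j$ under $v_j$).} For a large item $g\in\{g_1,g_2\}$, Lemma~\ref{lemma:MNW_pb_dif} with $v_i(X_i)<2k$ gives
\[
v_j(g)\ge \tfrac{k}{v_i(X_i)+k}\, v_j(X_j) > \tfrac13 v_j(X_j)\ge \tfrac13|X_j|\ge 1 .
\]
Hence $v_j(g)=k_j$, and moreover $v_j(X_j)<3k_j$. It follows that every remaining item of $X_j$ is small to $j$, so $v_j(X_j)=2k_j+s_j$. Next, take an item $g'\in X_j$ that is small to $i$ (and therefore small to $j$). Applying Lemma~\ref{lemma:MNW_pb_dif} to $g'$, exactly as in the proof of Lemma~\ref{lemma:v_j(X_j)<=v_i(X_i)+1}, gives $2k_j+s_j\le v_i(X_i)+1$.

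\emph{Step 2 (structure of $X_i$).} Since $k\le v_i(X_i)<2k$, the bundle $X_i$ contains $l_i\in\{0,1\}$ large items, and $v_i(X_i)=l_i k+s_i$.

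\emph{Step 3, case $l_i=1$.} Here $s_i<k$. Swap $g_1$ with the $s_i$ small items of $X_i$. The increase in Nash welfare is at least
\[
2k(k_j+s_j+s_i)-(k+s_i)(2k_j+s_j)=s_j(k-s_i)+2s_i(k-k_j).
\]
The first term is positive because $s_j\ge1$ and $s_i<k$. For the second term, Step 1 gives $2k_j\le k+s_i+1-s_j\le k+s_i<2k$, so $k_j<k$. Hence the increase is strictly positive, a contradiction.

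\emph{Step 3, case $l_i=0$.} Here $X_i$ consists of $s_i\in[k,2k)$ small items. Give both $g_1,g_2$ to $i$, and give all of $X_i$ to $j$, so $j$ keeps its small items. The new product is at least $2k(s_i+s_j)$, while the old product is $s_i(2k_j+s_j)\le s_i(s_i+1)$ by Step 1. Now
\[
s_i(s_i+1)<2k s_i+s_i\le 2ks_i+2ks_j ,
\]
using $s_i<2k$ and $s_j\ge1$. So the Nash welfare again strictly increases, a contradiction.

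I expect the main obstacle to be the case split in Step 3. The $l_i=0$ case is not covered by the earlier EFX argument, so a different swap is needed there (moving both large items rather than one). The other delicate point is checking $k_j<k$ from the bound $2k_j+s_j\le v_i(X_i)+1$ of Step 1, which relies on $s_i$ being an integer with $s_i<k$.
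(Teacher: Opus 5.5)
Your proof is correct, but it reaches the contradiction by a somewhat different route than the paper.

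\textbf{Preliminary bound.} You apply Lemma~\ref{lemma:MNW_pb_dif} to a single item $g'\in X_j$ that is small to both agents, and get $2k_j+s_j\le v_i(X_i)+1$. The paper instead considers moving \emph{all} $s_j$ small items of $X_j$ to $X_i$. That only yields $2k_j\le v_i(X_i)$, which already follows from your bound since $s_j\ge 1$.

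\textbf{Final exchange.} The paper does not split on $l_i$. It moves $g_1$ to $i$ and moves $x=\lfloor v_i(X_i)-k\rfloor$ small items of $X_i$ to $j$. It then writes $v_i(X_i)=k+x+\epsilon$ with $0\le\epsilon<1$, and runs an inequality chain (using $2k_j\le v_i(X_i)$ to control the term $\epsilon(x-k_j)$) to show the welfare change is at least $(2k-v_i(X_i))/2>0$.
\begin{itemize}
\item When $l_i=1$, this is exactly your swap ($x=s_i$, $\epsilon=0$), i.e., the $l_j=2$ computation from the EFX proof.
\item When $l_i=0$, you instead trade both large items for all of $X_i$. Your sharper bound $v_j(X_j)\le s_i+1$ then makes $2k(s_i+s_j)>s_i(s_i+1)$ immediate.
\end{itemize}

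So you trade the paper's single, case-free exchange for two short computations with no fractional-part bookkeeping; both are valid.

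A minor remark: deriving $k_j<k$ needs no integrality of $s_i$. It follows directly from $2k_j\le k+s_i+1-s_j\le k+s_i=v_i(X_i)<2k$.
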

\begin{proof}
    Suppose for the sake of contradiction that there exists some $j \in N_2$ such that $|X_j \cap L| = 2$ and $|X_j \cap S| \geq 1$.
    By definition of $N_2$, there exist two items $g_1, g_2 \in X_j$, such that $v_i(g_1) = v_i(g_2) = k$ and hence $|X_j| \geq 3$.
    
    By Lemma~\ref{lemma:MNW_pb_dif}, we have 
    $$\frac{v_j(g_1)}{v_j(X_j)} \geq \frac{v_i(g_1)}{v_i(X_i) + v_i(g_1)} > \frac{1}{3},
    $$
    which implies $v_j(g_1) = k_j$ as $|X_j| \geq 3$.
    A similar argument establishes that $v_j(g_2) = k_j$.
    Moreover, by $\frac{k_j}{v_j(X_j)} > \frac{1}{3}$, we have $v_j(X_j) < 3k_j$ and $v_i(g) = v_j(g) = 1$ for any $g \in X_j \setminus \{g_1, g_2\}$.
    Let $v_j(X_j) = 2k_j + s_j$. 
    We must have $v_j(X_j) - s_j \leq v_i(X_i)$, as otherwise shifting all small items from $X_j$ to $X_i$ would yield a change in Nash welfare of
    $$(v_i(X_i) + s_j)\cdot (v_j(X_j) - s_j) - v_i(X_i)\cdot v_j(X_j) = s_j(v_j(X_j) - v_i(X_i)) - s_j^2 > 0,
    $$
    which contradicts the optimality of $\bX$. 
    
    Hence, we can conclude that $2k_j \leq v_i(X_i) < 2k$.
    Now, let $x = \floor{v_i(X_i) - k}$ and $\epsilon = v_i(X_i) - x - k \geq 0$. 
    Consider a new allocation $\bX'$ obtained by moving $g_1$ from $X_j$ to $X_i$ and moving $x$ small items from $X_i$ to $X_j$. 
    The resulting change in Nash welfare is given by:
\begin{align*}
    v_i(X_i') \cdot v_j(X_j') - v_i(X_i) \cdot v_j(X_j) 
    & \geq (v_i(X_i)+k - x)(k_j + s_j + x) - v_i(X_i)(2k_j + s_j) \\
    & = (2k+\epsilon)(k_j+s_j+x) - (k+x+\epsilon)(2k_j+s_j)\\
    & = s_j(k - x) + 2x(k - k_j) + \epsilon(x - k_j) \\
    & > s_j(k - x) + \epsilon(x - k_j) \tag{by $k_j < k$}\\
    & \geq k - x + \epsilon(\frac{x - k - \epsilon}{2}) \tag{by $s_j \geq 1$ and $2k_j \leq v_i(X_i)$}\\
    & \geq \frac{k - x - \epsilon}{2} > 0,
\end{align*}
where the last two inequalities hold by $0 \leq \epsilon < 1$ and $k + x + \epsilon = v_i(X_i) < 2k$, respectively.
This contradicts the optimality of $\bX$. 
Thus, bundles in $N_2$ cannot contain small items.
\end{proof}

Next, we provide an alternative definition of $\beta$, and show that we can apply a similar analysis as in Lemmas~\ref{lemma:upper_bound_s_by_alpha_beta} and~\ref{lemma:lower_bound_v(X_i)_by_max{k+t,2t+1}} to the new definition of $\beta$.
For ease of notation, in the following we use $x$ to denote $v_i(X_i)$.
Note that $x\in [k,2k)$.
We re-define $\beta := \left \lfloor \frac{x(x+1)}{x + k} \right\rfloor$.
We first give an upper bound on the number of small items in bundles of $N_1$.

\begin{lemma}\label{lemma:A_1_upper_bound_alternative}
    For any agent $j \in N_1$, we have $|X_j \cap S| \leq \beta$.
\end{lemma}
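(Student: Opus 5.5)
This follows directly from Corollary~\ref{corollary:MNW_sum_of_pb_dif}, applied exactly as in the case $l_j=1$ of the proof of Theorem~\ref{theorem:pb_mnwefx}. Fix $j\in N_1$ (for $j = i$ the bound is immediate, since $X_i$ has one large item and $x=k+|X_i\cap S|$ gives $|X_i\cap S| = x-k \le \frac{x(x+1)}{x+k}$ as $x\ge k$ implies $(x-k)(x+k)=x^2-k^2\le x^2+x$). For $j\neq i$, $X_j$ contains exactly one item that is large to $i$, so $|X_j\cap L_i| = 1$, and let $s_j = |X_j\cap S|$. Corollary~\ref{corollary:MNW_sum_of_pb_dif} with $v_i(X_i)=x$ gives
\begin{equation*}
    \frac{k}{x+k} + \frac{s_j}{x+1} \leq 1,
\end{equation*}
hence
\begin{equation*}
    s_j \leq (x+1)\left(1 - \frac{k}{x+k}\right) = \frac{x(x+1)}{x+k}.
\end{equation*}
Since $s_j$ is an integer, $s_j \le \left\lfloor \frac{x(x+1)}{x+k}\right\rfloor = \beta$.

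The only subtlety is the handling of agent $i$ itself, since the corollary is stated for pairs of agents. In the subsequent counting (the analogue of Lemma~\ref{lemma:upper_bound_s_by_alpha_beta}) one needs the bound for $i$ to be strict when $i\in N_1$. This is where I would check carefully. We have $|X_i\cap S|=x-k$, and $x-k<\frac{x(x+1)}{x+k}$ strictly because $x^2-k^2 < x^2+x$. However, $x-k$ need not be an integer relative to the floor, so I would verify strictness against $\beta$ separately. This uses that $x - k = |X_i\cap S|$ is itself an integer, and $|X_i\cap S| \le \lfloor\cdot\rfloor$ follows. Strictness may instead be secured by the $\alpha$ bound or handled in the next lemma.

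I expect no real obstacle in the lemma itself; the main work is confirming that the modified $\beta$ still makes the case analysis of Lemma~\ref{lemma:lower_bound_v(X_i)_by_max{k+t,2t+1}} go through.
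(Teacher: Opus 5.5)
Your proof is correct and matches the paper's: apply Corollary~\ref{corollary:MNW_sum_of_pb_dif} with $|X_j\cap L_i|=1$ to get $|X_j\cap S|\le \frac{x(x+1)}{x+k}$, then use integrality to pass to the floor $\beta$. The strictness for $j=i$ that you left open is needed only in the later counting step, not in this lemma, and it does hold: $(x-k+1)(x+k)=x^2+x-k(k-1)<x^2+x$ for $k>1$, so the integer $x-k+1$ is at most $\beta$ and hence $|X_i\cap S|=x-k<\beta$.
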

\begin{proof}
    Since $X_j$ contains a large item, by Corollary~\ref{corollary:MNW_sum_of_pb_dif}, we have 
    $$
    \frac{|X_j \cap S|}{x+1} + \frac{k}{x+k}\leq 1,
    $$
    which gives $|X_j \cap S| \leq \frac{x(x+1)}{x+k}$.
    The lemma follows since $|X_j \cap S|$ is an integer.
\end{proof}

Furthermore, if $i\in N_1$ then we have $|X_i\cap S| = x-k < \beta$.
Given the above upper bounds, we can re-establish Lemma~\ref{lemma:upper_bound_s_by_alpha_beta} under the new definition of $\beta$ (and the same definition of $\alpha$).
Next, we provide an analogous version of Lemma~\ref{lemma:lower_bound_v(X_i)_by_max{k+t,2t+1}} (recall that $t =\left\lfloor \frac{s}{2n - l}\right\rfloor$).

\begin{lemma}\label{lemma:lower_bound_v(X_i)_by_max{k+t,2t+1}_alternative}
    We have $\frac{3}{2}\cdot v_i(X_i) \geq \max\{ k+t,2t+1 \}$.
\end{lemma}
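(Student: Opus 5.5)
The plan is to follow the proof of Lemma~\ref{lemma:lower_bound_v(X_i)_by_max{k+t,2t+1}} line by line, using the redefined $\beta = \lfloor x(x+1)/(x+k)\rfloor$ and the unchanged $\alpha=\lfloor x+1\rfloor$, where $x=v_i(X_i)\in[k,2k)$. By Lemma~\ref{lemma:A_1_upper_bound_alternative}, together with the strict bound for agent $i$ itself, the argument of Lemma~\ref{lemma:upper_bound_s_by_alpha_beta} goes through verbatim. It gives $s<(2n-l)\cdot\max\{\alpha/2,\beta\}$. I would split into the same two cases according to which term attains the maximum.

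\emph{Case $\beta<\alpha/2$.} Here $s<(2n-l)\alpha/2$. Since $t$ and $\alpha$ are integers, we get $2t+1\le \alpha\le x+1$. This case uses only $\alpha$, whose definition is unchanged, so the earlier argument applies word for word:
\begin{itemize}
\item $2t\le x$ and $x\ge k$ together give $k+t\le \tfrac32 x$;
\item $2t+1\le x+1\le \tfrac32 x$ whenever $x\ge 2$;
\item if $x<2$, then $k\le x<2$ forces $X_i$ to consist of a single large item, since one large item plus a small item is worth $k+1>2$. Then $\alpha=2$, so $t=0$, and $2t+1=1\le \tfrac32 x$.
\end{itemize}

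\emph{Case $\beta\ge\alpha/2$.} Here $s<(2n-l)\beta$, so integrality gives
\[
t\le \beta-1\le \frac{x(x+1)}{x+k}-1=\frac{x^2-k}{x+k}.
\]
This is the only step that genuinely differs from before, because $\beta$ is no longer $\lfloor x-k+1\rfloor$. I expect it to be the main (though modest) obstacle: we must verify two polynomial inequalities on the range $x\in[k,2k)$.

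For the first bound, $k+t\le \tfrac32 x$ follows from $k+\frac{x^2-k}{x+k}\le\frac32 x$. After clearing denominators this is equivalent to
\[
x^2+kx-2k^2+2k\ge 0,\quad\text{i.e.}\quad (x-k)(x+2k)+2k\ge 0,
\]
which holds since $x\ge k$.

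For the second bound, $2t+1\le\tfrac32 x$ follows from $\frac{2x^2+x-k}{x+k}\le \frac32 x$. This is equivalent to
\[
x(x-3k+2)\le 2k.
\]
If $x-3k+2\le 0$, this is immediate. Otherwise, $x<2k$ gives $x-3k+2<2-k<1$, so $x(x-3k+2)<x<2k$.

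Combining the two cases yields $\tfrac32 v_i(X_i)\ge\max\{k+t,2t+1\}$. Together with Lemma~\ref{lemma:case4MMS}, this gives $v_i(X_i)\ge \tfrac23\MMS_i$.
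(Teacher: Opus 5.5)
Your proposal is correct and follows essentially the same route as the paper: you carry the case $\beta<\alpha/2$ over verbatim, and in the case $\beta\ge\alpha/2$ you use $t\le\beta-1$ to reduce to the same two polynomial inequalities, $x^2+kx\ge 2k(k-1)$ and $x^2-(3k-2)x\le 2k$. The only cosmetic difference is in the second inequality: you verify it by a sign split on $x-3k+2$, whereas the paper notes that the convex left-hand side is maximized at the endpoints $x\in\{k,2k\}$.
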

\begin{proof}
    As before, we discuss two cases depending on the value of $\max\{\alpha / 2, \beta\}$.
    The proof for the case of $\beta < \alpha/2$ is identical and thus omitted.
    In the following, we focus on the case of $\beta \geq \alpha/2$.
    By Lemma~\ref{lemma:upper_bound_s_by_alpha_beta}, we have $s < (2n-l)\cdot \beta$ and $t \leq \beta - 1$ since both $t$ and $\beta$ are integers.

    We first show that $k+t \leq \frac{3}{2}\cdot x$.
    Since $t\leq \beta-1 = \left \lfloor \frac{x(x+1)}{x + k} \right\rfloor-1$, it suffices to show that
    \begin{equation*}
        \frac{x(x+1)}{x + k}\leq \frac{3}{2}\cdot x-k+1,
    \end{equation*}
    which is equivalent to
    \begin{equation*}
        \frac{1}{2}\cdot x^2 + \frac{k}{2}\cdot x \geq k(k-1).
    \end{equation*}
    The above inequality holds since $x \geq k$.

    Next, we show $2t+1 \leq \frac{3}{2}\cdot x$.
    Since $2t+1 \leq 2\beta-1 = 2\left \lfloor \frac{x(x+1)}{x + k} \right\rfloor-1$, it suffices to show that
    \begin{equation*}
        \frac{2x(x+1)}{x + k}-1 \leq \frac{3}{2}\cdot x,
    \end{equation*}
    which is equivalent to
    \begin{equation*}
        x^2 - (3k-2)x \leq 2k.
    \end{equation*}

    It can be verified that the LHS achieves its maximum when $x\in \{k,2k\}$.
    Therefore we have
    \begin{equation*}
        x^2 - (3k-2)x \leq \max\{ 2k-2k^2, 4k-2k^2 \} \leq 2k.
    \end{equation*}

    In summary, we have 
    $$\frac{3}{2}\cdot v_i(X_i) = \frac{3}{2}\cdot x \geq \max\{ k+t,2t+1 \}, 
    $$
    which completes the proof of this lemma.
\end{proof}

Hence, by replacing the corresponding parts in the proofs in Section~\ref{ssec:EFX_imply_2/3_MMS}, we can reproduce the same approximation guarantee of MMS for personalized bi-valued instances, which implies the corresponding guarantee of GMMS.


\section{Fairness Guarantee for PMMS}

In this section, we investigate the approximation guarantee of MNW allocations for PMMS.
While previous results for GMMS imply that MNW allocations are at least $2/3$-PMMS for personalized bi-valued instances, we establish stronger tight bounds of $6/7$ and $1$ for bi-valued and factored bi-valued instances, respectively.

Furthermore, we provide a matching upper bound for personalized bi-valued instances, demonstrating that the $2/3$-PMMS guarantee is tight for such instances.

\paragraph{Roadmap.}
Our analysis for all three settings begins by fixing a pair of agents $i$ and $j$, and then shows $v_i(X_i) \geq \alpha\cdot \mu$ with the corresponding approximation ratio $\alpha \in\{2/3, 6/7, 1\}$, where $\mu = \MMS_i(X_i\cup X_j, 2)$.
We first consider (non-personalized) bi-valued instances, for which the MNW allocation is EFX.
If all items in $X_j$ are large to agent $i$, then we show that $v_i(X_i) \geq \mu$, using the property of MNW allocation.
If $X_j$ contains items that are small to agent $i$, then the property of EFX implies that $v_i(X_i)$ and $v_i(X_j)$ differ by at most $1$, which allows us to show $v_i(X_i) \geq \mu$ for factored instances because all item values are integral.
However, for non-factored instances, this does not hold (see Example~\ref{example:mnwnotpmms}).
Instead, we show that we must have $v_i(X_i) \geq 3$, which implies a lower bound of $6/7$ on the approximation ratio.
The above analysis does not work for personalized bi-valued instances, even if all items in $X_j$ are large to agent $i$ (see Example~\ref{example:pb_mnwpmms}).
However, when combined with our previous analysis of GMMS, this example establishes the tightness of the guarantee.

\begin{table}[htbp]
\centering
\caption{Approximation ratios of PMMS under different settings and cases}
\label{tab:pmms_ratios}
\begin{tabular}{c|c|c|c}
\arrayrulecolor{ruleblue}\specialrule{1.2pt}{0pt}{0pt}\arrayrulecolor{black}
\rowcolor{shadegray}    & \textbf{Factored Bi-valued} & \textbf{Non-factored Bi-valued} & \textbf{Personalized Bi-valued} \\
    \bluerule
    \textbf{$X_j\cap S \neq \emptyset$}              & $1$ & $6/7$  & \multirow{2}{*}{$2/3$ (Theorem~\ref{theorem:pb_mnwpmms})} \\ 
    \cline{1-3}
    \textbf{$X_j\cap S = \emptyset$}  & \multicolumn{2}{c|}{$1$ (Lemma~\ref{lemma:all_large})} \\
\arrayrulecolor{ruleblue}\specialrule{1.2pt}{0pt}{0pt}\arrayrulecolor{black}
\end{tabular}
\end{table}

Recall that for bi-valued instances, there exists $k>1$ such that $v_i(g)\in\{1,k\}$ for all $i\in N$ and $g\in M$.
Note that for bi-valued instances, the MNW allocation is EFX.
In this section, we show that the MNW allocation is $1$-PMMS and $6/7$-PMMS for factored bi-valued instances and non-factored bi-valued instances, respectively.

Fix any two agents $i, j \in N$, let $T = X_i \cup X_j$ and $\mu = \MMS_i(T,2)$.
We show approximations of PMMS for agent $i$.
Note that it suffices to consider the case that $v_i(X_i) < v_i(X_j)$, as otherwise $v_i(X_i) \geq v_i(T)/2 \geq \mu$.
We further define:
\begin{itemize}
    \item the set of large items for $i$ in $T$ as $L = \{g\in T:v_i(g) = k\}$, and $l = |L|$.
    \item the set of small items for $i$ in $T$ as $S = \{g\in T:v_i(g) = 1\}$, and $s = |S|$.
\end{itemize}

Unless otherwise specified, we call an item large (resp. small) if it is large (resp. small) to agent $i$.
Note that we can assume w.l.o.g. that $l>0$ and $s>0$, as otherwise all items in $T$ have the same value to agent $i$ and the property of EFX implies that for any $g \in T$, we can guarantee
$$v_i(X_i) \geq v_i(g)\cdot \left\lfloor\frac{|T|}{2}\right\rfloor \geq \mu.
$$

We first consider the case when all items in $X_j$ are large.

\begin{lemma} \label{lemma:all_large}
    If all items in $X_j$ are large, then we have $v_i(X_i) \geq \mu$.
\end{lemma}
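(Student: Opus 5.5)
The plan is to compare the two bundles of an arbitrary $2$-partition $(P_1,P_2)$ of $T$ with $X_i$, and to use the MNW property only to control how many small items $X_i$ can hold relative to the large items in $X_j$. Write $p=|X_j|$, so $v_i(X_j)=kp$. Write $X_i$ as $l_i$ large and $s_i$ small items, so $v_i(X_i)=kl_i+s_i$. Then $l=l_i+p$ and $s=s_i$, since all small items of $T$ lie in $X_i$. Because we assume $v_i(X_i)<kp$, we get $l_i\le p-1$. EFX of the MNW allocation (bi-valued case, Amanatidis et al.) gives $v_i(X_i)\ge k(p-1)$, hence $s_i\ge k(p-1-l_i)$. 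It suffices to show that every $2$-partition of $T$ has a bundle of value at most $kl_i+s_i$.

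\textbf{Step 1 (MNW bound on $s_i$).} I expect to show that $X_i$ cannot contain ``too many'' small items compared with a large item of $X_j$. Concretely, I aim for $s_i<k$ whenever $l_i\le p-2$, and more generally for an inequality of the form $s_i\le k(p-l_i)-k+\text{(something $<k$)}$. To get this, apply Lemma~\ref{lemma:MNW_pb_dif} in both directions.
\begin{itemize}
\item For $g\in X_j$ we have $v_i(g)=k$, which gives $v_j(g)/v_j(X_j)\ge k/(v_i(X_i)+k)$.
\item For a small item $h\in X_i$ moved to $j$, we get $v_j(h)/(v_j(X_j)+v_j(h))\le 1/v_i(X_i)$.
\end{itemize}
I will split on whether $X_j$ contains an item that is small to $j$.
\begin{itemize}
\item If some item of $X_j$ is small to $j$, the first inequality yields $v_j(X_j)\le v_i(X_i)/k+1$, which is very restrictive since $v_j(X_j)\ge p$.
\item Otherwise $v_j(X_j)=kp$. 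In this case I consider the swap that moves one $g\in X_j$ to $i$ and moves about $k$ small items of $X_i$ (or all of them) to $j$, and compare Nash welfare, as in the swap arguments of Section~\ref{ssec:pb_EFX}. Here it matters that $k_i=k_j=k$: the swap is exactly value-neutral in ``large units'', so any leftover small items tip the product.
\end{itemize}

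\textbf{Step 2 (counting on partitions).} Fix a partition $(P_1,P_2)$, where $P_1$ has $a$ large items and $P_2$ has $l-a$ large items, with $a\le l-a$ w.l.o.g. on the large count. If $a\le l_i$, then $P_1$ can only beat $kl_i+s_i$ by taking more than $s_i$ small items, which is impossible. So assume $a\ge l_i+1$. Then $P_2$ has at most $l-a\le p-1$ large items together with the small items not used by $P_1$. The bound from Step 1 together with $v_i(X_i)\ge k(p-1)$ should show that $\min(v_i(P_1),v_i(P_2))\le kl_i+s_i$. Here $l-a\le p-1$ gives $k(l-a)\le k(p-1)\le v_i(X_i)$ when $P_2$ has no small items. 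When small items are split between the two bundles, a direct comparison of the two totals against $2(kl_i+s_i)$ should finish, with the Step 1 bound on $s_i$ excluding the balanced splits.

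The main obstacle is Step 1. It requires choosing the right local swap so that the welfare comparison is strictly improving exactly in the configurations where a better-balanced $2$-partition would exist. I would try the single-item transfer of Lemma~\ref{lemma:MNW_pb_dif} first. If that is not sharp enough, I would use the swap of one large item against $\lceil k\rceil$ small items.
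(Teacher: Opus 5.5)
\paragraph{Gap in Step 1.} Step~1 is not just a matter of finding the right swap: the target you set is false, and the tools you plan to use are too weak to prove the lemma.

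\emph{The target fails for a genuine MNW allocation.} Your claim is ``$s_i<k$ whenever $l_i\le p-2$''. Take $n=2$ and $k=1.8$, with two items worth $k$ to both agents and three items worth $1$ to both. Give $i$ the three unit items and $j$ the two $k$-items. The Nash welfare is $3\cdot 2k=10.8$, which is maximum; the next best products are $2.8\cdot 3.8=10.64$ and $2\cdot 4.6=9.2$. Here $p=2$, $l_i=0=p-2$, and $s_i=3>k$.

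\emph{One-large-item moves cannot certify the lemma.} Every move you propose transfers at most one large item from $X_j$ to $X_i$: the transfers of Lemma~\ref{lemma:MNW_pb_dif}, or one large item against $\lceil k\rceil$ small items. Take $n=2$ and $k=1.6$, with six items worth $k$ to both agents and nine items worth $1$ to both. Let $X_i$ be the nine unit items and $X_j$ the six $k$-items.
\begin{itemize}
\item This allocation is EFX and satisfies both directions of Lemma~\ref{lemma:MNW_pb_dif}.
\item $(10.6-x)(8+x)\le 86.4=9\cdot 9.6$ for every integer $x$, so no exchange of one large item for $x$ small items strictly improves it.
\item Yet $\mu=9.2>9=v_i(X_i)$, via the partition into $2$ large plus $6$ small items versus $4$ large plus $3$ small.
\end{itemize}
This configuration is ruled out as MNW only by exchanging \emph{two} large items for three small ones ($9.2\cdot 9.4=86.48>86.4$). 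So any bound on $s_i$ obtained from your tools also holds in a configuration that violates the conclusion, and Step~2 cannot close.

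\emph{Your first sub-case is also left open.} The inequality $v_j(X_j)\le v_i(X_i)/k+1$ together with $v_j(X_j)\ge p$ only gives $v_i(X_i)\ge k(p-1)$. You already have this from EFX, so it yields no contradiction.

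\paragraph{Missing ideas.} These are the two steps of the paper's proof.

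First, suppose some $g\in X_j$ has $v_j(g)=1$. Exchanging $g$ with any small item of $X_i$ (one exists since $s>0$) is a Pareto improvement. Hence every item of $X_j$ is worth $k$ to $j$, and $v_j(X_j)=v_i(X_j)=v_i(T)-v_i(X_i)$. This is where $k_i=k_j$ is really used.

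Second, instead of local swaps, compare with the allocation induced by the MMS partition itself. Suppose its smaller bundle has $l'$ large and $s'$ small items, and $\mu>v_i(X_i)$. Then $s'\le s$ forces $l'>|X_i\cap L|$. So you can give $i$ exactly that bundle by exchanging $s-s'$ small items of $X_i$ for $l'-|X_i\cap L|$ large items of $X_j$, without moving any large item of $X_i$ to $j$. This yields $v_j(X_j')\ge v_i(T)-\mu$. Since $v_i(X_i)<\mu\le v_i(T)/2$, we get $\mu(v_i(T)-\mu)>v_i(X_i)\bigl(v_i(T)-v_i(X_i)\bigr)=v_i(X_i)\,v_j(X_j)$, contradicting maximality. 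This single global re-split replaces both your Step~1 and the partition case analysis of Step~2.
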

\begin{proof}
    Assume the contrary that $v_i(X_i) < \mu$.
    Since all items in $X_j$ are large, the $s$ small items are all contained in $X_i$.
    Furthermore, all items in $X_j$ are also large to agent $j$, because otherwise we can swap a large item $g\in X_j$ that is small to agent $j$ with a small item in $X_i$ to Pareto improve the allocation, which contradicts with the allocation being MNW.

    Suppose in the MMS partition that defines $\MMS_i(T,2)$, the bundle with smaller value contains $l'$ large items and $s'$ small items.
    Note that the other bundle contains $l-l'$ large items and $s-s'$ small items, and we have
    \begin{equation*}
        v_i(T) - \mu = (l-l')\cdot k+(s-s') \geq l'\cdot k + s' = \mu.
    \end{equation*}

    Consider the allocation $(X'_i,X'_j)$ obtained by swapping $s-s'$ small items in $X_i$ with $l'-|X_i\cap L|$ large items in $X_j$.
    Clearly, after reallocation, we have $v_i(X'_i) = \mu$.
    Furthermore, we have
    \begin{equation*}
        v_j(X'_j) \geq (s-s') + (|X_j\cap L| - l' + |X_i\cap L|)\cdot k = (l-l')\cdot k+(s-s') = v_i(T) - \mu.
    \end{equation*}
    
    Hence, the Nash welfare of the new allocation is at least
    \begin{align*}
        \mu\cdot (v_i(T) - \mu)
        & = \left( \frac{v_i(T)}{2} \right)^2 - \left( \frac{v_i(T)}{2}-\mu \right)^2, \\
        & > \left( \frac{v_i(T)}{2} \right)^2 - \left( \frac{v_i(T)}{2}-v_i(X_i) \right)^2, \\
        & = v_i(X_i)\cdot (v_i(T)-v_i(X_i)) = v_i(X_i)\cdot v_j(X_j).
    \end{align*}
    which leads to a contradiction.
\end{proof}

Next, we consider the case when $X_j$ contains small items.
Since the MNW allocation is EFX for bi-valued instances, we immediately have $v_i(X_j) \leq v_i(X_i) + 1$.

Then, for factored instances, we have 
$$
v_i(X_i) \geq \left\lfloor \frac{v_i(T)}{2} \right\rfloor \geq \mu,
$$ since all items have integer values.
Hence, we have the following.

\begin{theorem}\label{theorem:fb-mnwpmms}
    For any factored bi-valued instance, every MNW allocation is PMMS.
\end{theorem}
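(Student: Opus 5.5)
The plan is to fix two agents $i,j$, set $T=X_i\cup X_j$ and $\mu=\MMS_i(T,2)$, and show $v_i(X_i)\ge\mu$ by splitting on the composition of $X_j$, as in the roadmap.

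\emph{Reductions.} First dispose of the easy cases, exactly as in the setup above:
\begin{itemize}
\item If $v_i(X_i)\ge v_i(X_j)$, then $v_i(X_i)\ge v_i(T)/2\ge\mu$.
\item If $T$ contains only large items or only small items for $i$, then EFX gives $v_i(X_i)\ge v_i(g)\lfloor |T|/2\rfloor\ge\mu$.
\end{itemize}
So we may assume $v_i(X_i)<v_i(X_j)$, $l>0$ and $s>0$.

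\emph{All items of $X_j$ are large to $i$.} Here I would invoke Lemma~\ref{lemma:all_large}. That lemma uses only the MNW property and the common value $k$, so it applies to factored instances.

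\emph{$X_j$ contains an item that is small to $i$.} Since every MNW allocation is EFX for bi-valued instances (Amanatidis et al.), remove a small item $g\in X_j$ to get
$$v_i(X_i)\ge v_i(X_j)-1.$$
Combined with $v_i(X_i)<v_i(X_j)$, this gives $v_i(X_j)-1\le v_i(X_i)<v_i(X_j)$. Hence $v_i(X_i)\ge (v_i(T)-1)/2$.

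The factored assumption is used at this point. Every item value is $1$ or the integer $k$, so every bundle value is an integer. Therefore
$$v_i(X_i)\ge\left\lceil \frac{v_i(T)-1}{2}\right\rceil=\left\lfloor\frac{v_i(T)}{2}\right\rfloor.$$
On the other side, $\mu\le v_i(T)/2$, and $\mu$ is also an integer, so $\mu\le\lfloor v_i(T)/2\rfloor$. This gives $v_i(X_i)\ge\mu$. Since $i,j$ were arbitrary, the allocation is PMMS.

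The only delicate step is confirming that the EFX slack is exactly one small item. This holds because the removed $g$ has $v_i(g)=1$. Integrality then closes the remaining half-unit gap. This is exactly what fails for non-factored $k$, so no further obstacle is expected.
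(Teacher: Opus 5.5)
Your proof is correct and matches the paper's argument step for step. You use the same preliminary reductions and invoke Lemma~\ref{lemma:all_large} when every item of $X_j$ is large to $i$. Otherwise you combine EFX, with slack of one small item, and integrality of all bundle values to get $v_i(X_i)\ge\lfloor v_i(T)/2\rfloor\ge\mu$; your check that $\lceil (v_i(T)-1)/2\rceil=\lfloor v_i(T)/2\rfloor$ simply spells out the one-line step the paper states.
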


Crucially, for \emph{factored} bi-valued instances, this result can be strengthened to a computational guarantee.
By combining with a recent polynomial-time algorithm for computing MNW allocation for factored bi-valued instances~\cite{akrami2025maximizing}, our result implies that an allocation satisfying EFX, PMMS, $2/3$-GMMS, and PO can be computed efficiently in this setting.

Next, we consider non-factored instances, i.e., $k$ is not an integer. 
We first present an example showing that in this case the MNW allocation is at most $6/7$-PMMS (see Example~\ref{example:mnwnotpmms}).

\begin{example}\label{example:mnwnotpmms}
    Consider an instance where $n = 2$ and $m = 6$ as shown in Table~\ref{table:mnwnotpmms}. 
    It can be verified that the following allocation $\bX$ maximizes the Nash welfare: $X_1 = \{g_1, g_2\}$ and $X_2 = \{g_3, g_4, g_5, g_6\}$.
    However, we have $v_1(X_1) = 3 < \MMS_{1}(X_1 \cup X_2,2) = 3.5$, which implies that $\bX$ is $6/7$-PMMS.
\begin{table}[h]
    \centering
    \caption{Hard instance for the approximation of PMMS by MNW (bi-valued)} \label{table:mnwnotpmms}
    \renewcommand{\arraystretch}{1.15}
    \newcommand{\circval}[1]{\tikz[baseline=-0.6ex]{\node[draw,circle,inner sep=2.5pt] {$#1$};}}
    \setlength{\tabcolsep}{10pt}
    \begin{tabular}{lcccccc}
    \toprule
                & $g_1$ & $g_2$ & $g_3$ & $g_4$ & $g_5$ & $g_6$ \\
    \midrule
    Agent 1       & \circval{1.5} & \circval{1.5} &1 & 1 & 1 & 1\\
    Agent 2       & 1 & 1 & \circval{1} & \circval{1} & \circval{1} & \circval{1} \\
    \bottomrule
    \end{tabular}
\end{table}

\end{example}

The above example demonstrates the difference between factored and non-factored instances: from the perspective of agent $1$, while bundle $X_2$ contains small items, we do not have $v_1(X_1) \geq \mu$.\footnote{It suffices to consider the case where $X_2$ contains small items, as Lemma~\ref{lemma:all_large} remains applicable to bi-valued instances.}
However, we can still use the property of EFX to derive $v_i(X_j) \leq v_i(X_i) + 1$, which gives
\begin{equation*}
    \frac{v_i(X_i)}{\mu} \geq \frac{v_i(X_i)}{v_i(T)/2} \geq \frac{2v_i(X_i)}{2v_i(X_i) + 1}.
\end{equation*}

Observe that the above lower bound is at least $6/7$ when $v_i(X_i) \geq 3$.
In the following, we show that if $v_i(X_i) < 3$ then we have $v_i(X_i) \geq \mu$.

\begin{lemma} \label{lemma:smallPMMS}
    If $v_i(X_i) < 3$, then $v_i(X_i) \geq \mu$.
\end{lemma}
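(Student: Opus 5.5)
The plan is a finite case analysis. The hypothesis $v_i(X_i)<3$ together with $k>1$ leaves only a handful of possible compositions of $X_i$. We are in the situation fixed above: the instance is bi-valued with non-integer $k$, $l,s>0$, $v_i(X_i)<v_i(X_j)$, and (by Lemma~\ref{lemma:all_large}) $X_j$ contains at least one small item $g'$. Since the MNW allocation is EFX, $v_i(X_j-g')\le v_i(X_i)$, and hence $v_i(T)\le 2v_i(X_i)+1$ and $\mu\le v_i(X_i)+\tfrac12$. So it suffices to show that no value of the form $l'k+s'$ attainable as the smaller side of a $2$-partition of $T$ lies in the interval $(v_i(X_i),\,v_i(X_i)+\tfrac12]$. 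Where arithmetic alone allows such a value, we will instead exclude the configuration using the MNW property.

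Write $X_i$ as $a$ large and $b$ small items, with $ak+b<3$. Then $a\in\{0,1,2\}$, and the cases are as follows.

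\emph{$a=0$.} Here $v_i(X_i)=b\in\{1,2\}$ (note $X_i\neq\emptyset$). Because $l>0$, $X_j$ contains a large item, and removing $g'$ leaves value at least $k$. EFX then forces $k\le b$. This rules out $b=1$ and, for $b=2$, forces $X_j=\{g,g'\}$ with $g$ large. Then $T$ consists of one large and three small items, so $\mu=\max\{\min(k,3),\min(k+1,2)\}=2=v_i(X_i)$.

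\emph{$a=1$.} Here $v_i(X_i)=k+b$ with $b\in\{0,1\}$ and $k<3-b$. EFX bounds $X_j-g'$ by $k+b$, so $X_j$ consists either of at most one large item plus at most $b+1$ small items, or of only small items (about $k+b+1$ of them). For each composition I will list the $2$-partitions of $T$ and check that the best minimum is at most $k+b$. The only dangerous candidates are partitions such as ``two large items'' versus ``all small items'', whose smaller side equals $2k$ or a small integer slightly above $k+b$.

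\emph{$a=2$.} Here $k<3/2$ and $b=0$, so $v_i(X_i)=2k$ and $X_j-g'$ has value at most $2k$. The enumeration is analogous.

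I expect the main obstacle to be the sub-cases in which the numerics do allow $\mu\in(v_i(X_i),v_i(X_i)+\tfrac12]$. For example, with $a=1,b=0$ and $X_j$ consisting of two small items, the partition ``large plus one small'' versus ``one small'' is harmless, but other compositions with an integer such as $2$ lying just above $k$ are not. In each such case I would first try a Nash-welfare-improving exchange between $i$ and $j$: move a large item of $X_i$ to $X_j$ and some small items of $X_j$ to $X_i$, or the reverse, so that agent $i$ reaches the balanced value $\mu$. Agent $j$ also has values in $\{1,k\}$, so $v_j(X_j)\le v_i(X_i)+1$ can be obtained exactly as in Lemma~\ref{lemma:v_j(X_j)<=v_i(X_i)+1}. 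The exchange then increases the product $v_i\cdot v_j$ by the same ``closer to balanced'' computation as in Lemma~\ref{lemma:all_large}, because after the exchange agent $i$ holds exactly $\mu$ while agent $j$ keeps at least $v_i(T)-\mu$ (items large to $i$ are treated as large to $j$ where Lemma~\ref{lemma:MNW_pb_dif} forces it). If a configuration resists this argument, I would fall back on Lemma~\ref{lemma:MNW_pb_dif} applied to the specific item being moved to pin down $v_j$ on it, and then redo the welfare comparison.
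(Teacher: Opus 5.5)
Your framework is the paper's: EFX gives $v_i(T)\le 2v_i(X_i)+1$, followed by a finite case analysis on the composition of $X_i$, with an MNW exchange wherever arithmetic alone does not suffice. Your case $a=0$ is correct. The gap is that the remaining cases are only promised, and the one case that needs real work is neither identified nor handled by a valid argument.

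In fact $a=2$ and $a=1,b=0$ are settled by arithmetic alone. For $a=1,b=0$, $X_j$ is either one large plus one small item or exactly $\lceil k\rceil$ small items, and $\mu=k$ in both cases, so the danger you flag there does not exist. The only problematic configuration is $a=b=1$, which forces $k\in(1,2)$, together with $X_j=\{g,s_1,s_2\}$ where $g$ is large and $s_1,s_2$ are small. Then $T$ has two large and three small items, and $\mu=\min\{2k,3\}>k+1=v_i(X_i)$.

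Your exchange recipe does not work for this configuration as stated. When $k\in(3/2,2)$ we have $\mu=3$, and the only bundle of value exactly $\mu$ is three small items. So "giving $i$ exactly $\mu$" means handing both large items to $j$, including $i$'s own large item $h$. Lemma~\ref{lemma:MNW_pb_dif} gives lower bounds on $j$'s values only for items in $X_j$, so nothing forces $v_j(h)=k$. If $v_j(h)=1$ and $v_j(X_j)=k+2$, the new product $3(k+1)$ is smaller than the old product $(k+1)(k+2)$.

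Likewise, the bound $v_j(X_j)\le v_i(X_i)+1$ cannot be imported verbatim from Section~\ref{ssec:pb_EFX}. There, the existence of an item in $X_j$ that is small to $j$ was derived from the standing assumption that $i$ is not $2/3$-EFX towards $j$.

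What is needed is the opposite swap, which overshoots $\mu$: move $g$ to $i$ and move $i$'s small item to $j$. Let $x=v_j(X_j-g)\ge 2$, and note $v_j(g)\le k$. The Nash welfare changes from at most $(k+1)(x+k)$ to at least $2k(x+1)$. The difference is $2k(x+1)-(k+1)(x+k)=(k-1)(x-k)>0$, since $x\ge 2>k$. This is exactly the paper's argument for the case $v_i(X_i)=k+1$, and it avoids any need to control $v_j(h)$ or to bound $v_j(X_j)$ by $v_i(X_i)+1$.
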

\begin{proof}
    Note that in this case $X_i$ contains at most two items and $X_j$ contains at most three items since $v_i(X_j) \leq v_i(X_i) + 1 < 4$.
    Depending on whether the items in $X_i$ are small or large, there are five cases, with the corresponding $v_i(X_i) \in \{ 1,2,k,k+1,2k \}$ (recall that $k$ is not an integer).
    
    Assume for the sake of contradiction that $v_i(X_i) < \mu$.
    Observe that if $|X_i|\geq |X_j|-1$ and $v_i(g) \geq v_i(g')$ for all $g\in X_i$ and $g'\in X_j$, then we have $v_i(X_i) \geq \mu$ because there must exist a bundle in the MMS partition of $\MMS_i(T,2)$ with size $|X_i|$, whose value is at most $v_i(X_i)$.
    This enables us to quickly rule out two cases: when $v_i(X_i)=1$ ($X_i$ contains a single small item), and when $v_i(X_i) = 2k$ ($X_i$ contains two large items).
    
    For the remaining three cases, we use $v_i(X_i) < \mu < v_i(X_j)$ to derive contradictions.
    Note that both $\mu$ and $v_i(X_j)$ take values in the form of $a\cdot k + b$ for some integers $a,b\geq 0$. 
    \begin{itemize}
        \item When $v_i(X_i) = 2$, i.e., $X_i$ contains two small items, we have $v_i(X_j) \leq 3$.
        Note that $X_j$ must contain at least one large item as $l>0$.
        Therefore we have $v_i(X_j) = k+1$ with $k \leq 2$.
        However, there is no feasible value (in the form of $a\cdot k + b$) for $\mu\in (2,k+1)$.
        
        \item When $v_i(X_i) = k$, i.e., $X_i$ contains a single large item, we have $v_i(X_j) \leq k+1$.
        To satisfy $\mu \in (k,k+1)$, we must have $\mu = \lceil k \rceil = \lfloor v_i(X_j) \rfloor$.
        However, it is clearly impossible to partition $T$ into two bundles with value at least $\lceil k \rceil$.

        \item When $v_i(X_i) = k+1$, i.e., $X_i$ contains a large item and a small item, we have $v_i(X_j)\leq k+2$, which implies $\mu \in \{3,2k\}$ and $k\in(1,2)$.
        Furthermore, $X_j$ contains one large item $g$ and two small items.
        Now, consider the allocation obtained by swapping the large item $g\in X_j$ with the small item in $X_i$.
        Let $x = v_j(X_j-g) \geq 2$.
        Then we have a contradiction as the Nash welfare of the new allocation is strictly larger:
        \begin{align*}
            2k\cdot (x + 1) & = \left( \frac{x + 1}{2} + k \right)^2 - \left( \frac{x + 1}{2} - k \right)^2 \\
            & > \left( \frac{x + 1}{2} + k \right)^2 - \left( \frac{x + 1}{2} - 1 \right)^2 \\
            & = (k+1)\cdot (x+k) \ge v_i(X_i)\cdot v_j(X_j),
        \end{align*}
        
        where the first inequality holds for all $x\geq 2 > k$.
    \end{itemize}

    Therefore, we have contradictions in all cases, which implies $v_i(X_i)\geq \mu$.
\end{proof}

In summary, we establish the following theorem.

\begin{theorem}\label{theorem:bi-mnwpmms}
    Given any bi-valued instance, the MNW allocation is $6/7$-PMMS, and the approximation guarantee is tight.
\end{theorem}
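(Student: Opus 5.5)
The theorem assembles results already established in this section, so the plan is to stitch them together for a fixed pair of agents $i,j$. Let $T = X_i\cup X_j$ and $\mu = \MMS_i(T,2)$; the goal is $v_i(X_i)\ge \frac{6}{7}\mu$. Tightness is supplied by Example~\ref{example:mnwnotpmms}, where $v_1(X_1)=3$ and $\mu=3.5$, giving ratio exactly $6/7$. So only the lower bound needs an argument.

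First dispose of the easy cases. If $v_i(X_i)\ge v_i(X_j)$, then $v_i(X_i)\ge v_i(T)/2\ge\mu$. If all items of $T$ have the same value to $i$, EFX (which MNW satisfies on bi-valued instances by Amanatidis et al.) gives $v_i(X_i)\ge v_i(g)\lfloor |T|/2\rfloor\ge\mu$. If every item of $X_j$ is large to $i$, Lemma~\ref{lemma:all_large} gives $v_i(X_i)\ge\mu$ directly. This step uses only the MNW property, not factoredness.

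The remaining case is when $X_j$ contains an item small to $i$. Removing that item and applying EFX gives $v_i(X_j)\le v_i(X_i)+1$. Hence $\mu\le v_i(T)/2\le v_i(X_i)+\tfrac12$, so
\[
\frac{v_i(X_i)}{\mu}\ \ge\ \frac{2v_i(X_i)}{2v_i(X_i)+1}.
\]
This ratio is increasing in $v_i(X_i)$ and equals $6/7$ at $v_i(X_i)=3$, so it is at least $6/7$ whenever $v_i(X_i)\ge 3$. If $v_i(X_i)<3$, Lemma~\ref{lemma:smallPMMS} gives $v_i(X_i)\ge\mu$. That lemma is stated for non-integral $k$; for integral $k$ we can instead use Theorem~\ref{theorem:fb-mnwpmms}, which gives $v_i(X_i)\ge\mu$ outright.

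Combining all cases, $v_i(X_i)\ge\frac67\mu$ for every pair of agents, which is the claim. The only delicate point is the small-value regime $v_i(X_i)<3$, where the averaging bound is too weak and the discrete case analysis of Lemma~\ref{lemma:smallPMMS} is required. That lemma is already established, so here the only thing to verify is that its hypotheses ($X_j$ contains a small item and $l,s>0$) match the residual case.
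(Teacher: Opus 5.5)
Your proposal is correct and follows the paper's proof essentially step for step. Like the paper, you dispose of the trivial cases and apply Lemma~\ref{lemma:all_large} when $X_j$ has only large items. Otherwise you use the EFX bound $v_i(X_j)\le v_i(X_i)+1$ to get the ratio $2v_i(X_i)/(2v_i(X_i)+1)\ge 6/7$ for $v_i(X_i)\ge 3$, cover $v_i(X_i)<3$ by Lemma~\ref{lemma:smallPMMS} (or by integrality in the factored case), and take tightness from Example~\ref{example:mnwnotpmms}.
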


Finally, we consider the personalized bi-valued instances.
Recall that by Theorem~\ref{theorem:pb_mnwmms}, any MNW allocation is at least $2/3$-GMMS (and hence $2/3$-PMMS).
Therefore, it suffices to present an example showing that the MNW allocation may not guarantee an approximation ratio better than $2/3$ for personalized bi-valued instances (see Example~\ref{example:pb_mnwpmms}).
Note that the example demonstrates that Lemma~\ref{lemma:all_large} (showing that $v_i(X_i)\geq \mu$ if all items in $X_j$ are large) does not hold for personalized bi-valued instances.

\begin{example} \label{example:pb_mnwpmms}
Consider the following instance with $n = 2$ agents and $m = 4$ goods $\{g_1, g_2, g_3, g_4\}$.
Each agent has additive valuations as shown in Table~\ref{table:pb_mnwpmms}. 
It can be verified that the following allocation $\bX$ maximizes the Nash welfare: $X_1 = \{g_3, g_4\}$ and $X_2 = \{g_1, g_2\}$.
However, we have $v_1(X_1) = 2 < \MMS_{1}(X_1 \cup X_2,2) = 3$, which implies that $\bX$ is $2/3$-PMMS.
\begin{table}[h]
    \centering
    \caption{Hard instance for the approximation of PMMS by MNW (personalized bi-valued)}\label{table:pb_mnwpmms}
    \renewcommand{\arraystretch}{1.15}
    \newcommand{\circval}[1]{\tikz[baseline=-0.6ex]{\node[draw,circle,inner sep=2.5pt] {$#1$};}}
    \setlength{\tabcolsep}{10pt}
    \begin{tabular}{lcccccc}
    \toprule
                & $g_1$ & $g_2$ & $g_3$ & $g_4$ \\
    \midrule
    Agent 1       & 2 & 2 & \circval{1} & \circval{1}\\
    Agent 2        & \circval{4} & \circval{4} & 1 & 1\\
    \bottomrule
    \end{tabular}
\end{table}
\end{example}

Hence, we arrive at the following theorem.

\begin{theorem} \label{theorem:pb_mnwpmms}
    For personalized bi-valued instances, every MNW allocation is $2/3$-PMMS, and the approximation guarantee is tight.
\end{theorem}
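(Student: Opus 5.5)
The statement has two halves: a lower bound of $2/3$ and a matching tight example. Both are essentially in place already.

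For the lower bound, I would invoke Theorem~\ref{theorem:pb_mnwmms}. It states that every MNW allocation is $2/3$-GMMS for personalized bi-valued instances. By definition, $\alpha$-GMMS applied to the two-agent subset $N'=\{i,j\}$ gives exactly
$$v_i(X_i)\ge \alpha\cdot \MMS_i(X_i\cup X_j,2),$$
so every MNW allocation is $2/3$-PMMS. This direction needs no new argument.

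For tightness, I would use Example~\ref{example:pb_mnwpmms} and check two facts.

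\textbf{The allocation is MNW.} Agent $1$ values $g_1,g_2$ at $2$ and $g_3,g_4$ at $1$. Agent $2$ values $g_1,g_2$ at $4$ and $g_3,g_4$ at $1$. The given allocation has Nash welfare $2\cdot 8=16$. Any allocation giving agent $2$ at most one of $g_1,g_2$ leaves her at most $4+2=6$ while agent $1$ gets at most $2+2+\dots$. A direct enumeration over the few distinct splits confirms $16$ is the maximum:
\begin{itemize}
\item giving agent $1$ one large item and both small items yields $4\cdot 4=16$, a tie;
\item giving agent $1$ one large item and one small item yields $3\cdot 5=15$;
\item giving agent $1$ both large items yields $4\cdot 2=8$.
\end{itemize}
The tie is harmless, since the claim is that \emph{every} MNW allocation is $2/3$-PMMS and tightness only needs one MNW allocation achieving ratio $2/3$.

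\textbf{The PMMS ratio is $2/3$.} $\MMS_1(M,2)=3$ via the partition $\{g_1,g_3\},\{g_2,g_4\}$, while $v_1(X_1)=2$, giving ratio exactly $2/3$.

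The only step needing any care is verifying the MNW claim in the example, including the tie. It is routine enumeration, so I expect no real obstacle. The theorem follows by combining the two parts.
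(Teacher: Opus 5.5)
Your proposal is correct and is essentially the paper's own argument: the $2/3$ lower bound is Theorem~\ref{theorem:pb_mnwmms} specialized to the two-agent group $N'=\{i,j\}$, and tightness comes from Example~\ref{example:pb_mnwpmms}, whose MNW claim (including the tie at Nash welfare $16$) you verify correctly. The garbled sentence ending in ``$2+2+\dots$'' should simply be dropped. Your enumeration already establishes the maximum once you add the remaining splits: agent~$1$ receiving only $\{g_3\}$ gives Nash welfare $9$, and agent~$1$ receiving a lone large item gives $12$.
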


\section{Upper Bounds for Tri-valued Instances}
\label{section:tri_upper}

In this section, we show upper bounds on the approximation ratios of MNW allocations with respect to EFX, MMS, and PMMS for tri-valued instances.

\begin{definition}[Tri-valued Instances]
    An instance is called tri-valued if there exist constants $a > b > c \geq 0$ such that for any agent $i \in N$ and item $g \in M$, we have $v_i(g) \in \{a, b, c\}$.
\end{definition}

Then we demonstrate that for tri-valued instances, the upper bound for EFX matches the results for general instances established by Amanatidis et al.~\cite{journals/tcs/AmanatidisBFHV21}.\footnote{Prior work by Amanatidis et al.~\cite{journals/tcs/AmanatidisBFHV21} and Caragiannis et al.~\cite{conf/ec/CaragiannisGH19} showed that MNW allocations are not necessarily EFX for tri-valued instances, we extend this result by showing that no constant-factor approximation is achievable.}
Notably, our counterexample requires only three values, while their construction employs five distinct values.

\begin{example}
    Consider an instance with $n = 2$ agents and $m = 3$ goods $\{g_1, g_2, g_3\}$. 
    The agents' additive valuations are detailed in Table~\ref{table: tri_efx}, where $k$ is a sufficiently large number. 
    It can be verified that the allocation $\mathbf{X}$ given by $X_1 = \{g_1, g_2\}$ and $X_2 = \{g_3\}$ maximizes the Nash social welfare.
    However, consider agent 2, the EFX approximation ratio is bounded by:
    $$
        \frac{v_2(X_2)}{v_2(X_1 - g_2)} = \frac{1}{k} \xrightarrow[k\to\infty]{} 0,
    $$
    which implies that MNW allocations fail to provide any constant approximation guarantee for EFX in this instance.
\begin{table}[h]
    \centering
    \caption{Upper bound for EFX under tri-valued instances} 
    \label{table: tri_efx}
    \renewcommand{\arraystretch}{1.2} 
    \newcommand{\circval}[1]{\tikz[baseline=-0.6ex]{\node[draw,circle,inner sep=1.5pt] {$#1$};}} 
    \setlength{\tabcolsep}{12pt} 
    \begin{tabular}{lccc}
    \toprule
                & $g_1$ & $g_2$ & $g_3$\\
    \midrule
    Agent 1     & \circval{k} & \circval{1} & $1/k$ \\
    Agent 2     & $k$          &$1/k$           & \circval{1} \\
    \bottomrule
    \end{tabular}
\end{table}

\end{example}

Next, we address the upper bound for MMS.
Our construction adapts the approach of Caragiannis et al.~\cite{journals/teco/CaragiannisKMPS19} to the tri-valued setting.
Fix $n \geq 2$, let $\pi_n = 2/(1 + \sqrt{4n - 3})$ and $k = 1/\pi_n$.
Note that $k$ satisfies $k^2 - k - (n - 1) = 0$ by rearranging, i.e., $k = 1 + (n-1)/k$.

In the following, we construct a tri-valued instance in which we have $v_i(g) \in \{1/k, 1, k\}$ for every agent $i \in N$ and item $g \in M$.
Let the goods be $M = \{x\} \cup \{l_i, s_i: i = 2, 3, \dots, n\}$, resulting in a total of $|M| = 2n - 1$ items, where $l_i$ and $s_i$ denote the large item (with value $k$) and the small item (with value $1$) for every agent $i \geq 2$, respectively.
The other items are tiny (having value $1/k$) to agent $i$.
Agent $1$ values item $x$ at $1$, $l_i$ at $k$ and $s_i$ at $1/k$, for all $i\geq 2$ (see details in Table~\ref{table: mms_bound}).
Consider the allocation $\mathbf{X}$ where $X_1 = \{x\}$ and $X_i = \{l_i, s_i\}$ for every $i \geq 2$. 
It can be verified that this allocation maximizes the Nash social welfare.

For agent 1, we have $\MMS_1 = k$. 
This is because the goods can be partitioned into $n$ bundles each worth exactly $k$: $n-1$ bundles with a single large item $\{l_i\}$ for $i \geq 2$ (value $k$) and one bundle $\{x, s_2, \dots, s_n\}$ (value $1 + (n-1)/k = k$).
Consequently, the approximation ratio is:
$$
    \frac{v_1(X_1)}{\MMS_1} = \frac{1}{k} = \pi_n = \frac{2}{1 + \sqrt{4n - 3}}.
$$

This implies that MNW allocations cannot guarantee an MMS approximation ratio better than $2/(1 + \sqrt{4n - 3})$ for tri-valued instances.
\begin{table}[h]
    \centering
    \caption{Upper bound for MMS under tri-valued instances} 
    \label{table: mms_bound}
    \renewcommand{\arraystretch}{1.3} 
    \newcommand{\circval}[1]{\tikz[baseline=-0.6ex]{\node[draw,circle,inner sep=1.5pt] {\small $#1$};}}
    \setlength{\tabcolsep}{6pt} 
    \begin{tabular}{lcccccc}
    \toprule
                & $x$ & $l_2$ & $s_2$ & $\cdots$ & $l_n$ & $s_n$ \\
    \midrule
    Agent 1     & \circval{1} & $k$ & $1/k$ & $\cdots$ & $k$ & $1/k$ \\
    Agent 2     & $1/k$ & \circval{k} & \circval{1} & $\cdots$ & $1/k$ & $1/k$ \\
    $\vdots$    & $\vdots$ & $\vdots$ & $\vdots$ & $\ddots$ & $\vdots$ & $\vdots$ \\
    Agent $n$   & $1/k$ & $1/k$ & $1/k$ & $\cdots$ & \circval{k} & \circval{1} \\
    \bottomrule
    \end{tabular}
    %
\end{table}

Finally, note that PMMS and MMS are equivalent when $n=2$.
Therefore, the above result immediately implies an upper bound of $\pi_2 = 2/(1+\sqrt{5})\approx 0.618$ for the approximation ratio of MNW to PMMS.
\medskip

The construction above yields the following theorem for tri-valued instances.
\begin{theorem} \label{theorem: trivalue}
    For tri-valued instances, the approximation guarantees of MNW allocations with respect to EFX, MMS, and PMMS coincide with those established for general instances.
\end{theorem}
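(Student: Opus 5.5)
The theorem has two halves. The matching lower bounds are already known for general additive instances, and tri-valued instances are a special case, so they carry over directly: MNW is EF1 and $0.618$-PMMS, and it is $2/(1+\sqrt{4n-3})$-MMS/GMMS by Caragiannis et al.; the EFX lower bound is trivially $0$. What remains is to show the general upper bounds are already attained within the tri-valued class. The two constructions of this section do exactly that, so the proof verifies them and then assembles the pieces.

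\textbf{EFX instance.} First I check that $X_1=\{g_1,g_2\}$, $X_2=\{g_3\}$ is an MNW allocation of the three-item instance. Its Nash welfare is $(k+1)\cdot 1$. I enumerate the allocations in which both agents get positive value. Giving $g_1$ to agent $2$ yields at most $k\cdot(1+1/k)=k+1$ if agent $2$ takes $\{g_1\}$ alone, and less if agent $2$ takes more. Keeping $g_1$ with agent $1$ and moving $g_2$ or $g_3$ around yields at most $(k+1/k)\cdot 1/k$ or $k\cdot(1/k+1)=k+1$. Ties are harmless, since we only need that this allocation is \emph{an} MNW allocation. The ratio $v_2(X_2)/v_2(X_1-g_2)=1/k\to 0$ then shows that no constant approximation of EFX is possible.

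\textbf{MMS instance.} This is the main obstacle: I must show $X_1=\{x\}$, $X_i=\{l_i,s_i\}$ maximizes Nash welfare, with value $1\cdot(k+1)^{n-1}$. My plan follows Caragiannis et al.'s argument. In any allocation with positive welfare, each agent $i\ge2$ has value $k$ only from $l_i$, $1$ only from $s_i$, and $1/k$ per other item. Agent $1$ values the items $l_i$ at $k$. Suppose agent $1$ takes a set $A$ of $a$ large items $l_i$. Each such agent $i$ then loses $k$ and retains at most $1+$(tiny items). The key step is a local exchange argument: take an optimal allocation, compare each agent's value against $k+1$, and use AM-GM on the total of the "tiny" contributions. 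A tiny item is worth $1/k$ to its holder, whereas returning it to its owner's bundle in the intended allocation is worth at least as much.

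Concretely, I would show that moving any item to the agent who values it most does not decrease the Nash welfare once the given allocation's structure is fixed. The real check is agent $1$ stealing $a$ of the items $l_i$. The product factor becomes roughly $(ak+\dots)/1$ times $\prod(1+\dots)/(k+1)$, and this is at most $1$ using $k=1+(n-1)/k$, i.e.\ $k^2=k+n-1$. I expect this inequality, $ak\cdot\bigl(\tfrac{1+\text{extra}}{k+1}\bigr)^a\le 1$ suitably bounded, to be the delicate computation. If it gets messy, I would fall back on citing the identical verification in Caragiannis et al., since the valuations coincide with theirs in the relevant entries.

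Given MNW optimality, $\MMS_1=k$ follows from the displayed partition, because each part has value $k$ and the total is $nk$. Hence the MMS ratio is $1/k=\pi_n$, which already bounds GMMS. Specializing to $n=2$, where PMMS coincides with MMS, gives $\pi_2\approx0.618$. Combining these upper bounds with the general lower bounds proves that the guarantees coincide with those for general instances.
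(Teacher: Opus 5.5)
Your proposal follows the paper's proof: the same two tri-valued instances and the same ratio computations ($1/k\to 0$ for EFX, $1/k=\pi_n$ for MMS, and the $n=2$ specialization for PMMS), with the matching lower bounds inherited from general additive valuations. The paper itself only asserts ``it can be verified'' that the displayed allocations are MNW, so nothing the paper supplies is missing from your write-up. However, neither of your two suggested ways of justifying the $n$-agent MNW claim works as stated.

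\emph{The fallback citation.} You cannot simply cite Caragiannis et al.'s verification. The paper presents its instance as an \emph{adaptation} of theirs to the tri-valued setting, and the entries that drive the MNW check are exactly the foreign-item entries: here every agent $i\ge 2$ values every foreign item at $1/k$. Raising such entries leaves the Nash welfare of $\mathbf{X}$ unchanged, since no agent $i\ge2$ holds a foreign item in $\mathbf{X}$, but can only help competing allocations. For example, $Y_1=\{l_2\}$, $Y_2=\{x,s_2\}$ (others unchanged) ties with $\mathbf{X}$ precisely because $v_2(x)=1/k$.

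\emph{The local-exchange plan.} Checking single moves, or noting that each item already sits with an agent who values it most, gives only necessary conditions. The check is also tight in several directions: $Y_1=\{x,l_2\},Y_2=\{s_2\}$ and $Y_1=\{x,s_2\},Y_2=\{l_2\}$ tie as well. So you need a bound valid for every allocation $Y$.

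A clean such bound is a charging argument. Let $\tau_i=|Y_i\setminus\{l_i,s_i\}|$. For each $i\ge 2$, Bernoulli's inequality (with $k\ge1$) gives, in all four cases of which own items are kept,
\[
\frac{v_i(Y_i)}{k+1}\le\Bigl(\frac{1}{k+1}\Bigr)^{[l_i\notin Y_i]}\Bigl(\frac{k}{k+1}\Bigr)^{[s_i\notin Y_i]}\Bigl(1+\frac1k\Bigr)^{\tau_i}.
\]
Write $v_1(Y_1)=\epsilon+ak+b/k$, where $\epsilon=[x\in Y_1]$ and $a,b$ count the $l$'s and $s$'s in $Y_1$. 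Multiply the displayed bounds over $i\ge2$ and by $v_1(Y_1)$, then pair each factor with the item causing it:
\begin{itemize}
\item each $s_i$ held by another agent $j\ge2$ contributes $\frac{k}{k+1}(1+\frac1k)=1$;
\item each such $l_i$ contributes $\frac1k\le 1$;
\item what remains is $(\epsilon+ak+b/k)(k+1)^{-a}(1+\frac1k)^{1-\epsilon-b}$.
\end{itemize}
The last factor is at most $1$ by the same binomial estimates, split into the cases $\epsilon=1$; $\epsilon=0,b\ge1$; and $\epsilon=0,b=0$. Hence $\prod_i v_i(Y_i)\le (k+1)^{n-1}$ for every allocation $Y$, for any $k\ge 1$.

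A harmless slip in your EFX check: agent $2$ taking $\{g_1,g_3\}$ also gives Nash welfare exactly $k+1$, not less. This does not affect your conclusion.
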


\section{Conclusion and Open Problems}

In this paper, we explore the boundaries of the ``unreasonable fairness'' of MNW allocations. 
Our investigation reveals a sharp contrast: while tri-valued instances exhibit the same theoretical limitations as the general setting, the bi-valued landscape offers significantly stronger guarantees.
By establishing tight characterizations for EFX, PMMS, and GMMS in the bi-valued domain, we systematically chart the landscape of these fairness guarantees, extending the insights of Caragiannis et al.~\cite{journals/teco/CaragiannisKMPS19}. 
The comparison with leximin and maximum $p$-mean welfare for $p\leq 1$ and $p\neq 0$ further highlights that these guarantees are specific to MNW and do not automatically extend to nearby egalitarian objectives.
A natural next step is to move from existence to computation, particularly focusing on finding allocations that satisfy these strengthened EFX, GMMS, and PMMS guarantees simultaneously, with or without PO.

\section*{Declaration on the Use of Generative AI}

All central research ideas, problem formulations, and mathematical results presented in this paper originated with the authors.
Every theorem, proof, and technical claim included in the final manuscript was independently verified by the authors, who take full responsibility for the correctness, originality, and presentation of the work.

\bibliographystyle{abbrv}
\bibliography{efx}

\newpage
\appendix

\section{Leximin Allocations for Bi-valued Instances} 
\label{section:leximin}

In this section, we turn our attention to leximin allocations, which is formally defined as one that lexicographically maximizes the vector of agents' utilities sorted in non-decreasing order.
Specifically, we examine their approximation guarantees regarding EFX, GMMS, and PMMS for bi-valued instances.
Similarly, we assume $m \geq n$ in the following analysis.
We show that even for the most restrictive factored instances, the approximation ratio by the leximin allocation is bad.
We demonstrate that the fairness guarantees of leximin allocations become arbitrarily bad when $k$ is large. 
This finding stands in sharp contrast to the strong guarantees provided by MNW allocations.

\begin{theorem}\label{theorem:leximin}
For factored bi-valued instances, every leximin allocation is $1/k$-EFX, $1/k$-GMMS, and $2/(k+1)$-PMMS, and the approximation guarantees are tight.
\end{theorem}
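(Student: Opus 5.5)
The plan has two halves: lower bounds showing every leximin allocation meets the stated ratios, and explicit factored instances showing those ratios cannot be improved. Throughout, $v_i(g)\in\{1,k\}$ with $k\ge 2$ an integer and $m\ge n$.

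\textbf{Lower bounds.} First I will show that every bundle in a leximin allocation $\bX$ is nonempty, so $v_i(X_i)\ge 1$ for every agent $i$. If some bundle were empty, then since $m\ge n$ another bundle would hold at least two items. Moving one item from that bundle to the empty one raises the minimum utility from $0$ to a positive value. It also keeps every previously positive agent positive, because that agent still holds an item worth at least $1$. This is a leximin improvement, a contradiction.

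Next I will prove that for all agents $i,j$, either $v_i(X_j)\le k\cdot v_i(X_i)$, or removing any single item from $X_j$ leaves $v_i(X_j-g)\le k\cdot v_i(X_i)$. The intended argument: if $v_i(X_j-g)>k\cdot v_i(X_i)$ for some $g\in X_j$, take a suitable item $g'\in X_j$ and move it to $i$. Agent $i$ gains at least $1$. Agent $j$ keeps $v_j(X_j-g')$. I will compare $v_j(X_j)-v_j(g')\ge v_j(X_j)-k$ against $v_i(X_i)$, using that $|X_j|$ is large relative to $v_i(X_i)$ because $v_i(X_j-g)>k\,v_i(X_i)$ and each item is worth at most $k$ to $i$. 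The target is to show $j$ stays strictly above $i$'s old utility. Then the transfer lexicographically improves the sorted vector, contradicting leximin.

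This is the main obstacle. Agent $j$ values items on a different scale, so I must choose $g'$ small for $j$ when possible, and otherwise use that $j$ holds at least two items worth $k$. Once the claim holds, I get $v_i(X_i)\ge v_i(X_j-g)/k$ for every $g\in X_j$, which is $1/k$-EFX. Feeding this into the argument gives $v_i(X_j)\le k\,v_i(X_i)+k$.

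For PMMS I will bound $\mu=\MMS_i(X_i\cup X_j,2)\le \tfrac12\,v_i(X_i\cup X_j)$. I will use the sharper fact that $v_i(X_j)\le k\,v_i(X_i)+1$ whenever the transferred item is small to $i$. This yields $\mu\le \tfrac{k+1}{2}\,v_i(X_i)$, i.e.\ $2/(k+1)$-PMMS; the case where all items are large will be handled by integrality. For GMMS, on a subset $N'$ I will apply the same $1/k$ comparison against the richest bundle. Since $\MMS_i$ is at most the average, $\MMS_i\le \max_j v_i(X_j-g)$ up to one item, and the leximin restriction to $N'$ is again leximin. I will check the boundary case of bundles with a single item separately.

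\textbf{Tightness.} I will give two-agent factored examples.

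EFX and GMMS: agent~1 values two items at $k$ and many items at $1$. Agent~2 values everything at $1$. Leximin gives agent~1 a single large item and gives agent~2 the rest, chosen so that agent~1's envy ratio toward agent~2's bundle, after removing an item, tends to $1/k$.

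PMMS: agent~1 has two large items and $k$ small items, with the small items being the only ones agent~2 values at $1$. Leximin gives agent~1 only the smallest-value allocation consistent with equalizing agent~2's utility. Then $v_1(X_1)=2$ while $\MMS_1=k+1$, or a scaled variant of this, giving ratio $2/(k+1)$.
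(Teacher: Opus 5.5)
\paragraph{EFX lower bound.} This part is fine, and it is easier than you expect. If $v_i(X_j-g)>k\,v_i(X_i)$, then $|X_j|\ge v_i(X_i)+2$. So moving \emph{any} item of $X_j$ to $i$ leaves $j$ with at least $v_i(X_i)+1$ and lifts $i$ above $v_i(X_i)$, which is a leximin improvement. No careful choice of $g'$ is needed. This is the paper's bundle-size observation $|X_j|\le \min_\ell v_\ell(X_\ell)+1$.

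\paragraph{Gap in the PMMS lower bound.} The PMMS and GMMS lower bounds do not follow from your plan. For PMMS you need exactly $v_i(X_j)\le k\,x$, where $x=v_i(X_i)$.
\begin{itemize}
\item Your bound $v_i(X_j)\le kx+1$ only gives $\mu\le((k+1)x+1)/2$. Integrality does not close this when $k$ is even and $x$ is odd. For example, take $k=2$, $X_i$ one small item, and $X_j$ one large and one small item. Then $\mu=2>3/2$.
\item When $X_j$ is all large for $i$, the EFX bound still allows $|X_j|=x+1$. For example, take $X_i$ one small item and $X_j$ two large items. Then $\mu=k$ and the ratio is $1/k$.
\end{itemize}
The missing step is a second exchange:
\begin{itemize}
\item If $v_i(X_j)>kx$, then $|X_j|=x+1$.
\item Leximin forces $v_j(X_j)=|X_j|$. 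Otherwise, moving $j$'s least-valued item to $i$ leaves $j$ with either $v_j(X_j)-1\ge x+1$ or $kx\ge x+1$.
\item Now move an item of $X_j$ that is large for $i$. One exists, since otherwise $v_i(X_j)=x+1\le kx$.
\item This replaces the utility pair $\{x,x+1\}$ by $\{x,x+k\}$. The pair's minimum does \emph{not} rise here, so your transfer argument never produces this improvement.
\end{itemize}

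\paragraph{Gap in the GMMS lower bound.} Bounding $\MMS_i$ by the average and then by the richest bundle minus an item loses an additive $k$. You get only $\MMS_i\le kx+k$, which is ratio $1/(2k)$ at $x=1$. What works is counting:
\begin{itemize}
\item Every bundle has at most $x+1$ items, and $|X_i|\le x$. So for any group $N'\ni i$, the average bundle size over $N'$ is below $x+1$.
\item Some part of any $|N'|$-partition of $\bigcup_{j\in N'}X_j$ therefore has at most $x$ items, each worth at most $k$.
\item Hence the group maximin share is at most $kx$.
\end{itemize}

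\paragraph{Tightness instances.} None of your three instances works.

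\textbf{GMMS.} No two-agent instance can witness $1/k$ for GMMS. With $n=2$, GMMS coincides with PMMS, and the PMMS bound you are proving gives $2/(k+1)>1/k$. The paper needs many agents:
\begin{itemize}
\item $n-1$ agents value all $m=n+kn(n-1)$ items at $1$, and one agent values all of them at $k$.
\item Leximin gives everyone utility $kn$, so the last agent holds $n$ items, while $\MMS_n=k(1+k(n-1))$.
\item The ratio $n/(1+k(n-1))$ tends to $1/k$ only as $n\to\infty$.
\end{itemize}

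\textbf{EFX.} In your instance agent 1 has only two items worth $k$, so $v_1(X_2)\le v_2(X_2)+2(k-1)$. The bundle-size bound gives $v_1(X_1)\ge v_2(X_2)-1$. With many unit items, her ratio therefore tends to $1$, not $1/k$. The envious agent must be the one with uniformly high values. With $t(k+1)$ items, where agent 1 values all at $1$ and agent 2 values all at $k$, leximin gives agent 2 only $t$ items. Her ratio is then $t/(tk-1)\to 1/k$.

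\textbf{PMMS.} Your instance gives both agents the same valuation: $k$ on the two large items and $1$ elsewhere. Leximin then gives each agent one large item and about half of the small ones, which is exactly PMMS. The paper's witness is the $n=2$ case of the GMMS instance:
\begin{itemize}
\item There are $2k+2$ items.
\item Agent 2 receives two items, worth $2k$ to her.
\item $\MMS_2=k(k+1)$, so her ratio is $2/(k+1)$.
\end{itemize}
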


\subsection{Fairness Guarantee for EFX}

We first consider the EFX guarantee of leximin allocations, showing that leximin allocation is $1/k$-EFX for factored bi-valued instances.

\begin{lemma}\label{lemma:leximin_EFX}
For factored bi-valued instances, every leximin allocation is $1/k$-EFX, and the approximation guarantee is tight.
\end{lemma}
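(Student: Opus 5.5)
Both directions of Lemma~\ref{lemma:leximin_EFX} are needed: a lower bound of $1/k$ and a matching example.

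\textbf{Lower bound.} Fix agents $i,j$ and $g\in X_j$, and suppose $v_i(X_i) < \frac{1}{k}\, v_i(X_j - g)$. Every item $g'$ satisfies $1\le v_i(g')\le k$. The plan is to move a single item $g'\in X_j\setminus\{g\}$ to agent $i$ and show the leximin vector strictly improves. After the move,
\[
v_i(X_i+g') \ge v_i(X_i)+1 > v_i(X_i).
\]
For agent $j$ we need $v_j(X_j-g')$ to stay strictly above $v_i(X_i)$.

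\begin{itemize}
\item Case $v_i(X_i)=0$ (possible only if $X_i=\emptyset$). Then $|X_j|\ge 2$ is forced, since $v_i(X_j-g)>0$. Hence $v_j(X_j-g')\ge 1 > 0$.
\item Case $X_i\neq\emptyset$. Write $v_i(X_i)=x\ge 1$. Then $v_i(X_j-g) > kx$, and since each item is worth at most $k$ to $i$, $|X_j-g| > x$, i.e.\ $|X_j|\ge x+2$. Because $k$ is an integer, all bundle values are integers, and $x$ is an integer. Every item gives $j$ value at least $1$, so
\[
v_j(X_j-g') \ge |X_j|-1 \ge x+1 > x .
\]
\end{itemize}

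In both cases, after the transfer both $i$ and $j$ are strictly above the old value $x$, and all other agents are unchanged. A standard leximin exchange argument then finishes: the number of agents whose utility is at most $x$ strictly decreases, and no agent drops to $x$ or below. So the sorted vector strictly improves lexicographically, contradicting that $\bX$ is leximin.

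\textbf{Tightness.} Take two agents and items $g_1,g_2,g_3$ with values
\[
v_1=(k,k,1),\qquad v_2=(1,1,1).
\]
Consider $X_1=\{g_1,g_2\}$ and $X_2=\{g_3\}$. Swapping bundles gives utility vector $(1,2)$ versus $(2k,1)$, so check all options:
\begin{itemize}
\item giving agent 2 two items yields at best $\min=k$ for agent 1 with agent 2 at $2$, so the vector is $(k,2)$;
\item the current allocation yields $(2k,1)$, which is worse since it has a $1$.
\end{itemize}
So leximin gives agent 2 two items, and this example fails.

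The repair is to make agent 2 value agent 1's large items as $k$ as well, while agent 1 is forced to hold them. Use valuations where every item is worth $1$ to both agents except $g_1$, which both value at $k$. Then leximin sets $X_1=\{g_1\}$ and $X_2=\{g_2,g_3\}$, giving agent 1 value $k$ and agent 2 value $2$.

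Agent 2's EFX ratio, however, is not the quantity to target. The clean tight instance is one where the agent with a single small item sees a bundle containing a large item plus a small item. For this, scale to $n$ agents, where leximin is indifferent between $\min$ values, and use tie-breaking so that agent $i$'s envy is $1$ versus $k$ after removing one item, giving ratio $1/k$.

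I expect \textbf{constructing the tight example} to be the main obstacle, since leximin's tie structure must force a bundle $\{$large, small$\}$ against a value-$1$ bundle.
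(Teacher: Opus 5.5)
Your lower-bound argument is correct and is essentially the paper's. The paper first proves (Observation~\ref{observation:bundlesize}) that $|X_j|\le t+1$ for $t=\min_i v_i(X_i)$, using the same single-item transfer, and then chains $v_i(X_i)\ge t\ge |X_j-g|\ge v_i(X_j-g)/k$. You run the transfer directly against $x=v_i(X_i)$ instead of the global minimum.

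The gap is the tightness claim, which is half of the lemma: you never exhibit a valid instance. Your first attempt is not leximin (as you note), and your second is EFX outright. The configuration you finally propose cannot occur in any leximin allocation, for any $n$ and any tie-breaking. Suppose $X_i=\{a\}$ with $v_i(a)=1$ and $X_j=\{h,g\}$ with $v_i(h)=k$. Swapping $a$ and $h$ gives agent $i$ utility $k\ge 2$ and leaves agent $j$ with two items, worth at least $2$. Nobody else changes, so the number of agents with utility at most $1$ strictly drops. By the very exchange argument you used for the lower bound, this is a strict leximin improvement.

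More generally, the ratio $1/k$ is never attained exactly. Equality forces $|X_j|=x+1$ with every item of $X_j-g$ large to $i$. For $x=1$, the swap above applies. For $x\ge 2$, moving a suitable item of $X_j-g$ to $i$ strictly improves the sorted utility vector. Either $j$ keeps more than $x$, or every item of $X_j$ is small to $j$, in which case the pair of utilities $(x,x+1)$ becomes $(x+k,x)$.

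So tightness can only hold in the limit, and you need a family of instances whose ratios tend to $1/k$. The paper takes two agents and $m=t(k+1)$ items with $v_1(g)=1$ and $v_2(g)=k$ for every $g$. Every leximin allocation gives agent $1$ exactly $tk$ items and agent $2$ exactly $t$ items, so both utilities equal $tk$. Agent $2$'s EFX ratio is then $\frac{tk}{k(tk-1)}=\frac{1}{k-1/t}$, which tends to $\frac{1}{k}$ as $t\to\infty$.
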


We first show that there exists a bi-valued instance for which the leximin allocation is at most $1/k$-EFX. 
Consider an instance with $n = 2$ agents and $m=t(k+1)$ items, where $t$ is a sufficiently large number. 
Agent 1 values every item at $1$, while Agent 2 values every item at $k$.

\begin{table}[h]
    \centering
    \caption{Hard instance for the approximation of EFX by leximin} 
    \label{table: efx_bound_upper_1k}
    \renewcommand{\arraystretch}{1.3}

    \newcommand{\circval}[1]{\tikz[baseline=-0.6ex]{\node[draw,circle,inner sep=1.5pt] {\small $#1$};}}

    \setlength{\tabcolsep}{6pt}

    \begin{tabular}{lcccccc}
    \toprule
              & $g_1$ & $g_2$ & $\cdots$ & $g_{tk}$ & $g_{tk+1}$ & $\cdots\quad g_{tk+t}$ \\
    \midrule
    Agent $1$ & \circval{1} & \circval{1} & $\cdots$ & \circval{1} & $1$ & $\cdots\quad 1$ \\
    Agent $2$ & $k$         & $k$         & $\cdots$ & $k$         & \circval{k} & $\cdots\quad \circval{k}$ \\
    \bottomrule
    \end{tabular}

    \raggedright \small
\end{table}

In the leximin allocation, agent $1$ receives $t k$ items and agent $2$ receives the remaining $t$ items, leading to $v_1(X_1) = v_2(X_2)=tk$.
However, the allocation is only $1/k$-EFX because
$$
\frac{v_2(X_2)}{v_2(X_1-g)}
= \frac{tk}{k(t k-1)}
= \frac{t}{t k-1}
= \frac{1}{k-\frac{1}{t}}
 \xrightarrow[t\to\infty]{}  \frac{1}{k}.
$$

Next, we show that any leximin allocation is at least $1/k$-EFX.
We begin with a structural observation about bundle sizes in leximin allocations.

\begin{observation}\label{observation:bundlesize}
Given any leximin allocation $\bX=(X_1,\dots,X_n)$, let $t=\min_{i\in[n]} v_i(X_i)$. Then for every agent $j\in N$, we have
$$
|X_j|\leq t+1.
$$
\end{observation}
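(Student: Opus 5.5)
We prove the bound by contradiction via a leximin-improving move. Fix a leximin allocation $\bX$ with $t=\min_{i} v_i(X_i)$, and suppose some agent $j$ has $|X_j|\geq t+2$. Note that $t\geq 1$ when $m\geq n$: giving each agent one item makes every utility at least $1$, so a leximin allocation cannot leave anyone at $0$. Let $i$ be an agent attaining the minimum, $v_i(X_i)=t$; clearly $i\neq j$, since $v_j(X_j)\geq |X_j|>t$.

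The move is to take any item $g\in X_j$ and hand it to $i$. Agent $i$'s utility then rises to at least $t+1$, because every item is worth at least $1$ to every agent. Agent $j$ still holds $|X_j|-1\geq t+1$ items, each worth at least $1$ to $j$, so $v_j(X_j-g)\geq t+1$. No other agent is affected.

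We then compare the two sorted utility vectors. Only the entries of $i$ and $j$ change. One entry of value $t$ (agent $i$) is replaced by a value at least $t+1$. Agent $j$'s entry, which was at least $t+2$, drops but remains at least $t+1>t$. So the new vector is obtained by deleting one occurrence of the minimum value $t$ and of some value above $t$, and inserting two values each strictly greater than $t$.

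Two cases arise for the lexicographic comparison.
\begin{itemize}
\item If $i$ was the unique agent at value $t$, the new minimum exceeds $t$, a strict lexicographic improvement.
\item Otherwise, the minimum stays $t$ but its multiplicity drops by one. The first position where the sorted vectors differ is then the position of the removed copy of $t$, where the new vector holds a strictly larger value.
\end{itemize}
Either way the new allocation is lexicographically better, contradicting that $\bX$ is leximin. Hence $|X_j|\leq t+1$ for all $j$.

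The only delicate step is the lexicographic comparison just described, and it reduces to a standard multiset argument. Since everything lives in the normalized scale with values in $\{1,k\}$, factoredness is not needed; we only use that every item is worth at least $1$ to every agent.
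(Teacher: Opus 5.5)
Your proof is correct and follows the same route as the paper: move one item from the oversized bundle $X_j$ to a minimum-utility agent $i$, observe that both agents end with utility at least $t+1$, and derive a contradiction with leximin optimality. Your case analysis of the lexicographic comparison, and your remark that only the fact that every item is worth at least $1$ is used, fill in details the paper leaves implicit.
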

\begin{proof}
Assume for contradiction that there exists an agent $j$ with $|X_j|\geq t+2$.
Let $i$ be an agent with minimum utility, i.e., $v_i(X_i)=t$.
Then by reallocating an item from $X_j$ to $X_i$, the utility of both bundles is at least $t+1$, contradicting the leximin optimality of the allocation.
\end{proof}

Therefore, for any two agents $i,j\in N$, and any item $g\in X_j$,
we have
\begin{equation*}
    v_i(X_i) \geq t \geq |X_j - g| \geq \frac{1}{k}\cdot v_i(X_j-g),
\end{equation*}
where the last inequality holds since every item has value at most $k$ to every agent.

Therefore, we obtain a tight characterization that leximin allocation is $1/k$-EFX.

\subsection{Fairness Guarantee for GMMS} \label{section: GMMS}

Next, we analyze the GMMS guarantee of leximin allocations and show that leximin provides a tight $1/k$-approximation to GMMS. 

\begin{lemma}\label{lemma:leximin_MMS}
For factored bi-valued instances, every leximin allocation is $1/k$-GMMS, and the approximation guarantee is tight.
\end{lemma}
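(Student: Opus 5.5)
We prove the lower bound first and then give a tight instance. As in the GMMS arguments earlier in the paper, it suffices to show the MMS guarantee. If $N'\subseteq N$ is any group of agents, the restriction of a leximin allocation to $\bigcup_{j\in N'}X_j$ is again leximin for the induced instance. Otherwise we could replace it with a lexicographically better allocation of those items among $N'$. That change keeps the utilities of agents outside $N'$ fixed, so it would also lexicographically improve the full utility vector. Hence we fix an agent $i$ and show $v_i(X_i)\ge \MMS_i/k$.

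\textbf{Lower bound.} Let $t=\min_j v_j(X_j)$. By Observation~\ref{observation:bundlesize}, every bundle satisfies $|X_j|\le t+1$, and since $m\ge n$ we have $t\ge 1$. We split into two cases.

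Case 1: $v_i(X_i)\ge t+1$. Then $v_i(X_i)\ge \frac{1}{n}\sum_j |X_j|$, because every bundle has size at most $t+1$. Every item is worth at most $k$ to $i$, so $\MMS_i\le \frac1n v_i(M)\le \frac{k}{n}\sum_j|X_j|\le k\,v_i(X_i)$.

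Case 2: $v_i(X_i)=t$, which is the main case. Here $\MMS_i\le k\lfloor m/n\rfloor$ would suffice, and $m\le n(t+1)$ gives $\lfloor m/n\rfloor\le t+1$. That is one too many, so we need to be more careful. If $m\le nt$, then $\lfloor m/n\rfloor\le t$ and $\MMS_i\le kt$, as required. If $m>nt$, some bundle $X_j$ has exactly $t+1$ items. If any item $g\in X_j$ has $v_i(g)=k$, then moving $g$ to $i$ raises $i$ to $t+k>t$ and leaves $j$ with value at least $t$ (using a counting argument on $v_j(X_j)\ge|X_j|$ when $v_j(X_j)$ is large). We need to check that this is a leximin improvement or otherwise rule it out.

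A cleaner route for Case 2 is to bound $\MMS_i$ by $v_i(X_i)+$ (the largest item value to $i$) adjusted by integrality. Since the instance is factored, all values are integers. In the MMS partition, the worst bundle has value at most $\frac{1}{n}v_i(M)$. We combine this with the EF1-type bound $v_i(X_j)\le k|X_j|\le k(t+1)$ and with the observation that $i$'s own bundle, when all its items are small, yields $v_i(X_i)=|X_i|=t$. I expect the main obstacle to be exactly this edge case where $\MMS_i$ lies between $kt$ and $k(t+1)$. The plan is to show that it forces a leximin improvement: move a large item for $i$ out of a bundle of size $t+1$ whose owner retains value at least $t+1$.

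\textbf{Tightness.} I would reuse the structure of the EFX example with three agents. Agent $1$ values all items at $1$, and the other agents value all items at $k$. The leximin allocation then gives agent $1$ about the same number of items as agent 1 in that example, yet a reallocation makes agent $1$'s MMS roughly $k$ times larger. Concretely, we take $n$ agents and items where agent $1$ values $n$ items at $k$ and the rest at $1$. The other agents value exactly those $n$ items at $k$ and the rest at $1$, so leximin hands the large items to them and gives agent $1$ roughly $t$ small items, while $\MMS_1\approx kt$. We then verify the ratio $\to 1/k$.
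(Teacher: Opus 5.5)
Your reduction from GMMS to MMS, via consistency of leximin under restriction, is correct. So is your Case 1. The proposal nonetheless has two genuine gaps.

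\textbf{The lower bound is not finished.} Case 2 with $v_i(X_i)=t$ and $m>nt$ is left open. The obstacle you identify, $kt<\MMS_i\le k(t+1)$, never actually arises. The missing observation is one line. Every item is worth at least $1$ to $i$, so $|X_i|\le v_i(X_i)=t$. Together with $|X_j|\le t+1$ for $j\neq i$, this gives $m\le n(t+1)-1$. Hence $\lfloor m/n\rfloor\le t$ and $\MMS_i\le k\lfloor m/n\rfloor\le kt$.

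Your planned exchange argument is unnecessary. The first move you describe also need not be a leximin improvement: if $v_j(g)=k$ and $v_j(X_j-g)=t$, moving $g$ to $i$ merely swaps the utilities $t$ and $t+k$.

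The paper avoids the case split altogether. It uses $v_i(X_i)\ge\lfloor m/n\rfloor$ for every agent. This holds because a leximin allocation maximizes the minimum utility, and any allocation giving each agent at least $\lfloor m/n\rfloor$ items already has minimum utility at least $\lfloor m/n\rfloor$. Combined with $\MMS_i\le k\lfloor m/n\rfloor$, this is the whole proof.

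\textbf{The tightness instance is wrong.} In your concrete construction, all agents value the same $n$ items at $k$ and the rest at $1$, so the valuations are identical. The leximin minimum utility then equals the common maximin share, and every agent receives at least her $\MMS$, i.e., ratio $1$.

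Your first description also fails (agent $1$ values everything at $1$, all others value everything at $k$). Leximin equalizes utilities, so each $k$-valuer gets about $m/(n+k-1)$ items. Her ratio is then about $n/(n+k-1)\ge 2/(k+1)>1/k$.

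The asymmetry has to be reversed, with many unit-valuers. The paper takes $n-1$ agents valuing every item at $1$ and a single agent valuing every item at $k$, with $m=n+kn(n-1)$. Leximin gives everyone utility $kn$, so the $k$-valuer holds only $n$ items while $\MMS_n=k(1+k(n-1))$. The ratio $n/(1+k(n-1))$ tends to $1/k$ as $n\to\infty$.
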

Note that the leximin property is consistent under restriction to subsets of agents (otherwise, the global allocation could be strictly improved). 
Therefore, proving that any leximin allocation is $1/k$-MMS is sufficient to establish $1/k$-GMMS.

We start with the upper bound.
Consider an instance with $n$ agents and $m = n + kn(n-1)$ items. 
For $i\in[n-1]$, $v_i(g)=1$ for all $g\in M$, while $v_n(g)=k$ for all $g\in M$.
%
%
In the leximin allocation, we can guarantee that $v_i(X_i) = kn$ for all $i \in N$ (see Table~\ref{table: leximin_mms} for details). 
\begin{table}[h]
    \centering
    \caption{Hard instance for the approximation of MMS by leximin} 
    \label{table: leximin_mms}
    \renewcommand{\arraystretch}{1.3}

    \newcommand{\circval}[1]{\tikz[baseline=-0.6ex]{\node[draw,circle,inner sep=1.5pt] {\small $#1$};}}

    \setlength{\tabcolsep}{6pt}

    \begin{tabular}{lcccccccccc}
    \toprule
              & $g_1$ & $\cdots$ & $g_{kn}$ 
              & $g_{kn+1}$ & $\cdots$ & $g_{2kn}$ 
              & $\cdots$ 
              & $g_{kn(n-1)+1}$ & $\cdots$ & $g_m$ \\
    \midrule
    Agent $1$   & \circval{1} & $\cdots$ & \circval{1}
                & $1$ & $\cdots$ & $1$
                & $\cdots$
                & $1$ & $\cdots$ & $1$ \\
    Agent $2$   & $1$ & $\cdots$ & $1$
                & \circval{1} & $\cdots$ & \circval{1}
                & $\cdots$
                & $1$ & $\cdots$ & $1$ \\
    $\vdots$    & $\vdots$ & $\vdots$ & $\vdots$
                & $\vdots$ & $\vdots$ & $\vdots$
                & $\vdots$
                & $\vdots$ & $\vdots$ & $\vdots$ \\
    Agent $n-1$ & $1$ & $\cdots$ & $1$
                & $1$ & $\cdots$ & $1$
                & $\cdots$
                & $1$ & $\cdots$ & $1$ \\
    Agent $n$   & $k$ & $\cdots$ & $k$
                & $k$ & $\cdots$ & $k$
                & $\cdots$
                & \circval{k} & $\cdots$ & \circval{k} \\
    \bottomrule
    \end{tabular}

    \raggedright \small
\end{table}

Therefore, for agent $n$, her achieved MMS ratio is
\begin{equation*}
    \frac{v_n(X_n)}{\MMS_n}
     = 
    \frac{kn}{k(1+k(n-1))}
     = 
    \frac{n}{1+k(n-1)}
     = 
    \frac{1}{\frac{1}{n} + k\bigl(1-\frac{1}{n}\bigr)}
    \xrightarrow[n\to\infty]{}
    \frac{1}{k}. 
\end{equation*}

Thus, leximin allocation may not guarantee an approximation factor greater than $1/k$. 

Then, we show a matching lower bound.
Let $\bX=(X_1,\dots,X_n)$ be any leximin allocation.
By Observation~\ref{observation:bundlesize}, we have $v_i(X_i)\ge \left\lfloor \frac{m}{n}\right\rfloor$ for every agent $i\in N$.
On the other hand, we have $\MMS_i \leq k\cdot \left\lfloor \frac{m}{n}\right\rfloor$ because in any $n$-partition $P=(P_1,\dots,P_n)$ of $M$ there exists a bundle with no more than $\left\lfloor \frac{m}{n}\right\rfloor$ items.
Therefore, leximin allocation $\bX$ is $1/k$-MMS.

\subsection{Fairness Guarantee for PMMS}

Finally, we examine the PMMS guarantee of leximin allocations.
\begin{lemma} \label{lemma: leximin_pmms}
For factored bi-valued instances, every leximin allocation is $\frac{2}{k+1}$-PMMS, and the approximation guarantee is tight.
\end{lemma}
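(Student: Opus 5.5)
We need two things: a lower bound showing every leximin allocation is $\frac{2}{k+1}$-PMMS in factored bi-valued instances, and an instance where this ratio is attained.

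\textbf{Lower bound.} Fix agents $i,j$, let $T=X_i\cup X_j$ and $\mu=\MMS_i(T,2)$. We may assume $v_i(X_i)<v_i(X_j)$, since otherwise $v_i(X_i)\ge v_i(T)/2\ge\mu$. The first step is a two-agent analogue of Observation~\ref{observation:bundlesize}. Since $\bX$ restricted to $\{i,j\}$ must be leximin for the sub-instance on $T$, we should have $|X_j|\le v_i(X_i)+1$ whenever $v_i(X_i)\le v_j(X_j)$. Otherwise, moving an item from $X_j$ to $X_i$ raises $i$'s value and keeps $j$'s value at least $|X_j|-1\ge v_i(X_i)+1$, which is a lexicographic improvement. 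If instead $v_j(X_j)<v_i(X_i)$, the same reasoning with $t=v_j(X_j)$ should still bound $|X_j|$ in terms of $v_i(X_i)$.

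Write $x=v_i(X_i)$, with $X_i$ containing $a$ items. Then $x\ge a$ and also $x\ge k\cdot|X_i\cap L|$. The key estimate is
\[
\mu\le \frac{v_i(T)}{2}\le \frac{x+k\,|X_j|}{2}\le \frac{x+k(x+1)}{2}.
\]
This gives $x/\mu\ge 2x/\bigl((k+1)x+k\bigr)$, which is slightly below the target $2/(k+1)$. To close the gap I would use integrality of $\mu$ and the fact that $\mu$ is the value of a bundle in a partition. One sharper route is to bound $\mu$ by the smaller bundle of an explicit partition, but the cleaner route is to refine the size bound to $|X_j|\le x$ in most cases. The inequality $|X_j|\ge x+1$ means $j$'s value is at least $x+1>x$, so moving one item from $X_j$ to $X_i$ leaves $j$ with at least $x$ and gives $i$ more than $x$. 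This strictly improves the sorted vector unless ties interfere, and handling those ties is where care is needed.

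With $|X_j|\le x$, we get $\mu\le (x+kx)/2$, hence $x\ge\frac{2}{k+1}\mu$. The boundary case $|X_j|=x+1$ needs separate treatment. There I would show $\mu\le\bigl\lfloor (x+k(x+1))/2\bigr\rfloor$ is still at most $(k+1)x/2$, or else use the fact that the maximin partition cannot split $T$ so evenly because large items are indivisible. \textbf{This tie and boundary analysis is the step I expect to be the main obstacle.}

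\textbf{Tightness.} I would take two agents where agent~$1$ values everything at $1$ and agent~$2$ values everything at $k$, adapting the EFX example, and look at how agent~$2$ fares. Agent~$2$ then compares $tk$ to $\mu_2=k\lfloor |T|/2\rfloor$, which gives the $1/k$-type ratios rather than $2/(k+1)$. So instead I would mix item types. Agent~$1$ should hold $t$ small items while agent~$2$ holds about $t$ items that are large for agent~$1$, so that $v_1(T)\approx t(k+1)$ and $\mu\approx t(k+1)/2$. The instance should be built so that the leximin allocation equalizes utilities at $t$, which makes the ratio tend to $2/(k+1)$.
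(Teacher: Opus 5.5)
Both halves of your proposal have genuine gaps: the lower bound leaves its decisive case open, and the tightness part discards the correct instance because of a miscalculation.

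\textbf{Lower bound.} Your size bound $|X_j|\le x+1$ is fine. The step you flag as the obstacle, however, is the entire content of the proof, and neither of your suggested fixes works. The floor route fails because $\bigl((k+1)x+k\bigr)/2$ exceeds $(k+1)x/2$ by $k/2\ge 1$, which a floor cannot absorb. The refinement $|X_j|\le x$ is false in general. If $X_j$ consists of $x+1$ items small to both $i$ and $j$, moving any item of $X_j$ to $X_i$ just reproduces the pair $(x,x+1)$. This configuration really occurs, e.g.\ two agents who both value three items at $1$.

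The missing idea is to bound $v_i(X_j)$ rather than $|X_j|$, choosing which item to move according to its type. Suppose $|X_j|=x+1$, with $x\ge 1$ since $m\ge n$.
\begin{itemize}
\item If $X_j$ contains an item large to $j$, then $v_j(X_j)\ge x+k\ge x+2$. Moving $j$'s least valued item to $X_i$ leaves $j$ with at least $x+1$, or with $kx\ge x+1$ if all of $X_j$ is large to $j$. Agent $i$ strictly gains, so both new values exceed $x$, which is a leximin improvement. Hence $v_j(X_j)=x+1$.
\item Now moving any $g\in X_j$ with $v_i(g)=k$ turns the pair $(x,x+1)$ into $(x,x+k)$, again an improvement. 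So $X_j$ contains only items small to $i$, and $v_i(X_j)=x+1\le kx$.
\end{itemize}
When $|X_j|\le x$ you already have $v_i(X_j)\le kx$. Together these give $v_i(X_j)\le k\,v_i(X_i)$, hence $\mu\le v_i(T)/2\le \frac{k+1}{2}v_i(X_i)$. This is exactly the paper's argument.

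\textbf{Tightness.} You discard the instance where agent $1$ values every item at $1$ and agent $2$ every item at $k$, but your computation is wrong. With $m=t(k+1)$ items, agent $2$ gets $t$ items worth $tk$, and $\MMS_2(M,2)=k\lfloor t(k+1)/2\rfloor$. So the ratio is $tk/\bigl(kt(k+1)/2\bigr)=2/(k+1)$ whenever $t(k+1)$ is even, which is exactly the target. The $1/k$ ratio is what this instance gives for EFX, whose benchmark is $v_2(X_1-g)$ rather than half of $v_2(T)$. The paper uses precisely this instance with $t=2$, i.e.\ $m=2k+2$.

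Your replacement design cannot be leximin. Suppose agent $1$ holds $t$ items small to her, agent $2$ holds about $t$ items large to agent $1$, and both utilities equal $t$. Then $X_2$ must contain an item $g$ small to agent $2$, since all of $X_2$ being large to $2$ would force $|X_2|=t/k$. Swapping $g$ with any item of $X_1$ gives agent $1$ value $t-1+k>t$ and agent $2$ value at least $t$. That is a strict leximin improvement, so no leximin allocation has this shape.
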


The upper bound corresponds to the GMMS hard instance with $n=2$, where the PMMS approximation ratio for agent $2$ is $\frac{2k}{k(2k+2)/2}=\frac{2}{k+1}$ (see Table~\ref{table: leximin_pmms} for details).

\begin{table}[h]
    \centering
    \caption{Hard instance for the approximation of PMMS by leximin} 
    \label{table: leximin_pmms}
    \renewcommand{\arraystretch}{1.3}

    \newcommand{\circval}[1]{\tikz[baseline=-0.6ex]{\node[draw,circle,inner sep=1.5pt] {\small $#1$};}}

    \setlength{\tabcolsep}{6pt}

    \begin{tabular}{lcccccc}
    \toprule
              & $g_1$ & $g_2$ & $\cdots$ & $g_{2k}$ & $g_{2k+1}$ & $g_{2k+2}$ \\
    \midrule
    Agent $1$ & \circval{1} & \circval{1} & $\cdots$ & \circval{1} & $1$ & $1$ \\
    Agent $2$ & $k$         & $k$         & $\cdots$ & $k$         & \circval{k} & \circval{k} \\
    \bottomrule
    \end{tabular}

    \raggedright \small
\end{table}

We now turn to the lower bound.
Consider an arbitrary leximin allocation $\mathbf{X}$. 
It suffices to show that for any two agents $i, j \in N$, $v_i(X_i) \geq 2\mu/(k+1)$, where $\mu = \MMS_i(T, 2)$ and $T = X_i \cup X_j$.
Since $\mu \leq v_i(T) / 2$, it is sufficient to prove that $v_i(X_j) \leq k\cdot v_i(X_i)$. 

Suppose that $v_i(X_j) > k \cdot v_i(X_i)$, which implies $|X_j| > v_i(X_i)$. 
By Observation~\ref{observation:bundlesize}, we also have $|X_j| \leq v_i(X_i) + 1$, forcing $|X_j| = v_i(X_i) + 1$.
We claim that $v_j(X_j) = |X_j|$. If not, $v_j(X_j) \geq |X_j| + 1 = v_i(X_i) + 2$.
Moving agent $j$'s least preferred item $g$ from $X_j$ to $X_i$ increases the utility of agent $i$.
For agent $j$, either $v_j(g) = 1$ (leaving $v_j(X_j - g) \geq v_i(X_i) + 1$) or $v_j(g) = k$ (leaving $v_j(X_j - g) = k v_i(X_i)$). 
Both cases yield a leximin improvement, leading to a contradiction.

Therefore, it suffices to consider the case where $v_j(X_j) = |X_j|$.
Since $v_i(X_j) > k \cdot v_i(X_i)$, there must exist $g \in X_j$ with $v_i(g) = k$ (otherwise $v_i(X_j) = v_i(X_i) + 1 > k \cdot v_i(X_i)$, contradicting $k \geq 2$).
Moving $g$ from $X_j$ to $X_i$ results in $v_i(X_i + g) = v_i(X_i) + k$ and $v_j(X_j -g) = v_i(X_i)$.
This improves the allocation in the leximin order, which leads to a contradiction.
Thus, $v_i(X_j) \leq k \cdot v_i(X_i)$, as claimed.

\section{Maximum \texorpdfstring{$p$}{p}-Mean Welfare for Bi-valued Instances}
\label{section:pmean}
In this section, we further show that the same negative phenomenon occurs for maximum $p$-mean welfare allocations whenever $p\leq 1$ and $p\neq 0$, under the standard convention that $p=0$ corresponds to MNW.

\begin{definition}[$p$-Mean Welfare]
For any $p \le 1$, the $p$-mean welfare of an allocation $\bX$ is defined as
$$
W_p(\bX)=
\begin{cases}
\left(\frac{1}{n}\sum_{i\in N} v_i(X_i)^p\right)^{\frac{1}{p}}, & p\neq 0,\\[4pt]
\left(\prod_{i\in N} v_i(X_i)\right)^{1/n}, & p=0.
\end{cases}
$$
For $p<0$, if some agent receives value $0$, we set $W_p(\bX)=0$.
An allocation that maximizes $W_p$ is called a \emph{maximum $p$-mean welfare allocation}.
In particular, maximum $0$-mean welfare allocations coincide with MNW allocations.
\end{definition}

\begin{theorem}\label{theorem:pmean_lower}
For every fixed $p\leq 1$ with $p\neq 0$, maximum $p$-mean welfare allocations do not guarantee any positive constant approximation for EFX, PMMS, and GMMS, even for factored bi-valued instances.
\end{theorem}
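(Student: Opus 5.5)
We construct explicit two-agent factored bi-valued instances in which the maximum $p$-mean welfare allocation gives one agent a vanishing fraction of the items. With $n=2$ we have $\MMS_i(M,2)=\MMS_i(X_1\cup X_2,2)$, and $\alpha$-GMMS implies $\alpha$-PMMS. So it suffices, for each fixed $p$, to exhibit instances in which some agent $i$ has $v_i(X_i)/\MMS_i(M,2)\to 0$ and $v_i(X_i)/v_i(X_{3-i}-g)\to 0$. We take $k$ to be a large integer, so the instances are factored. Every agent values all items equally, which makes the optimization one-dimensional in the number $x$ of items given to a particular agent. We split into the two regimes $0<p\le 1$ and $p<0$.

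\textbf{Case $0<p\le 1$.} Let agent $1$ value every item at $k$ and agent $2$ value every item at $1$, with $m$ items in total. If agent $2$ receives $y$ items, the objective is monotone in $f(y)=(k(m-y))^p+y^p$.

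For $p=1$, $f$ is maximized at $y=0$, since $k>1$. Agent $2$ then gets value $0$ while $\MMS_2=\lfloor m/2\rfloor>0$ and $v_2(X_1-g)=m-1>0$, so both ratios are $0$.

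For $0<p<1$, $f$ is strictly concave. Its real maximizer satisfies $y^{p-1}=k^p(m-y)^{p-1}$, that is, $y/(m-y)=k^{-p/(1-p)}$. The integer optimum is within one item of this value. Hence
\[
\frac{v_2(X_2)}{\MMS_2}\approx \frac{2}{1+k^{p/(1-p)}}
\quad\text{and}\quad
\frac{v_2(X_2)}{v_2(X_1-g)}\approx k^{-p/(1-p)}.
\]
Both tend to $0$ when we first let $m\to\infty$ and then $k\to\infty$, for fixed $p$.

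\textbf{Case $p<0$.} We reuse the leximin instance: agent $1$ values every item at $1$ and agent $2$ values every item at $k$. Write $q=-p>0$. Maximizing $W_p$ is equivalent to minimizing $x^{-q}+(k(m-x))^{-q}$, where $x$ is the number of items given to agent $1$. This function is strictly convex, and its stationarity condition gives
\[
\frac{x}{m-x}=k^{q/(1+q)}.
\]
So agent $2$ receives roughly an $\bigl(1+k^{q/(1+q)}\bigr)^{-1}$ fraction of the items. Its MMS ratio is then about $2/(1+k^{q/(1+q)})$, and its EFX ratio is about $k^{-q/(1+q)}$; both tend to $0$ as $k\to\infty$.

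\textbf{Main obstacle.} The only delicate step is the integrality of $x$. We plan to handle it by noting that convexity (respectively concavity) places the integer optimum at the floor or ceiling of the real optimizer. This perturbs the item counts by at most one, which is negligible once $m\gg k$. We would fix $k$, let $m\to\infty$ to obtain limiting ratios depending only on $k$ and $p$, and then let $k\to\infty$. Combining the two cases, with GMMS implying PMMS and MMS, yields the theorem.
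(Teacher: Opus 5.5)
Your proposal is correct and takes essentially the same route as the paper. Both use the same two-agent instance of identical items, valued $1$ by one agent and $k$ by the other. Both derive the same stationarity condition $x/(m-x)=k^{-p/(1-p)}$, round to within one item of the real optimizer using convexity or concavity, and treat $p=1$ separately. The differences are cosmetic: you relabel the agents when $0<p\le 1$, and you take limits ($m\to\infty$, then $k\to\infty$) where the paper fixes explicit parameters $q>4/\epsilon$ and $m\ge 4(1+q)$.
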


\begin{proof}
Since GMMS is stronger than PMMS, any upper bound on the PMMS approximation ratio also applies to the GMMS approximation ratio.
Thus, it suffices to establish upper bounds for EFX and PMMS.

We first consider the case when $p < 1$.
Fix an arbitrary $\epsilon\in(0,1)$.
We show that there is a factored bi-valued instance for which every maximum $p$-mean welfare allocation has EFX and PMMS approximation ratios below $\epsilon$.
Choose an integer $k\geq 2$ such that 
$$
q=k^{\frac{\lvert p\rvert}{1-p}}>\frac{4}{\epsilon} > 4,
$$
and an even integer $m$ such that $m\geq 4(1+q)$.
Consider an instance with two agents and $m$ identical items.
For every item $g\in M$, let
$v_1(g)=1$ and $v_2(g)=k$.
This is a factored bi-valued instance.
As $m$ is even and all items are identical for each agent, we have $
\MMS_1(M,2)= m / 2$ and 
$\MMS_2(M,2)=(km)/2$.

Every allocation is uniquely characterized by the number of items assigned to agent $1$. 
Specifically, if agent $1$ receives $x$ items, the values of agents $1$ and $2$ are $x$ and $k(m-x)$, respectively. 
To analyze the welfare, consider the function
$$
f(x) = x^p + \bigl(k(m-x)\bigr)^p.
$$
For $p<0$, maximizing $W_p$ is equivalent to minimizing $f(x)$ over $x\in\{1,\dots,m-1\}$ since the exponent $1/p$ is negative and allocations with zero value have $W_p=0$ by convention.
For $0<p<1$, maximizing $W_p$ is equivalent to maximizing $f(x)$ over $x\in\{1,\dots,m-1\}$.

The derivative
$$
f'(x) = p x^{p-1} - p k^p(m-x)^{p-1}
$$
is equal to zero when $x^{p-1} = k^p(m-x)^{p-1}$, which can be equivalently rewritten as
$$
\left(\frac{x}{m-x}\right)^{1-p} = k^{-p}.
$$
Since $x/(m-x)$ is strictly increasing on $(0,m)$, there is a unique solution.
We denote it by $x^*$, and it satisfies
$$
\frac{x^*}{m-x^*}=k^{-\frac{p}{1-p}}.
$$
The sign of $f'(x)$ shows that $f$ is decreasing before $x^*$ and increasing after $x^*$ when $p<0$, while $f$ is increasing before $x^*$ and decreasing after $x^*$ when $0<p<1$.
Furthermore, if $\bX$ is a maximum $p$-mean welfare allocation and $x'=|X_1|$, then we have $|x'-x^*|\leq 1$.
Below, we divide the analysis into two cases: $p < 0$ and $0 < p < 1$.

If $p<0$, then $x^*/(m-x^*)=q$, so we have 
$$
x^*=\frac{mq}{1+q}
\quad\text{and}\quad
m-x^*=\frac{m}{1+q}.
$$

In the following, we analyze the approximation ratio for agent $2$.
We begin by presenting two inequalities that will be used to bound the EFX ratio:
$$
m-x'\leq m-x^*+1\leq 2(m-x^*)
\quad\text{and}\quad
x'-1\geq x^*-2\geq \frac{x^*}{2}.
$$
The first inequality in each chain follows from $|x'-x^*|\leq 1$.
For the second inequalities, observe that $m - x^* = m / (1 + q) \geq 4$ and $x^* = mq / (1 + q) \geq 4$.

Therefore, the EFX approximation ratio for agent $2$ can be bounded by
$$
\frac{v_2(X_2)}{v_2(X_1-g)}
= \frac{m-x'}{x'-1}
\leq \frac{m-x^*+1}{x^*-2}
\leq \frac{2(m-x^*)}{x^*/2}
=\frac{4}{q}
<\epsilon.
$$
Similarly, the PMMS ratio satisfies
$$
\frac{v_2(X_2)}{\MMS_2(M,2)}
= \frac{k(m-x')}{km/2}
= \frac{2(m-x')}{m}
\leq \frac{2(m-x^*+1)}{m}
\leq \frac{4(m-x^*)}{m}
=\frac{4}{1+q}
<\epsilon.
$$
Hence, $\bX$ is neither $\epsilon$-EFX nor $\epsilon$-PMMS.

\smallskip

If $0<p<1$, then $x^*/(m-x^*)=1/q$, which implies
$$x^*=\frac{m}{1+q}
\quad\text{and}\quad
m-x^*=\frac{mq}{1+q}.
$$

In this case, we consider the approximation ratio for agent $1$.
To bound the EFX ratio, we utilize the following inequalities:
$$x'\leq x^*+1\leq 2x^*
\quad\text{and}
\quad
m-x'-1\geq m-x^*-2\geq \frac{m-x^*}{2}.
$$
The first inequality in each chain follows from $|x'-x^*|\leq 1$.
For the second inequalities, note that $x^*=m/(1+q)\geq 4$ and $m-x^*=mq/(1+q)\geq 4$.

Therefore, the EFX approximation ratio for agent $1$ is bounded by
$$
\frac{v_1(X_1)}{v_1(X_2-g)}
= \frac{x'}{m-x'-1}
\leq \frac{x^*+1}{m-x^*-2}
\leq \frac{2x^*}{(m-x^*)/2}
=\frac{4}{q}
<\epsilon.
$$
For the PMMS, we have 
$$
\frac{v_1(X_1)}{\MMS_1(M,2)}
= \frac{x'}{m/2}
\leq \frac{x^*+1}{m/2}
\leq \frac{2x^*}{m/2}
=\frac{4}{1+q}
<\epsilon.
$$
Again, $\bX$ is neither $\epsilon$-EFX nor $\epsilon$-PMMS.

As $\epsilon \in (0,1)$ can be arbitrarily small, no positive constant approximation guarantee can be achieved for any of these fairness notions.

\smallskip

Finally, we consider the case when $p=1$.
Consider an instance with two agents.
Choose any integer $k\geq 2$ and any even integer $m\geq 2$.
For every item $g\in M$, we let $v_1(g)=1$ and $v_2(g)=k$.
If agent $1$ receives $x$ items, the welfare objective is
$$
x+k(m-x)=km-(k-1)x,
$$
which is uniquely maximized at $x=0$.
Consequently, every allocation that maximizes the $1$-mean welfare must assign all items to agent $2$. 
Since agent $1$ receives an empty bundle, the resulting approximation ratios for EFX, PMMS, and GMMS are $0$.

\end{proof}

\section{Discussion of Personalized Instances with \texorpdfstring{$b_i = 0$}{bi = 0}} \label{section: discussion}
In this section, we present the discussion for agents with $b_i = 0$ under personalized instances. 
For any agent $i$ satisfying $b_i = 0$, we first normalize her valuation domain to $\{0, 1\}$. 
Since the MNW allocation inherently satisfies EF1, which implies GMMS under binary valuations, agent $i$ successfully achieves her GMMS guarantee in this allocation. 

Consequently, it suffices to focus on the EFX guarantee for these personalized instances. 
Prior to proceeding, we remark that items with zero value for all agents can be safely discarded from the initial allocation, as such items can be reallocated at the end to the source agent (which is not envied) in the envy graph without violating any fairness properties.

In our analysis, we have already shown that for any pair of agents $i$ and $j$ where $b_i > 0$ and $b_j > 0$, the allocation is $2/3$-EFX. 
Fix any pair of agents $(i, j)$. 
We discuss the EFX guarantee for two cases: (1) $b_i = 0$, and (2) $b_i > 0$ and $b_j = 0$.

When $b_i = 0$, we assume for the sake of contradiction that agent $i$ is not EFX towards $j$.
Then the EF1 property implies that $v_i(X_j) = v_i(X_i) + 1$, and there must exist some item $g \in X_j$ such that $v_i(g) = 0$.
Note that $v_j(g) \geq 1$ must hold. 
Otherwise, reallocating $g$ to some agent with non-zero value for $g$ yields a Nash welfare improvement.
Let $g' \in X_j$ be the item that minimizes $v_j(g')$ subject to $v_i(g') = 1$. 
Since $v_i(X_j) = v_i(X_i) + 1$, there are exactly $v_i(X_i) + 1$ items in $X_j$ satisfying the constraint.
This implies that $v_j(X_j) > (v_i(X_i) + 1) \cdot v_j(g')$. 
However, by Lemma~\ref{lemma:MNW_pb_dif}, we have $v_j(X_j) \le (v_i(X_i) + 1)\cdot v_j(g')$, which leads to a contradiction.

Next, we consider the case when $b_i > 0$ but $b_j = 0$.
Assume for the sake of contradiction that $i$ is not EFX towards $j$, there must exist both large and small items for $i$ in bundle $X_j$.
Since $b_j = 0$, agent $j$ values all these items at exactly $1$. Again, by Lemma~\ref{lemma:MNW_pb_dif}, we have $v_i(X_i) \ge k_i\cdot (v_j(X_j) - 1)$. Furthermore, for any item $g \in X_j$, it holds that $v_i(X_j - g) \le k_i \cdot (v_j(X_j) - 1) \le v_i(X_i)$. 
Therefore, $i$ is indeed EFX towards $j$, which is a contradiction. 
\end{document}